\setlist{listparindent=\parindent,parsep=0pt,itemsep=1em}
\setlist[itemize]{label=$-$,noitemsep}
\setlist[enumerate]{itemsep=1mm}
\setlist[description]{leftmargin=\parindent}
\definecolor{newcolor}{hsb}{0.6,1,0.75}
\definecolor{betterYellow}{RGB}{255,255,0}
\setlist[enumerate]{noitemsep}
\setlist[itemize]{label=$-$,noitemsep}
\DeclareRobustCommand\bfseries{%
  \not@math@alphabet\bfseries\mathbf
  \fontseries\bfdefault\selectfont\boldmath}
\theoremstyle{definition}
\declaretheorem[name=Theorem, numberwithin=section]{theorem}
\declaretheorem[name=Lemma,sibling=theorem]{lemma}
\declaretheorem[name=Corollary, sibling=theorem]{corollary}
\declaretheorem[name=Claim,sibling=theorem]{claim}
\declaretheorem[name=Definition, sibling=theorem, style=definition]{definition}
\declaretheorem[name=Fact,sibling=theorem]{fact}
\declaretheorem[name=Observation,sibling=theorem]{observation}
\newcommand{\sens}{\mathsf{Sens}}
\newcommand{\clo}{\mathsf{Clo}}
\newcommand{\swapsens}{\mathsf{SwapSens}}
\newcommand{\swapdist}{\mathsf{SwapDist}}
\newcommand{\swapclo}{\mathsf{SwapClo}}
\newcommand{\TV}{\mathsf{TV}}
\newcommand{\EMD}{\mathsf{EMD}}
\newcommand{\Ham}{\mathsf{Ham}}
\newcommand{\opt}{\mathsf{opt}}
\newcommand{\cost}{\mathsf{cost}}
\newcommand{\val}{\mathsf{val}}
\newcommand{\ynote}[1]{\textcolor{red}{(Yuichi: #1)}}
\newcommand{\E}{\mathop{\mathbf{E}}}
\title{Sensitivity Lower Bounds for Approximation Algorithms}
\author{Noah Fleming\thanks{Research supported by NSERC.} \\
Lund University \& Columbia University \\
\texttt{noah.fleming@cs.lth.se}
\and
Yuichi Yoshida\thanks{Supported by JSPS KAKENHI Grant Number 24K02903.} \\
National Institute of Informatics \\
\texttt{yyoshida@nii.ac.jp}
}
\definecolor{RED}{rgb}{1,0,0}\definecolor{BLUE}{rgb}{0,0,1} 
\providecommand{\DIFaddbegin}{} 
\providecommand{\DIFaddend}{} 
\providecommand{\DIFdelbegin}{} 
\providecommand{\DIFdelend}{} 
\providecommand{\DIFaddbeginFL}{} 
\providecommand{\DIFaddendFL}{} 
\providecommand{\DIFdelbeginFL}{} 
\providecommand{\DIFdelendFL}{} 
\newcommand{\DIFscaledelfig}{0.5}
\newsavebox{\DIFdelgraphicsbox} 
\newlength{\DIFdelgraphicswidth} 
\newlength{\DIFdelgraphicsheight} 
\LetLtxMacro{\DIFOincludegraphics}{\includegraphics} 
\newcommand{\DIFaddincludegraphics}[2][]{{\color{blue}\fbox{\DIFOincludegraphics[#1]{#2}}}} 
\newcommand{\DIFdelincludegraphics}[2][]{
\sbox{\DIFdelgraphicsbox}{\DIFOincludegraphics[#1]{#2}}
\settoboxwidth{\DIFdelgraphicswidth}{\DIFdelgraphicsbox} 
\settoboxtotalheight{\DIFdelgraphicsheight}{\DIFdelgraphicsbox} 
\scalebox{\DIFscaledelfig}{
\parbox[b]{\DIFdelgraphicswidth}{\usebox{\DIFdelgraphicsbox}\\[-\baselineskip] \rule{\DIFdelgraphicswidth}{0em}}\llap{\resizebox{\DIFdelgraphicswidth}{\DIFdelgraphicsheight}{
\setlength{\unitlength}{\DIFdelgraphicswidth}
\begin{picture}(1,1)
\thicklines\linethickness{2pt} 
{\color[rgb]{1,0,0}\put(0,0){\framebox(1,1){}}}
{\color[rgb]{1,0,0}\put(0,0){\line( 1,1){1}}}
{\color[rgb]{1,0,0}\put(0,1){\line(1,-1){1}}}
\end{picture}
}\hspace*{3pt}}} 
} 
\LetLtxMacro{\DIFOaddbegin}{\DIFaddbegin} 
\LetLtxMacro{\DIFOaddend}{\DIFaddend} 
\LetLtxMacro{\DIFOdelbegin}{\DIFdelbegin} 
\LetLtxMacro{\DIFOdelend}{\DIFdelend} 
\DeclareRobustCommand{\DIFaddbegin}{\DIFOaddbegin \let\includegraphics\DIFaddincludegraphics} 
\DeclareRobustCommand{\DIFaddend}{\DIFOaddend \let\includegraphics\DIFOincludegraphics} 
\DeclareRobustCommand{\DIFdelbegin}{\DIFOdelbegin \let\includegraphics\DIFdelincludegraphics} 
\DeclareRobustCommand{\DIFdelend}{\DIFOaddend \let\includegraphics\DIFOincludegraphics} 
\LetLtxMacro{\DIFOaddbeginFL}{\DIFaddbeginFL} 
\LetLtxMacro{\DIFOaddendFL}{\DIFaddendFL} 
\LetLtxMacro{\DIFOdelbeginFL}{\DIFdelbeginFL} 
\LetLtxMacro{\DIFOdelendFL}{\DIFdelendFL} 
\DeclareRobustCommand{\DIFaddbeginFL}{\DIFOaddbeginFL \let\includegraphics\DIFaddincludegraphics} 
\DeclareRobustCommand{\DIFaddendFL}{\DIFOaddendFL \let\includegraphics\DIFOincludegraphics} 
\DeclareRobustCommand{\DIFdelbeginFL}{\DIFOdelbeginFL \let\includegraphics\DIFdelincludegraphics} 
\DeclareRobustCommand{\DIFdelendFL}{\DIFOaddendFL \let\includegraphics\DIFOincludegraphics} 
\let\sout@orig\sout 
\renewcommand{\sout}[1]{\ifmmode\text{\sout@orig{\ensuremath{#1}}}\else\sout@orig{#1}\fi} 
\lstdefinelanguage{DIFcode}{ 
  moredelim=[il][\color{red}\sout]{\%DIF\ <\ }, 
  moredelim=[il][\color{blue}\uwave]{\%DIF\ >\ } 
} 
\lstdefinestyle{DIFverbatimstyle}{ 
	language=DIFcode, 
	basicstyle=\ttfamily, 
	columns=fullflexible, 
	keepspaces=true 
} 
\begin{document}
\maketitle
\begin{abstract}
    Sensitivity measures how much the output of an algorithm changes, in terms of Hamming distance, when part of the input is modified.
    While approximation algorithms with low sensitivity have been developed for many problems, no sensitivity lower bounds were previously known for approximation algorithms.
    In this work, we establish the first polynomial lower bound on the sensitivity of (randomized) approximation algorithms for constraint satisfaction problems (CSPs) by adapting the probabilistically checkable proof (PCP) framework to preserve sensitivity lower bounds. From this, we derive polynomial sensitivity lower bounds for approximation algorithms for a variety of problems, including maximum clique, minimum vertex cover, and maximum cut.

    Leveraging the connection between sensitivity and locality in the non-signaling model, which subsumes the $\mathsf{LOCAL}$, quantum-$\mathsf{LOCAL}$, and bounded dependence models, we establish locality lower bounds for several graph problems in the non-signaling model.
\end{abstract}
\thispagestyle{empty}

\newpage
\tableofcontents
\thispagestyle{empty}

\setcounter{page}{0}
\newpage

\section{Introduction}

The notion of algorithmic sensitivity, introduced by Varma and Yoshida~\cite{varma2023average}, measures the stability of an algorithm's output in Hamming distance (or the earth mover's distance if the algorithm is randomized) when an element in the input is added or deleted. 
In practical situations, low sensitivity is desirable because inputs can easily change due to noise or over time. If an algorithm has high sensitivity, it may lose reproducibility, erode user trust, and lead to inconsistent decisions. 
Since this concept was introduced, low-sensitivity algorithms have been developed for many graph problems.
Some representative results include a $2$-approximation algorithm for the minimum vertex cover problem with sensitivity $O(1)$~\cite{censor2016optimal,varma2023average}, a $(1-\varepsilon)$-approximation algorithm for the maximum matching problem with sensitivity $\Delta^{O(1/\varepsilon)}$~\cite{yoshida2026personal}, where $\Delta$ is the maximum degree, an additive $O(n^{2/3})$-approximation algorithm for the minimum $s$-$t$ cut problem with sensitivity $O(n^{2/3})$, where the output is a vertex set~\cite{varma2023average}, and a $(1+\varepsilon)$-approximation algorithm for the shortest path problem with sensitivity $O(\varepsilon^{-1} \log^3 n)$ (with respect to edge contractions instead of additions or deletions)~\cite{kumabe2023lipschitz}.
Moreover, it has been shown that algorithms with low sensitivity can be used to design online algorithms with small recourse~\cite{yoshida2022average}, as well as online learning algorithms with low regret~\cite{dong2024batch}.
Therefore, it is important to understand whether low sensitivity algorithms can be achieved for each problem.

On the lower bound side, the only known general result that we are aware of is that any randomized algorithm for finding a proper $2$-coloring of a bipartite graph with $n$ vertices requires sensitivity $\Omega(n)$~\cite{varma2023average}, which follows easily from the fact that a connected bipartite graph has only two proper $2$-colorings such that the Hamming distance between them is $\Omega(n)$.
Indeed, there are no known lower bounds on the sensitivity of approximation algorithms, leaving a significant gap in our knowledge. This is especially true, since all of the known upper bounds are derived from approximation algorithms.

\subsection{Our Contributions}
\subsubsection{Sensitivity Lower Bounds}
In this work, we show polynomial sensitivity lower bounds for (even randomized) approximation algorithms for various combinatorial problems by adapting the probabilistically checkable proofs (PCP) framework~\cite{arora1998proof,arora1998probabilistic}.
We emphasize that this is the first time that lower bounds for \emph{approximation} algorithms have been established.
We are able to derive lower bounds for a variety of combinatorial problems through reductions, and we present some representative results next. 

We start with the maximum clique problem.
Consider a (potentially inefficient) $n^{-\varepsilon}$-approximation algorithm that outputs a clique of size at most $n^{1-\varepsilon}$.
Its sensitivity is trivially bounded by $O(n^{1-\varepsilon})$ and this is the only known upper bound.
We show that the polynomial dependency on $n$ is necessary:
\begin{theorem}\label{thm:max-clique-intro}
     There are universal constants $\varepsilon, \delta > 0$ such that any (inefficient) algorithm for the maximum clique problem that outputs an $n^{-\varepsilon}$-approximate clique with probability $1-O(1/n)$ has sensitivity $\Omega(n^\delta)$.
\end{theorem}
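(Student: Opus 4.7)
The plan is to derive \Cref{thm:max-clique-intro} from a polynomial sensitivity lower bound for a gap-CSP (which I expect is the paper's main technical result, established by adapting the PCP framework) together with a sensitivity-preserving FGLSS-style reduction to maximum clique, followed by a locality-preserving gap amplification.

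First, I would invoke the paper's PCP adaptation to obtain an $\Omega(n^\delta)$ sensitivity lower bound for $(1,s)$-approximation algorithms for some gap 2-CSP of bounded degree (for instance, a gap version of Label Cover or 3SAT). I take this as given, since it is the technical heart of the paper. Second, I would apply the classical FGLSS reduction: given a bounded-degree 2-CSP instance $\Psi$ on $n$ constraints, the FGLSS graph $G_\Psi$ has one vertex per (constraint, locally-satisfying assignment) pair and an edge between two such pairs iff they agree on any shared variables. Cliques in $G_\Psi$ correspond to consistent partial assignments, and the maximum clique size equals the maximum number of simultaneously satisfiable constraints. The key structural observation is that this reduction is \emph{local}: inserting or deleting a single constraint of $\Psi$ only modifies $O(1)$ vertices of $G_\Psi$ along with their incident edges, and translating a returned clique back into a partial assignment is likewise local. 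Composing a sensitivity-$s$ clique algorithm with this translation therefore yields a sensitivity-$O(s)$ approximation algorithm for the gap-CSP, contradicting step one, and establishing hardness for some constant-factor approximation of clique.

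Third, to amplify from a constant-factor gap to an $n^{-\varepsilon}$ gap, I would operate on the FGLSS graph itself rather than on the CSP. For a parameter $k$, form $G_\Psi^{(k)}$ whose vertices are $k$-cliques of $G_\Psi$ and whose edges join two $k$-cliques whose union is a clique; maximum clique sizes multiply, so $k$ rounds turn a constant-factor gap into an $N^{-\varepsilon}$ gap on a graph of size $N=n^{O(k)}$. Because the reduction is applied to the \emph{graph} rather than the CSP, a single-constraint change in $\Psi$ still only perturbs an $O(1)$-sized neighborhood in $G_\Psi$ and hence a set of vertices in $G_\Psi^{(k)}$ of size $O(|V(G_\Psi)|^{k-1})=N^{1-1/k}$, which is $N^{1-\Omega(1)}$; combined with the $\Omega(n^\delta)=\Omega(N^{\delta/k})$ sensitivity inherited from the CSP this gives $\Omega(N^{\delta'})$ sensitivity for a suitable $\delta'>0$.

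The main obstacle is step three. Standard gap-amplification via parallel repetition on $\Psi$ would destroy locality, since a single-constraint edit propagates to many constraints of the repeated instance and inflates sensitivity by $\Omega(n)$, wiping out the polynomial we are trying to prove. Performing the amplification on the FGLSS side avoids this, but the bookkeeping is delicate: one has to balance three polynomial exponents simultaneously (the approximation ratio $N^{-\varepsilon}$, the instance-size blow-up $N=n^{O(k)}$, and the local perturbation size $O(|V(G_\Psi)|^{k-1})$) so that the final conclusion is both a genuine $n^{-\varepsilon}$ hardness and a polynomial $\Omega(N^{\delta})$ sensitivity. The failure probability $1-O(1/n)$ in the theorem statement suggests that the amplification is also used (or at least compatible) with a union bound over the $\operatorname{poly}(N)$ local perturbations considered by the sensitivity argument.
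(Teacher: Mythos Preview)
Your steps~1 and~2 are on target and match the paper: it proves an $\Omega(n^\delta)$ sensitivity lower bound for $\mathsf{LabelCover}_{1,1-\varepsilon}$ on bounded-degree instances (\Cref{thm:label-cover}), and the FGLSS reduction is local in exactly the way you describe.

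Step~3, however, does not work, and the failure is visible in your own arithmetic. Write $n' = |V(G_\Psi)|$ so that $N = (n')^{k}$. A single constraint change in $\Psi$ touches $O(1)$ vertices of $G_\Psi$; the number of $k$-cliques containing a fixed vertex is $\Theta((n')^{k-1}) = \Theta(N^{1-1/k})$, as you note, and the number of \emph{edges} of $G_\Psi^{(k)}$ affected is larger still. The CSP lower bound is $\Omega(n^\delta) = \Omega(N^{\delta/k})$, so the bound you inherit for the clique algorithm is at most
\[
    \Omega\bigl(N^{\delta/k} \big/ N^{\,1-1/k}\bigr)
    \;=\;
    \Omega\bigl(N^{(\delta+1-k)/k}\bigr),
\]
and since $\delta<1$ this exponent is already negative at $k=2$. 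No choice of $k\ge 2$ yields $\Omega(N^{\delta'})$ for any $\delta'>0$; the clique product is precisely the kind of tensorization that, as you correctly worried about for parallel repetition on the CSP side, blows up the instance faster than the sensitivity lower bound can keep pace.

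The paper sidesteps this by amplifying on the CSP side with a \emph{randomized} serial repetition (\Cref{lem:serial-repetition}): instead of forming all $m^t$ conjunctions of $t$ constraints, it samples $M=\tilde O(n)$ random $t$-tuples and takes the AND of each. A Chernoff bound shows that with probability $1-O(1/n)$ every original constraint participates in only $\tilde O(t)$ of the new constraints, so a single swap in the original instance perturbs only $\tilde O(t)=\tilde O(\log(1/\varepsilon'))$ constraints of the amplified one, rather than $m^{t-1}$. Another Chernoff bound (over all $|\Sigma|^n$ assignments) shows the soundness drops to $\varepsilon'$. This keeps the amplified instance at size $n^{1+O(\varepsilon)}$ while pushing the gap to $1$ versus $n^{-\varepsilon}$; FGLSS is then applied to the amplified CSP, and the exponents balance. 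The $1-O(1/n)$ in the theorem statement is exactly the success probability of this randomized amplification, not a union bound over perturbations.
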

We note that the lower bound also applies to the maximum independent set problem.
Additionally, our proof is information-theoretic, and the lower bounds presented in this work apply even to inefficient algorithms.

Next, we consider the minimum vertex cover problem.
As we mentioned, there exists a $2$-approximation algorithm for the minimum vertex cover problem with sensitivity $O(1)$~\cite{censor2016optimal,varma2023average}. 
We prove that the sensitivity must increase significantly as the approximation ratio approaches one.
\begin{theorem}\label{thm:min-vertex-cover-intro}
    There are universal constants $\varepsilon, \delta > 0$ such that any (randomized and inefficient)  $(1+\varepsilon)$-approximation algorithm for the minimum vertex cover problem has sensitivity $\Omega(n^\delta)$.
\end{theorem}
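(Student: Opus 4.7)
The plan is to derive this theorem as a consequence of a sensitivity lower bound for a suitable gap constraint satisfaction problem, which the PCP-based part of the paper establishes. Specifically, I will rely on a lower bound of the form: any algorithm that, given a bounded-degree $3$SAT formula promised to be either satisfiable or at most $(1-\varepsilon_0)$-satisfiable, outputs an assignment that correctly certifies which case holds must have sensitivity $\Omega(n^{\delta'})$ for some constants $\varepsilon_0, \delta' > 0$. Using bounded-degree instances is important: the PCP construction can be arranged so that every variable appears in at most $d = O(1)$ clauses.

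The main ingredient is then the classical gap-preserving reduction from $3$SAT to minimum vertex cover. Given a formula $\varphi$ with $m$ clauses, build a graph $G_\varphi$ with three ``literal vertices'' per clause joined in a triangle, together with ``consistency edges'' between every pair of vertices labeled by complementary literals of the same variable. A standard argument shows that if $\varphi$ is satisfiable then the minimum vertex cover of $G_\varphi$ has size $2m$, while if at most a $(1-\varepsilon_0)$-fraction of the clauses are simultaneously satisfiable then every vertex cover has size at least $(2+\varepsilon_0)m$. Choosing $\varepsilon < \varepsilon_0/2$, any $(1+\varepsilon)$-approximation algorithm for minimum vertex cover distinguishes the two cases.

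The critical feature of the reduction is locality. Changing one clause of $\varphi$ removes three literal vertices (and their incident triangle and consistency edges) and adds three new ones; because every variable has degree $d = O(1)$ in $\varphi$, this amounts to only $O(1)$ atomic edge/vertex modifications in $G_\varphi$. To close the loop I convert a vertex cover $C$ of $G_\varphi$ back into an assignment $\alpha(C)$ by the natural rule: set $x = 1$ iff some vertex labeled with the positive literal $x$ lies outside $C$ (the consistency edges make this well-defined on near-optimal covers, with any residual ambiguities broken by a fixed default). If two covers $C, C'$ differ in $s$ vertices, then $\alpha(C)$ and $\alpha(C')$ differ in at most $s$ variables, because flipping a variable's value requires at least one of its literal vertices to switch sides. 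This extension by linearity lifts to the earth mover's distance for randomized algorithms, so a sensitivity-$s$ algorithm for vertex cover yields a sensitivity-$O(s)$ algorithm for the gap $3$SAT problem, forcing $s = \Omega(n^\delta)$.

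The main obstacle I anticipate is the extraction step: I must argue that the ``read-off'' map from vertex covers to assignments preserves sensitivity up to a constant, both in the deterministic (Hamming) and randomized (EMD) settings, and that it really does produce an assignment of the quality required by the gap-$3$SAT lower bound whenever $C$ is a $(1+\varepsilon)$-approximate cover. The bounded-degree PCP ensures that the number of consistency edges touched per atomic input change is $O(1)$, and the triangle structure ensures that a nearly optimal cover necessarily leaves roughly one ``true'' literal per clause, which is precisely what is needed to translate approximation quality on the vertex cover side into a clause-satisfaction guarantee on the CSP side.
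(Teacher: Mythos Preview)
Your approach is correct but takes a different route from the paper. The paper derives the vertex cover bound as a corollary of its max clique result (\Cref{thm:max-clique-1-eps}) via the complement duality: $S$ is a clique in $G$ iff $V\setminus S$ is a vertex cover in $\bar G$, so a $(1+\varepsilon c)$-approximation for minimum vertex cover on $\bar G$ yields a $(1-\varepsilon)$-approximation for maximum clique on $G$, and sensitivity transfers because an edge deletion in $G$ is an edge insertion in $\bar G$; the max clique bound itself comes from the FGLSS reduction out of $\mathsf{LabelCover}$. You instead route through $\mathsf{E3SAT}$ (which the paper also establishes, \Cref{thm:3SAT}) and apply the classical triangle-plus-consistency-edge gadget directly. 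Your Lipschitz bound $\Ham(\alpha(C),\alpha(C'))\le |C\triangle C'|$ is correct, as is the observation that for bounded-degree formulas a single clause modification touches only $O(1)$ edges of $G_\varphi$; a pleasant by-product of your route is that the hard vertex cover instances automatically have bounded degree. Two minor cleanups: state the base lower bound as the \emph{search} problem $\mathsf{E3SAT}_{1,1-\varepsilon_0}$ (given a satisfiable formula, output a $(1-\varepsilon_0)$-satisfying assignment in expectation) rather than in decision language---your recovery $\alpha(C)$ achieves exactly this since the number of fully-covered triangles is $|C|-2m$ and every other triangle contributes a true literal; and make explicit that a clause deletion leaves three isolated placeholder vertices, so that $G_\varphi$ and $G_{\varphi- C}$ live on a common vertex set and the change really is $O(1)$ edge deletions.
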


Finally, we discuss the maximum cut problem, where the output is a vertex set.
Trivially, the random assignment algorithm is a $1/2$-approximation algorithm with sensitivity zero, and we can obtain an (inefficient) additive $\widetilde O(\sqrt{\varepsilon^{-1}nm})$-approximation algorithm with sensitivity $O(\varepsilon n)$ by converting a differentially private algorithm~\cite{eliavs2020differentially}.
As mentioned, the only known lower bound is that exact algorithms, which can solve $2$-coloring,  must have sensitivity $\Omega(n)$~\cite{varma2023average}. 
We show that the sensitivity must remain polynomial as the approximation ratio approaches one:
\begin{theorem}\label{thm:max-cut-intro}
    There exist universal constants $\varepsilon, \delta > 0$ such that any (randomized and inefficient) $(1-\varepsilon)$-approximation algorithm for the maximum cut problem has sensitivity $\Omega(n^\delta)$. 
\end{theorem}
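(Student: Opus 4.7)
The plan is to derive \Cref{thm:max-cut-intro} from a sensitivity lower bound for a gapped constraint satisfaction problem via a standard local, gap-preserving gadget reduction. Presumably an earlier section establishes a lower bound of the form: there exist constants $\varepsilon_0,\delta_0 > 0$ such that any randomized algorithm solving $\mathrm{Gap\text{-}MAX\text{-}3SAT}_{1,\,1-\varepsilon_0}$ (or analogously $\mathrm{Gap\text{-}MAX\text{-}E3LIN2}$) on $N$-constraint instances has sensitivity $\Omega(N^{\delta_0})$ with respect to adding/removing a single clause. The strategy is to turn any low-sensitivity $(1-\varepsilon)$-approximation algorithm for Max Cut into a low-sensitivity approximation algorithm for this CSP, contradicting the CSP lower bound.

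The first step is to invoke a gadget reduction of the kind analyzed in Trevisan--Sorkin--Sudan--Williamson, mapping a 3SAT (or 3LIN) instance $\Phi$ on $n$ variables and $N$ clauses to a graph $G_\Phi$ on $\Theta(N)$ vertices, such that (i) each clause is replaced by a constant-size gadget whose vertex set is the union of $O(1)$ fresh auxiliary vertices plus three variable-vertices shared across gadgets, and (ii) if $\Phi$ has optimum $1 - \eta$ then the maximum cut of $G_\Phi$ has value $\alpha N - \beta \eta N$ for explicit constants $\alpha,\beta > 0$ determined by the gadget. Choosing $\varepsilon > 0$ small enough relative to $\varepsilon_0, \alpha, \beta$, any $(1-\varepsilon)$-approximate cut of $G_\Phi$ must satisfy at least a $(1-\varepsilon_0)$-fraction of the clauses of $\Phi$ (after decoding a 3SAT assignment from the variable-vertices of the cut, e.g., by majority voting within each gadget).

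Next I would verify that the reduction is \emph{local} in the sensitivity sense. Adding or removing a single clause from $\Phi$ changes $G_\Phi$ by at most $O(1)$ vertices (the gadget's auxiliary vertices) and $O(1)$ incident edges. If the Max Cut algorithm $\mathcal{A}$ has sensitivity $s$ per single vertex perturbation (in EMD over vertex assignments), then by the triangle inequality the induced change in $\mathcal{A}(G_\Phi)$ under one clause perturbation is at most $O(s)$. Projecting a cut to the variable-vertices and decoding to an assignment of $\Phi$ is $1$-Lipschitz in Hamming distance, so the resulting CSP algorithm has sensitivity $O(s)$ with respect to single clause perturbations. Combined with the CSP lower bound, this forces $s = \Omega(N^{\delta_0}) = \Omega(n^{\delta})$ for some $\delta>0$, yielding the theorem.

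The main obstacle I expect is bookkeeping the interaction between the two notions of sensitivity: the input perturbation for Max Cut is a vertex (or edge) addition/removal, while the CSP lower bound is in terms of constraint perturbations. One must choose the reduction so that a single clause modification corresponds to an $O(1)$-element modification of the graph, and so that the $N$-to-$n$ size blowup of the reduction is only polynomial, in order to convert an $N^{\delta_0}$ lower bound into an $n^{\delta}$ lower bound. A secondary technical point is ensuring the decoding map is well-defined and Lipschitz even when the Max Cut algorithm produces a distribution over cuts, which requires working in EMD and using a coupling that preserves variable-vertex labels across perturbations; this follows from the standard fact that taking a marginal over a fixed coordinate subset does not increase EMD.
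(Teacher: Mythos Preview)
Your proposal is correct and matches the paper's approach: the paper proves this as \Cref{cor:max-cut} by chaining its $\mathsf{LabelCover}$ lower bound through $\mathsf{E3SAT}$ (\Cref{thm:3SAT}) and $\mathsf{3LIN}$ (\Cref{cor:3LIN}) and then invoking exactly the Trevisan--Sorkin--Sudan--Williamson gadget reduction you cite, relying on the sensitivity-preserving reduction framework of \Cref{sec:reduction} to formalize the locality and Lipschitz-decoding checks you sketch. One minor point: the paper's sensitivity is defined with respect to edge (constraint) deletions rather than vertex perturbations, and the framework works with \emph{swap} distance on a fixed underlying hypergraph (so auxiliary vertices are not added or removed, only constraint relations are swapped), but this is just bookkeeping and your outline handles it once phrased in that language.
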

We can obtain similar lower bounds for other constraint satisfaction problems (CSPs) including $\mathsf{E3SAT}$ and $\mathsf{3LIN}$.

\subsubsection{Locality Lower Bounds for the Non-Signaling Model}

Although the notion of sensitivity is interesting in its own right, it has a close connection to the \emph{non-signaling model}~\cite{akbari2025online,coiteux2024no}, also known as the $\varphi$-$\mathsf{LOCAL}$ model and the causal model~\cite{arfaoui2014can,gavoille2009can}.
In this model, given a graph $G=(V,E)$, we can produce an arbitrary output distribution  as long as it does not violate the non-signaling principle: for any set of vertices $S \subseteq V$, modifying the structure of the input graph at more than a distance $t$ from $S$ does not affect the output distribution of $S$.
The parameter $t$ is called the \emph{locality} of the model.
It is stronger than any ``physical'' synchronous distributed computing model, and in particular, the non-signaling model is more powerful than the $\mathsf{LOCAL}$ model for distributed algorithms on graphs~\cite{linial1992locality}, the quantum-$\mathsf{LOCAL}$ model~\cite{gavoille2009can}, and the bounded dependence model~\cite{holroyd2017finitary}, all with the same locality~\cite{akbari2025online}.

Suppose that we can generate a distribution in the non-signaling model for a graph problem with locality $t$.
Then, an algorithm that samples from this distribution and outputs the result has sensitivity $O(\Delta^t)$, where $\Delta$ is the maximum degree of the graph, since deleting an edge affects the marginal distributions of at most $\Delta^t$ vertices.
Building on this connection, we can extend various lower bounds that were only known for the $\mathsf{LOCAL}$ model to the non-signaling model, including an $\Omega(1/\varepsilon)$ lower bound for an $n^{-\varepsilon}$-approximation algorithm for the maximum independent set problem~\cite{bodlaender2016brief}, an $\Omega(\log n)$ lower bound for a $(1+\varepsilon)$-approximation algorithm for the minimum vertex cover problem for some constant $\varepsilon>0$~\cite{goos2014no,faour2022distributed}, and an $\Omega(\log n)$ lower bound for a $(1-\varepsilon)$-approximation algorithm for the maximum cut problem for some constant $\varepsilon>0$~\cite{chang2023complexity}.
We note that our lower bounds match the known upper bounds in the $\mathsf{LOCAL}$ model~\cite{chang2023complexity}, and hence they imply that the non-signaling model offers no additional advantage for these problems.

\subsection{Proof Overview}\label{subsec:proof-overview}

An instance of a \emph{constraint satisfaction problem (CSP)} is a tuple $(V, E, \Sigma, \mathcal{R} = \{R_e\}_{e \in E})$, where $(V, E)$ is a hypergraph, $\Sigma$ is a finite domain (often referred to as the \emph{alphabet}), and $R_e$ is a $|e|$-ary relation over $\Sigma$, i.e., $R_e \subseteq \Sigma^{|e|}$. 
We say that an assignment $\sigma: V \to \Sigma$ \emph{satisfies} a constraint on $e = (v_1, \dots, v_k)$ if $(\sigma(v_1), \dots, \sigma(v_k)) \in R_e$.
For $1 \geq c \geq s \geq 0$, we define $\mathsf{MaxCSP}_{c,s}$ as the problem where, given a $c$-satisfiable CSP instance, i.e., there exists an assignment that satisfies at least a $c$-fraction of the constraints, the goal is to compute an $s$-satisfying assignment for the instance, i.e., an assignment that satisfies at least an $s$-fraction of the constraints.
Here, $c$ and $s$ are referred to as the \emph{completeness} and \emph{soundness} parameters.

We first present a sensitivity lower bound for the case where the gap between completeness and soundness is extremely small. 
Specifically, we show that any algorithm for $\mathsf{MaxCSP}_{1, 1-\Theta(1/n)}$ requires sensitivity of $\Omega(n)$ for $\mathsf{E2LIN}$ on cycles, where each constraint is of the form $x + y = 0 \pmod 2$ or $x + y = 1 \pmod 2$.

Next, we gradually increase the gap in multiple rounds without reducing the sensitivity lower bound.
To do so, we adapt the probabilistically checkable proof (PCP) framework~\cite{arora1998probabilistic,arora1998proof}, which was originally developed as a characterization of $\mathsf{NP}$ and have been useful for proving inapproximability results. 
We note that we are proving hardness for $\mathsf{NP}$-hard problems by reducing from a polynomial-time tractable problem, which might seem counterintuitive. 
This is possible because we are analyzing the sensitivity of the problems from an information-theoretic perspective, rather than focusing on their time complexity.

We build on Dinur's PCP construction~\cite{dinur2007pcp}, which consists of four steps: degree reduction, expanderization, gap amplification, and alphabet reduction. 
Below, we explain each step and describe how we modify them to ensure the sensitivity lower bound remains (almost) intact.
Since the argument for gap amplification and reduction is already covered in Dinur's original proof, we will primarily focus on how to analyze the increase and decrease in sensitivity.

We first note that the reduction in each step generally works as follows:
Let $\mathcal I$ be a family of instances. 
Consider converting a CSP instance $I = (V,E,\Sigma,\mathcal R)\in \mathcal I$ to another instance $I' = (V',E',\Sigma',\mathcal R')$. 
Suppose we have an algorithm $A'$ for the latter. 
We then run $A'$ on $I'$ to obtain an assignment $\sigma'$, and subsequently convert $\sigma' : V' \to \Sigma'$ back into an assignment $\sigma : V \to \Sigma$ for $I$.
Let $\mathcal I'$ be the family of instances obtained from instances in $\mathcal I$ by converting them. 
The sensitivity of algorithms for $\mathcal I'$ is bounded from below by the sensitivity for $\mathcal I'$, multiplied by two parameters, $C_I$ and $C_\sigma$, which are determined by the conversion of instances and assignments, respectively. 
Specifically, $C_I$ represents the number of changes to the new instance $I'$ when a constraint in the original instance $I$ is changed, and $C_\sigma$ represents the number of changes to the assignment $\sigma$ for $I$ when a value in the assignment $\sigma'$ for $I'$ is altered.
To maintain a large sensitivity lower bound, $C_I$ and $C_\sigma$ need to be small.

We note that known PCP constructions may have large $C_I$ and $C_\sigma$ (indeed, Dinur's original construction has large $C_I$ and $C_\sigma$). 
Therefore, we need to carefully modify both the construction and the procedure that converts $\sigma'$ to $\sigma$.
In particular, we cannot use known PCP constructions as black boxes when proving sensitivity lower bounds. As well, strategies such as taking the union of an instance $I$ which has high sensitivity with an instance $J$ which has a large gap (such as the instance obtained by running the PCP on $I$) on disjoint variables do not work. 
We address this in \Cref{sec:appendix}. 

The proof of Dinur's PCP construction proceeds by iteratively applying the following four steps; we now sketch how they can be modified in order to preserve sensitivity.
\paragraph{Degree Reduction.}
The goal of this step is to reduce the degree of each variable to a small constant and make the underlying graph regular. 
The reduction is straightforward: for each variable $v \in V$ with degree $d$ in the original instance $I$, we split $v$ into $d$ copies, $v_1, \dots, v_d$, and connect them with an expander, adding equality constraints on the newly introduced edges. 
To obtain a label for $v$ in the original instance $I$, we select one of $v_1, \dots, v_d$ uniformly at random and use the label of the chosen $v_i$ in $I'$.

A key feature of this transformation is that it actually \emph{increases} the sensitivity lower bound by $d$, which is crucial to offset the sensitivity decrease in other steps.

\paragraph{Expanderization.}
In this step, we simply superimpose an expander with a trivial constraint (satisfied by any assignment) onto the instance $I$ to transform the underlying graph into an expander, which is important for the gap amplification step. 
We use the assignment for $I'$ directly as the assignment for $I$.

\paragraph{Gap amplification.}
The goal of this step is to increase the gap between completeness and soundness by a constant factor $t$. 
Suppose the underlying graph is a $d$-regular expander. 
To achieve this, we sample a vertex $u \in V$, perform a random walk of length $O(t)$, and reach vertex $v$.
We then add a constraint between $u$ and $v$ over the alphabet $\Sigma^{1+d+\cdots+d^t}$, which encodes the assignments of the $t$-hop neighborhoods, checking whether the labels of the variables along the path satisfy all the constraints.
(This only defines a distribution over constraints; we convert this into an unweighted instance by multiplying the probability mass of each constraints by $d^{O(t)}$ to compute the number of copies needed for the constraint.)

To recover an assignment $\sigma$ for $I$ from an assignment $\sigma'$ for $I'$, for each variable $u \in V$, we gather the labels of variables in the $t$-hop neighborhood via a random walk. 
In Dinur's original proof~\cite{dinur2007pcp}, the majority of these labels would be used as the label for $u$ in $I$. 
However, this recovery procedure does not allow us to preserve the sensitivity lower bound, as changing a single label could change the majority for many variables. 
To smooth this transition, we would like to choose a label from a distribution weighted according to the frequency that it occurs in this random walk. 
However, the possibility of returning a label with low probability mass ruins the soundness argument. 
Instead, we consider a distribution only over labels with sufficiently large probability mass. 
Combined with a careful conditioning argument, this allows us to preserve the sensitivity lower bound while still increasing the gap.

\paragraph{Alphabet reduction.} 
This step decreases the size of the alphabet to a small constant. 
To do so, while preserving the gap, we take the Hadamard encoding of our alphabet. 
However, doing this converts our graph into a hypergraph, as every constraint now depends on many Boolean variables. 
To remedy this, we instead want to check whether a given assignment is \emph{close} to a satisfying assignment to that constraint by examining only a few bits of that assignment.
To do so, we apply a PCP (called an assignment tester~\cite{dinur2007pcp} or a PCP of proximity~\cite{ben2004robust}) to this constraint: 
we introduce sets of variables which purportedly encode the Hadamard table $L$ of a satisfying assignment $\alpha$ for our constraint, as well as variables for the Hadamard table $Q$ of $\alpha \otimes \alpha$. 
We then tests (via constraints) whether this is indeed the case, using applications of the BLR linearity test~\cite{BlumLR93}. 

To recover an assignment $\sigma$ for $I$ from an assignment $\sigma'$ for $I'$, for each variable $u \in V$, we consider the bits $\sigma'[u]$ which purportedly contain a label for $u$ --- a Hadamard codeword.
We would like to map $\sigma'[u]$ to the closest Hadamard codeword (and hence label to $u$), however this would not allow us to preserve our bound on sensitivity: consider any assignments to $\sigma'[u]$ which lies on the unique decoding radius of a codeword.
Then, changing a single bit in this assignment could change which codeword $\sigma'[u]$ is mapped to, and hence which label $\sigma$ assigns to $u$.
Instead, we smooth this transition by employing a randomized thresholding argument: We choose a random threshold $\tau \in [0,1/4]$ (as one may recover uniquely a Hadamard codeword which has been $1/4$ corrupted) and assign $\sigma'[u]$ to its closest codeword if its relative distance to that codeword is at most $\tau$, and to an arbitrary but fixed codeword otherwise. 
This smooth transition, along with a careful conditioning argument, allows us to maintain our bound on sensitivity.

Finally, we note that although we can directly apply the assignment tester to the original $\mathsf{E2LIN}$ instances to obtain a soundness of $1-\varepsilon$, the instance size grows at least exponentially. As a result, the sensitivity lower bound obtained this way is $\Omega(\log n)$, which is much weaker than our lower bound of $\Omega(n^\delta)$.

\subsection{Discussions}\label{subsec:discussions}

This work presents the first sensitivity lower bounds for approximation algorithms and establishes lower bounds for various graph problems. 
However, it also opens up several interesting directions for future research.

\paragraph{Parallel Repetition}

The \emph{label cover problem} is a special type of a CSP where (i) every constraint is binary and has the so-called projection property (See~\Cref{sec:pcp} for details), and (ii) the underlying graph formed by the constraints is bipartite. 
For $1 \geq c \geq s \geq 0$, let $\mathsf{LabelCover}_{c,s}$ denote the problem that, given a $c$-satisfiable label cover instance, the goal is to compute an $s$-satisfying assignment.
\emph{Parallel Repetition.}~\cite{dinur2014analytical,raz1995parallel} is a powerful framework that reduces $\mathsf{LabelCover}_{1, 1-\varepsilon}$ to $\mathsf{LabelCover}_{1,\varepsilon}$ without increasing the arity of constraints, though it does increase the alphabet size.
The reduction itself is straightforward: for an integer parameter $t \geq 1$, for every choice of $t$ constraints, which are over $\Sigma$, we add a new constraint over $\Sigma^t$ that represents the conjunction of these $t$ constraints. 
As a result, the number of constraints increases from $m$ to $m^t$.
This reduction plays a crucial role in deriving tight inapproximability results for problems like the set cover problem~\cite{feige1998threshold,moshkovitz2012projection} and systems of linear equations~\cite{haastad2001some}.

Unfortunately, it is not clear whether the parallel repetition framework can be used to derive sensitivity lower bounds for $\mathsf{LabelCover}_{1,\varepsilon}$ from those for $\mathsf{LabelCover}_{1,1-\varepsilon}$. 
The challenge arises because a modification to a label cover instance may lead to approximately $m^t - (m-1)^t \approx tm^{t-1}$ modifications in the instance produced by parallel repetition. 
Therefore, a naive lower bound for the latter problem would be $\frac{1}{tm^{t-1}}$ times that of the former problem, which becomes vacuous.
An intriguing open question is whether recent techniques~\cite{bafna2024quasi} can be employed to establish meaningful sensitivity lower bounds for $\mathsf{LabelCover}_{1,\varepsilon}$.

\paragraph{Stronger Sensitivity Lower Bounds.}
The reason that the lower bounds we obtain take the form $\Omega(n^\delta)$ is that, through the reductions, the instance size grows polynomially. 
It is known that there is a PCP of length $O(n \log^2 n)$~\cite{ben2004robust,dinur2007pcp}, and an interesting question is whether we can use the PCP construction to obtain higher lower bounds.
However, note that the lower bound cannot be of the form $n^{1-o(1)}$, because that would imply a lower bound of $n^{1-o(1)}$ for an $n^{-\varepsilon}$-approximation algorithm for the maximum clique problem, which contradicts the fact that there is a trivial $n^{-\varepsilon}$-approximation algorithm with sensitivity $O(n^{1-\varepsilon})$.

\paragraph{Serial Repetition.} 
By applying \emph{serial repetition} to the label cover instance obtained from our PCP theorem --- by taking the ANDs of subsets of constraints ---- we are able to obtain polynomial sensitivity lower bounds against randomized algorithms which output a non-negligible approximation with high probability.
We then use this lower bound to prove \Cref{thm:max-clique-intro}. 
When we are instead interested in polynomial-time tractability, rather than low sensitivity, there is not much of a difference between such with-high-probability guarantees and guarantees in expectation. 
Indeed, we can simply compute multiple assignments and take the best one that we find. However, this is not a sensitivity-preserving process. 
Indeed, it is not clear whether with-high-probability guarantees and in-expectation guarantees are equivalent for low-sensitivity algorithms, making it an interesting open problem to extend \Cref{thm:max-clique-intro} to approximation algorithms with in-expectation guarantees.

\paragraph{Hardness for Polynomial-time Tractable Problems.}
Although our lower bound argument begins with $\mathsf{E2LIN}$, a polynomial-time tractable CSP, the CSP we obtain at the end of the PCP construction is NP-hard (as can be verified using Schaefer’s dichotomy theorem~\cite{schaefer1978complexity}). 
Therefore, with the current approach, we can only establish lower bounds for NP-hard problems. 
An interesting open question is whether similar lower bounds can be established for polynomial-time tractable problems, such as finding a $(1-\varepsilon)$-satisfying assignment for satisfiable $\mathsf{E3LIN}$ instances, where each constraint is of the form $x + y + z = 0 \pmod 2$ or $x + y + z = 1 \pmod 2$.

Another potential approach is to use reductions from SAT to polynomial-time tractable problems studied in the area of fine-grained complexity~\cite{williams2018some}, though one must carefully analyze the constants $C_I$ and $C_\sigma$.
Since such reductions typically blow up the instance size exponentially, the resulting sensitivity lower bound will be logarithmic at best.

\paragraph{Average Sensitivity.}
Average sensitivity~\cite{murai2019sensitivity,varma2023average} is a variation of sensitivity where we measure an algorithm's stability against average-case modifications to the instance. 
Specifically, the average sensitivity of an algorithm $A$ on a graph $G=(V,E)$ is defined as the average Hamming distance between $A(G)$ and $A(G - e)$, where the average is over edges $e \in E$ deleted from $G$.
We note that the sensitivity of an algorithm provides an upper bound on average sensitivity, making the latter easier to bound. 
Algorithms with low average sensitivity have been proposed for various problems, including graph problems~\cite{varma2023average}, 
dynamic programming problems~\cite{kumabe2022average-knapsack,kumabe2022average-dp}, clustering problems~\cite{peng2020average,yoshida2022average}, and learning problems~\cite{hara23average}.

An interesting question is whether the techniques developed in this work can be applied to bound average sensitivity. 
Our approach heavily relies on the fact that, in the context of CSPs, the sensitivity with respect to deleting a constraint can be lower bounded by half of the sensitivity with respect to swapping a constraint. 
However, this relationship does not generally hold for average sensitivity, and a more delicate argument is needed to account for changes in the underlying graph of a CSP instance.

\paragraph{Hardness for Distributed Models.}
In the $\mathsf{LOCAL}$ model for distributed algorithms on graphs, locality (or round complexity) lower bounds are typically proved by constructing two graphs $G_1$ and $G_2$ that are locally indistinguishable even though the optimal values on $G_1$ and $G_2$ are significantly different~\cite{coupette2021breezing,kuhn2004cannot}.
Indeed, some of these arguments can be used to obtain lower bounds for the non-signaling model~\cite{gavoille2009can,balliu2025new,coiteux2024no}.\footnote{They showed this fact for fractional problems, but it holds for constant-factor approximation to the minimum vertex cover problem because the problem has an integrality gap of two, which is constant.}
In particular, Balliu~et~al.~\cite{balliu2025new} showed that any non-signaling distribution for $(\log^{O(1)} \min\{\Delta,n\})$-approximation of the minimum vertex cover problem on graphs with maximum degree $\Delta$ requires locality $\Omega(\min\{\log \Delta /\log \log \Delta,\sqrt{\log n /\log \log n}\})$.
By contrast, we show that $(1+\varepsilon)$-approximation requires locality $\Omega(\log n)$ even when $\Delta$ is constant.

We note that several other extensions of the $\mathsf{LOCAL}$ model have been studied, including the sequential-$\mathsf{LOCAL}$ ($\mathsf{SLOCAL}$)~\cite{ghaffari2017complexity}, dynamic-$\mathsf{LOCAL}$~\cite{etessami2023locality}, and online-$\mathsf{LOCAL}$ models~\cite{etessami2023locality}. 
In these settings, vertices (or sometimes edges, depending on the model) are presented in an adversarial order, say, $v_1,\ldots,v_n$, and the algorithm must decide or update labels accordingly.

In $\mathsf{SLOCAL}$ with locality $t$, the label of $v_i$ is determined from the topology of its $t$-hop neighborhood together with the labels already assigned to the vertices $v_1,\ldots,v_{i-1}$ inside that neighborhood.
In dynamic-$\mathsf{LOCAL}$ with locality $t$, we begin with an empty graph and incrementally add vertices $v_i$ along with edges connecting them to $v_1,\ldots,v_{i-1}$. 
After each insertion, the algorithm may update the label of $v_i$ and of any vertices within its current $t$-hop neighborhood, using the local topology and the labels available at that moment.
In online-$\mathsf{LOCAL}$ with locality $t$, we likewise start from an empty graph and add vertices $v_i$ along with edges connecting them to $v_1,\ldots,v_{i-1}$. 
Here, however, the algorithm must irrevocably assign a label to $v_i$ based solely on the topology and labels in its current $t$-hop neighborhood at that moment.

Whether our techniques yield lower bounds for these models remains an open question. 
To see the difficulty, suppose we wish to construct a low-sensitivity algorithm $A$ from an algorithm $A'$ in any of the above-mentioned models.
Fix a vertex ordering $v_1,\ldots,v_n$, simulate $A'$ on that order, and output the final labels produced by $A'$ as the output of $A$.
This $A$ is indeed low-sensitivity with respect to deleting the last vertex $v_n$; however, we gain no guarantees about its behavior if some vertex in the middle of the ordering is deleted. Indeed, the sequential nature of these models means that the effect of deleting this middle vertex could change the labels of many of the vertices of the graph, propagated through the update steps of latter vertices in the order. 

\subsection{Related Work}\label{subsec:related-work}

Differential privacy~\cite{dwork2006differential} is a fundamental concept in private data analysis, requiring that the output distributions of an algorithm be similar for neighboring instances. 
Though we do not define it formally here, it is straightforward to show that an $\varepsilon$-differentially private algorithm implies an algorithm with sensitivity $O(\varepsilon n)$, where $n$ is a bound on the output size. 
By combining this with known differentially private algorithms, we can derive approximation algorithms with sensitivity $O(\varepsilon n)$ for the global minimum cut problem~\cite{gupta2010differentially}, the densest subgraph problem~\cite{dhulipala2022differential}, and the correlation clustering problem~\cite{cohen2022near}, although this sensitivity bound is relatively weak.

Aside from the linear sensitivity lower bound for $2$-coloring, the only other known lower bound we are aware of is that any \emph{deterministic} constant-factor approximation algorithm for the maximum matching problem requires $\Omega(\log^* n)$ queries~\cite{yoshida2021sensitivity}, where $\log^* n$ is the iterated logarithm of $n$. 
However, their argument relies on Ramsey theory, and thus it cannot be extended to randomized approximation algorithms.

Individual fairness \cite{DworkHPRZ12} is another area which shares similarities with sensitivity. Individual fairness requires that similar individuals should be assigned similar labels (more specifically, their distributions over labels should have low statistical distance). Hence, our sensitivity bounds give lower bounds in the case when individuals are represented as graphs and are classified according to, for example, their cliques or cuts; such representations have been used, for example in \cite{KangHMT20}.

\section{Preliminaries}\label{sec:pre}

For positive integer $n$, let $[n]$ denote the set $\{1,2,\ldots,n\}$.
We use bold symbols to denote random variables.
For a real number $x \in \mathbb R$, let $[x]_+ = \max\{x,0\}$.

\paragraph{Graphs.}
We often use $n$ and $m$ to denote the number of vertices and edges in a graph when the graph is clear from context.
For a graph $G=(V,E)$ and a vertex $v\in V$, let $\deg(v)$ denote the degree of $v$.
For a set $E$ and an element $e \in E$, let $E-e$ denote the set $E \setminus \{e\}$.

For a graph $G=(V,E)$, let $\lambda_i(G)$ denote the $i$-th largest eigenvalue of the adjacency matrix of $G$.
It is known that $\lambda_1=d$ when $G$ is $d$-regular.
Let $\lambda(G) = \max\{\lambda_2(G), |\lambda_n(G)|\}$.
A $d$-regular graph $G=(V,E)$ is called an \emph{$(n, d, \lambda)$-expander} if $ \lambda(G) \leq \lambda < d$. The following lemmas guarantee the existence of sufficiently strong expanders.

\begin{lemma}[see, e.g., \cite{hoory2006expander}]\label{lem:expander-construction}
    Let $d \geq 3$ be an integer.
    Then, there exist explicit constant $\lambda \leq d/2$ and an explicit (polynomial-time computable) family of $(n, d, \lambda)$-expanders.    
\end{lemma}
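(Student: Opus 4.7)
The lemma is a direct appeal to a standard explicit expander construction catalogued in the cited survey by Hoory, Linial, and Wigderson. The plan is to exhibit an explicit $d$-regular family of graphs with the required second-eigenvalue bound, working in two stages: first fix a base degree with a strong spectral ratio, then leverage simple graph operations to hit every target degree.

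For the base, I would invoke the Lubotzky--Phillips--Sarnak Ramanujan construction, which for any prime $p \equiv 1 \pmod 4$ gives an explicit polynomial-time-computable family of $(n, p+1, 2\sqrt{p})$-expanders on growing numbers of vertices. Picking $p$ large enough that $2\sqrt{p} \le (p+1)/2$ (it suffices to take $p \ge 17$) establishes the lemma for the single degree $d_0 = p+1$. Margulis--Gabber--Galil graphs, or an explicit zig-zag construction, could serve equally well as a base.

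To realize an arbitrary target degree $d \ge 3$, I would combine or modify the base family. Convolution (raising the adjacency matrix to the $k$-th power) turns an $(n, d_0, \lambda_0)$-expander into an $(n, d_0^k, \lambda_0^k)$-expander, so the ratio $(\lambda_0/d_0)^k$ can be driven arbitrarily small. Edge-disjoint superpositions of two expanders on a shared vertex set add the degrees while the second eigenvalues satisfy the triangle inequality, so combining expanders of two coprime base degrees and invoking a Frobenius/Chicken-McNugget argument covers every sufficiently large integer $d$ with room to spare in the spectral ratio. Alternatively, the replacement and zig-zag products of Reingold--Vadhan--Wigderson give more direct control over the degree parameter. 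The main obstacle is matching the exact integer $d$ while simultaneously preserving explicit polynomial-time computability and the specific bound $\lambda \le d/2$; this is bookkeeping rather than a real mathematical difficulty, and the applications in the sequel only rely on $\lambda$ being a fixed constant strictly smaller than $d$, so any of the constructions in the cited survey can be tuned to the qualitative requirement.
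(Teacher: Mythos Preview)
The paper gives no proof of this lemma: it is stated as a background fact with a pointer to the Hoory--Linial--Wigderson survey, and the subsequent sections only invoke it to fix a single pair of universal constants $(d_0,\lambda_0)$ with $\lambda_0<d_0$. So there is nothing in the paper to compare your argument against, and your sketch---LPS or Margulis as a base family, then powering/superposition/zig-zag to adjust the degree---is exactly the standard route one would extract from the cited survey.

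One point you implicitly handle but could state outright: the literal bound $\lambda\le d/2$ is unattainable for small $d$, since Alon--Boppana forces $\lambda\ge 2\sqrt{d-1}-o(1)$ and $2\sqrt{d-1}>d/2$ whenever $d\le 14$. Your closing remark that the applications only need one fixed $d_0$ with $\lambda_0$ strictly below $d_0$ is precisely the right resolution; the paper indeed uses the lemma only in that form (to place a fixed-degree expander on each cloud in \textsc{DegreeReduction} and to superimpose one in \textsc{Expanderize}), so the imprecision in the stated range of $d$ is cosmetic.
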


\begin{lemma}[see, e.g., \cite{hoory2006expander}]\label{lem:expander-superimpose}
    If $G = (V,E)$ is a $d$-regular graph and $G' = (V,E')$ is a $d'$-regular graph, then $H =(V,E \cup E')$ is a $(d+d')$-regular graph such that $\lambda(H) \leq \lambda(G) + \lambda(G')$.
\end{lemma}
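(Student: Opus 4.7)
The plan is to work directly with adjacency matrices. Let $A_G$ and $A_{G'}$ denote the adjacency matrices of $G$ and $G'$ respectively, where we treat $E \cup E'$ as a multiset so that $A_H = A_G + A_{G'}$. The regularity claim is immediate: each vertex has exactly $d$ incident edges in $E$ and $d'$ incident edges in $E'$, so its total degree in $H$ is $d+d'$.

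The key ingredient for the spectral bound is the variational characterization of $\lambda(\cdot)$ for regular graphs. Because $G$, $G'$, and $H$ are all regular, the all-ones vector $\mathbf{1}$ is a common eigenvector, with eigenvalues $d$, $d'$, and $d+d'$ respectively, and these are the top eigenvalues. Consequently, for any $d$-regular graph $F$ on $V$ one has
\[
\lambda(F) \;=\; \max_{\substack{v \perp \mathbf{1} \\ \|v\|=1}} \bigl| v^{\top} A_F v \bigr|,
\]
since the maximum over $v \perp \mathbf{1}$ automatically restricts attention to the non-trivial eigenvalues, of which $\lambda(F)$ is the one of largest absolute value.

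Applying this characterization to $H$ and using $A_H = A_G + A_{G'}$, I would estimate for any unit vector $v \perp \mathbf{1}$:
\[
\bigl| v^{\top} A_H v \bigr| \;=\; \bigl| v^{\top} A_G v + v^{\top} A_{G'} v \bigr| \;\leq\; \bigl| v^{\top} A_G v \bigr| + \bigl| v^{\top} A_{G'} v \bigr| \;\leq\; \lambda(G) + \lambda(G'),
\]
where the final inequality again uses the variational characterization applied to $G$ and $G'$. Taking the maximum over all such $v$ yields $\lambda(H) \leq \lambda(G) + \lambda(G')$, completing the proof.

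There is no real obstacle here: the only point one must be mindful of is that $\lambda(\cdot)$ is the larger of $\lambda_2$ and $|\lambda_n|$, but this is precisely captured by maximizing the absolute value of the Rayleigh quotient over vectors orthogonal to $\mathbf{1}$, which is legitimate because $\mathbf{1}$ is simultaneously the top eigenvector of $A_G$, $A_{G'}$, and $A_H$ thanks to regularity.
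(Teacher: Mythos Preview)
Your proof is correct. The paper does not supply its own proof of this lemma; it is stated as a standard fact with a citation to the Hoory--Linial--Wigderson survey, and your argument via the Rayleigh-quotient characterization on $\mathbf{1}^\perp$ together with the triangle inequality is exactly the standard proof one finds in that setting.
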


\paragraph{Constraint satisfaction problems.}

We formally define \emph{constraint satisfaction problems} (CSPs).
An instance of a CSP is a tuple $I = (V,E,\Sigma,\mathcal{R} = \{R_e\}_{e\in E})$ consisting of a variable set $V$, a set of hyperedges $E$ over $V$, a finite domain $\Sigma$ (also referred to as the alphabet), and a relation $R_e  \subseteq \Sigma^{|e|}$ for each $e \in E$.
A \emph{constraint} refers to a pair $(e,R_e)$ for $e \in E$.
We say that an assignment $\sigma:V\to \Sigma$ \emph{satisfies} a constraint $(e = (v_1,\ldots,v_k),R_e)$ if $(\sigma(v_1),\ldots,\sigma(v_k)) \in R_e$.
Also, we say that $\sigma$ satisfies $I$ if it satisfies all the constraints.
The goal of a CSP is, given an instance $I$ of the CSP, to find a satisfying assignment for $I$.

Now, we consider an optimization version of CSPs.
For a CSP instance $I=(V,E,\Sigma,\mathcal{R})$ and an assignment $\sigma:V\to \Sigma$, let $\val_I(\sigma)$ denote the fraction of constraints in $I$ satisfied by $\sigma$.
We define $\cost_I(\sigma) = 1 - \val_I(\sigma)$ as the fraction of constraints violated by $\sigma$.
Let $\opt(I) = \max_{\sigma:V \to \Sigma}\val_I(\sigma)$.
For $1 \geq c \geq s \geq 0$, we define $\mathsf{MaxCSP}_{c,s}$ as the problem that, given an instance $I=(V,E,\Sigma,\mathcal{R})$ with $\opt(I) \geq c$, the goal is to find an assignment $\sigma:V \to \Sigma$ with $\val_I(\sigma) \geq s$.\footnote{Usually, $\mathsf{MaxCSP}_{c,s}$ refers to the decision problem where the goal is to distinguish instances $I$ with $\opt(I) \geq c$ from instances $I$ with $\opt(I) \leq s$. 
However, we define it as a search problem because we are focused on the sensitivity of algorithms.}

For two instances $I = (V,E,\Sigma,\{R_e\}_{e \in E})$ and $\tilde I = (V,E,\Sigma,\{\tilde R_e\}_{e \in E})$ on the same underlying hypergraph and domain, we define the \emph{swap distance} between $I$ and $\tilde I$ as $\swapdist(I,\tilde I) := \#\{R_e \neq \tilde R_e : e \in E\}$.

\paragraph{Sensitivity.}

Let $I=(V,E,\Sigma,\{R_e\}_{e \in E})$ be a CSP instance.
For a hyperedge $e \in E$, let $I-e$ denote the instance $(V,E-e,\Sigma,\{R_f\}_{f \in E - e})$.
For two assignments $\sigma,\sigma':V \to \Sigma$, let $\mathrm{Ham}(\sigma,\sigma') = \#\{v \in V: \sigma(v) \neq \sigma(v')\}$ denote their Hamming distance.
The \emph{sensitivity} of a deterministic algorithm $A$ on $I$ is defined as
\begin{align}
    \sens(A, I) ~:=~ \max_{e \in E}\mathrm{Ham}(A(I), A(I - e)),
    \label{eq:deterministic-sensitivity}    
\end{align}
where $A(I)$ denotes the output assignment of $A$ on $I$.

For a distribution $\mu$ over $X$ and another distribution $\tilde \mu$ over $\tilde X$, we say that a joint distribution $\pi$ over $X \times \tilde X$ is a \emph{coupling} between them if the marginal distributions on the first and the second coordinates are equal to $\mu$ and $\tilde \mu$, respectively.
Let $\Pi(\mu,\tilde \mu)$ denote the set of all couplings between $\mu$ and $\tilde \mu$.
For two distributions $\mu,\tilde \mu$ over assignments on the same domain, we define the earth mover's distance between them as 
\[
    \EMD(\mu,\tilde \mu) = \min_{\pi \in \Pi(\mu,\tilde \mu)} \E_{(\sigma,\tilde \sigma)\sim \pi} \mathrm{Ham}(\sigma,\tilde \sigma).
\]
The \emph{sensitivity} of a randomized algorithm $A$ on a CSP instance $I=(V,E,\Sigma,\mathcal R)$ is defined as
\[
    \sens(A, I) := \max_{e \in E} \EMD(A(I), A(I - e)),
\]
where we identify random variables $A(I)$ and $A(I-e)$ with their distributions.
Note that this definition matches~\eqref{eq:deterministic-sensitivity} when the algorithm is deterministic.
For a family of algorithms $\mathcal A$ and a family of instances $\mathcal I$ over the same domain $\Sigma$, we define 
\[
    \sens(\mathcal A,\mathcal I) := \min_{A \in \mathcal A} \max_{I \in \mathcal I} \sens(A,I).
\]
Also, we define $\sens_{c,s}(\mathcal I) := \sens(\mathcal A,\mathcal I)$, where $\mathcal A$ is the family of all possible algorithms for $\mathsf{MaxCSP}_{c,s}$.

We define a variant of sensitivity which is more convenient when studying CSPs.
Let $I=(V,E,\Sigma,\{R_e\}_{e\in E})$ be a CSP instance.
For $e \in E$ and $R \subseteq \Sigma^{|e|} $, $I^{e \gets R}$ denote the instance obtained from $I$ by replacing $R_e$ with $R$.
Then, the \emph{swap sensitivity} of a randomized algorithm $A$ on an instance $I = (V,E,\Sigma,\mathcal R)$ is defined as 
\[
    \swapsens(A, I) := \max_{e \in E}\max_{R \subseteq \Sigma^{|e|}} \EMD(A(I), A(I^{e\gets R})).
\]
We define $\swapsens(\mathcal A,\mathcal I)$ and $\swapsens_{c,s}(\mathcal I)$ as with sensitivity.
We say that a family of instances $\mathcal I$ is \emph{swap-closed} if for any $I =(V,E,\Sigma,\mathcal R)\in \mathcal I$, $e \in E$, and $R \subseteq \Sigma^{|e|}$, the instance $I^{e \gets R}$ also belongs to $\mathcal I$.
Note that for any $A\in \mathcal A$ and $I,\tilde I \in \mathcal I$ for a swap-closed family of instances $\mathcal I$, we have $\EMD(A(I), A(\tilde I)) \leq \swapsens(\mathcal A,\mathcal I) \cdot \swapdist(I,\tilde I)$.
For a family of instances $\mathcal I$, we define $\swapclo(\mathcal I)$ as the \emph{swap-closure} of $\mathcal I$, i.e., the family of instances obtained by (repeatedly) swapping constraints in $I \in \mathcal I$.

The following lemma shows that we can reduce the problem of bounding sensitivity to that of bounding swap sensitivity.
\begin{lemma}\label{lem:sens-and-swap-sens}
    For any family of algorithms $\mathcal A$ and family of CSP instances $\mathcal I$, we have
    \[
        \sens(\mathcal A,\mathcal I) \geq \frac{1}{2}\swapsens(\mathcal A,\mathcal I).
    \]
\end{lemma}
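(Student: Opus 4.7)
The plan is to exploit a simple identity between instances that differ by a swap: deleting the edge $e$ from $I$ and from $I^{e \gets R}$ yields the very same CSP, namely $I - e$. Hence $A(I-e)$ will serve as a natural intermediate distribution between $A(I)$ and $A(I^{e \gets R})$, and the lemma should fall out of the triangle inequality for the earth mover's distance.

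First I would record that $\EMD$ with respect to Hamming distance satisfies the triangle inequality; this is standard since $\EMD$ is a $1$-Wasserstein distance. Given a coupling $\pi_1$ between $\mu$ and $\nu$ and a coupling $\pi_2$ between $\nu$ and $\rho$, the usual gluing lemma produces a coupling between $\mu$ and $\rho$ whose expected Hamming cost is at most the sum of those of $\pi_1$ and $\pi_2$. Next, for any $A \in \mathcal A$, $I \in \mathcal I$, $e \in E$, and $R \subseteq \Sigma^{|e|}$, I would bound
\begin{align*}
    \EMD(A(I), A(I^{e \gets R})) &\leq \EMD(A(I), A(I-e)) + \EMD(A(I-e), A(I^{e \gets R})) \\
    &= \EMD(A(I), A(I-e)) + \EMD(A((I^{e \gets R}) - e), A(I^{e \gets R})),
\end{align*}
using the identity $I - e = (I^{e \gets R}) - e$ in the equality. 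By the definition of sensitivity, the two summands are at most $\sens(A, I)$ and $\sens(A, I^{e \gets R})$, respectively. Maximizing the left-hand side over $e$ and $R$ turns it into $\swapsens(A, I)$, while the right-hand side is bounded by $2 \max_{I' \in \mathcal I} \sens(A, I')$; taking the minimum over $A \in \mathcal A$ then yields $\swapsens(\mathcal A, \mathcal I) \leq 2\, \sens(\mathcal A, \mathcal I)$.

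The one point I expect to need care about is that bounding $\sens(A, I^{e \gets R})$ by $\max_{I' \in \mathcal I}\sens(A, I')$ tacitly requires $I^{e \gets R}$ to sit inside $\mathcal I$. This is automatic whenever $\mathcal I$ is swap-closed, and otherwise the statement would be applied to the swap-closure $\swapclo(\mathcal I)$, which the paper has already introduced for precisely this purpose. The factor of $1/2$ in the conclusion is the direct cost of the two summands produced by the triangle inequality, and the elementary approach above does not seem to permit sharpening it.
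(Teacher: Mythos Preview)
Your proof is correct and follows essentially the same approach as the paper's: both use the triangle inequality for $\EMD$ with $A(I-e)$ as the intermediate distribution, exploiting the identity $(I^{e \gets R}) - e = I - e$. Your observation about needing $I^{e \gets R} \in \mathcal I$ (i.e., swap-closedness) is apt; the paper's proof glosses over this same point and implicitly relies on it as well.
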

\begin{proof}
    Let $A \in \mathcal A$ be an algorithm that attains $\sens(\mathcal A,\mathcal I)$.
    For any $I = (V,E,\Sigma,\{R_e\}_{e \in E}) \in \mathcal I$, $e \in E$, and $R \subseteq \Sigma^{|e|}$, we have
    \begin{align*}
        \EMD(A(I),A(I^{e \gets R}))
        \leq 
        \EMD(A(I),A(I-e))
        +
        \EMD(A(I-e), A(I^{e \gets R}))
        \leq 2\sens(\mathcal A,\mathcal I).
    \end{align*}
    Hence, we have 
    \[
        \swapsens(\mathcal A,\mathcal I)
        \leq 
        \max_{I \in \mathcal I} \swapsens(A, I)   
        \leq 
        \max_{I \in \mathcal I, e \in E, R \subseteq \Sigma^{|e|}} \EMD(A(I),A(I^{e \gets R}))
        \leq         
        2\sens(\mathcal A,\mathcal I).
        \qedhere
    \]
\end{proof}

\section{Sensitivity-Preserving Reductions}\label{sec:reduction}

The PCP framework can be viewed as a sequence of reductions between CSPs. 
Therefore, it is useful to introduce a template for the reductions that we will use throughout our analysis.
\begin{definition}\label{def:sensitivity-reduction}
    Let $\mathcal I$ be a family of CSP instances on the same underlying hypergraph and domain.
    Let $T_I$ be a procedure that transforms a CSP instance $I \in \mathcal I$ to another CSP instance $I'$, and let $T_\sigma$ be a (possibly randomized) procedure that transforms an assignment $\sigma':V' \to \Sigma'$ for $I'$ to an assignment $\bm \sigma:V \to \Sigma$ for $I$ ($T_\sigma$ might depend on $I$).
    For $c,s,c',s' \in [0,1]$ and $C_I,C_\sigma>0$, we say that the pair $(T_I,T_\sigma)$ is a $(c,s,c',s',C_I,C_\sigma)$-sensitivity-preserving reduction for $\mathcal I$ if the following holds:
    \begin{enumerate}
        \item If $\mathsf{opt}(I)\geq c$, then $\mathsf{opt}(I') \geq c'$. \label{item:reduction-completeness}
        \item If a (possibly random) assignment $\bm \sigma'$ for $I'$ satisfies $\E[\mathsf{val}_{I'}(\bm \sigma')] \geq s'$, then the assignment $\bm \sigma = T_\sigma(\bm \sigma')$ satisfies $\E[\mathsf{val}_{I}(\bm \sigma)] \geq s$. \label{item:reduction-soundness}
        \item Let $I,\tilde I \in \mathcal I$ be two CSP instances.
        Then, we have $\mathsf{SwapDist}(T_I(I),T_I(\tilde I)) \leq C_I \cdot \mathsf{SwapDist}(I,\tilde I)$. 
        In particular, this implies that $T_I$ generates CSP instances on the same underlying hypergraph and domain for any $I \in \mathcal I$.
        \label{item:reduction-instance-distance}
        \item Let $I,\tilde I \in \mathcal I$ be two CSP instances and let $\bm \sigma',\tilde {\bm \sigma}'$ be assignments for $T_I(I)$ and $T_I(\tilde I)$, respectively.
        Then, we have $\EMD(T_\sigma(\bm \sigma'),T_\sigma(\tilde{\bm \sigma}')) \leq C_{\sigma}\cdot  \EMD(\bm \sigma',\tilde{\bm \sigma}')$.
        \label{item:reduction-assignment-distance}
    \end{enumerate}
\end{definition}

\begin{lemma}\label{lem:sensitivity-reduction}
    Let $\mathcal I$ be a swap-closed family of CSP instances on the same underlying hypergraph and the same domain.
    Suppose that there exists a $(c,s,c',s',C_I,C_\sigma)$-sensitivity-preserving reduction $(T_I,T_\sigma)$ for $\mathcal I$.
    Then we have
    \[
        \swapsens_{c',s'}(\swapclo(\mathcal I')) \geq \frac{1}{C_I C_\sigma}\swapsens_{c,s}(\mathcal I),
    \]
    where $\mathcal I' = T_I(\mathcal I) := \{T_I(I) : I \in \mathcal I\}$.
\end{lemma}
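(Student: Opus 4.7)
The plan is to do a contrapositive-style reduction: given any algorithm $A'$ for $\mathsf{MaxCSP}_{c',s'}$ on $\swapclo(\mathcal I')$ of low swap sensitivity, build an algorithm $A$ for $\mathsf{MaxCSP}_{c,s}$ on $\mathcal I$ whose swap sensitivity is at most $C_I C_\sigma$ times larger. On input $I \in \mathcal I$, the algorithm $A$ simply runs $A'$ on $T_I(I) \in \mathcal I' \subseteq \swapclo(\mathcal I')$ to obtain a (random) assignment $\bm \sigma'$ for $T_I(I)$, and then outputs $\bm \sigma := T_\sigma(\bm \sigma')$.

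First I would verify that $A$ is a valid algorithm for $\mathsf{MaxCSP}_{c,s}$ on $\mathcal I$. If $\opt(I) \geq c$, then item \ref{item:reduction-completeness} gives $\opt(T_I(I)) \geq c'$, so $A'$ satisfies $\E[\val_{T_I(I)}(\bm \sigma')] \geq s'$; item \ref{item:reduction-soundness} then yields $\E[\val_I(\bm \sigma)] \geq s$, as required.

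Next I would bound the swap sensitivity. Fix $I \in \mathcal I$, an edge $e$ and a relation $R \subseteq \Sigma^{|e|}$; since $\mathcal I$ is swap-closed, $I^{e \gets R} \in \mathcal I$, so $T_I(I^{e \gets R})$ is defined and lies in $\mathcal I'$. By item \ref{item:reduction-assignment-distance},
\[
\EMD\bigl(A(I),\,A(I^{e \gets R})\bigr) = \EMD\bigl(T_\sigma(A'(T_I(I))),\,T_\sigma(A'(T_I(I^{e \gets R})))\bigr) \leq C_\sigma \cdot \EMD\bigl(A'(T_I(I)),\,A'(T_I(I^{e \gets R}))\bigr).
\]
Both arguments of $A'$ on the right lie in the swap-closed family $\swapclo(\mathcal I')$, so the remark following the definition of swap sensitivity gives
\[
\EMD\bigl(A'(T_I(I)),\,A'(T_I(I^{e \gets R}))\bigr) \leq \swapsens(A',\,\swapclo(\mathcal I')) \cdot \swapdist\bigl(T_I(I),\,T_I(I^{e \gets R})\bigr),
\]
and item \ref{item:reduction-instance-distance} bounds the swap distance on the right by $C_I \cdot \swapdist(I,I^{e \gets R}) \leq C_I$. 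Chaining these inequalities and taking the maximum over $I$, $e$, and $R$ yields $\swapsens(A,\mathcal I) \leq C_I C_\sigma \cdot \swapsens(A',\swapclo(\mathcal I'))$. Taking the infimum over valid $A'$ (any such $A'$ induces a valid $A$ as checked above, and hence an upper bound on $\swapsens_{c,s}(\mathcal I)$) and rearranging gives the claimed inequality.

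The only subtle point, and the one deserving most care, is that the right setting for applying the ``swap sensitivity times swap distance'' bound is a swap-closed family; this is precisely why the statement is phrased with $\swapclo(\mathcal I')$ rather than $\mathcal I'$ itself, and why the hypothesis insists that $\mathcal I$ be swap-closed (so that $I^{e \gets R}$ is a legitimate input to the reduction). Once this bookkeeping is set up correctly, the rest is a direct concatenation of items \ref{item:reduction-completeness}--\ref{item:reduction-assignment-distance} of the definition.
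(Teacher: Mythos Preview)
Your proof is correct and follows essentially the same approach as the paper: construct $A$ from $A'$ via $A(I) := T_\sigma(A'(T_I(I)))$, verify the $(c,s)$ guarantee using items~\ref{item:reduction-completeness} and~\ref{item:reduction-soundness}, and bound the swap sensitivity by chaining items~\ref{item:reduction-assignment-distance} and~\ref{item:reduction-instance-distance} together with the swap-closedness of $\swapclo(\mathcal I')$. The only cosmetic difference is that the paper fixes an optimal $A'$ attaining $\swapsens_{c',s'}(\swapclo(\mathcal I'))$ at the outset, whereas you work with an arbitrary $A'$ and take the infimum at the end; these are equivalent.
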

\begin{proof}
    Let $A'$ be an algorithm that attains $\swapsens_{c',s'}(\swapclo(\mathcal I'))$.
    Then, we design an algorithm $A$ for $\mathcal I$ using $A'$ as follows:
    Given an instance $I = (V, E, \Sigma, \mathcal{R}) \in \mathcal I$, we construct an instance $I' = (V',E', \Sigma', \mathcal{R}') = T_I(I)$.
    Then, we run the algorithm $A'$ on $I'$ to obtain a (possibly random) assignment $\bm \sigma': V' \to \Sigma'$ for $I'$.
    Then, we output an assignment $\bm \sigma = T_\sigma(\bm \sigma')$ for $\mathcal I$.

    We analyze the approximation guarantee of $A$.
    Suppose $\mathsf{opt}_I(I) \geq c$.
    Then by \Cref{item:reduction-completeness} of \Cref{def:sensitivity-reduction}, we have $\mathsf{opt}(I') \geq c'$.
    Hence, the output assignment $\bm \sigma'$ satisfies $\E\mathsf{val}(I',\bm \sigma') \geq s'$.
    By \Cref{item:reduction-soundness} of \Cref{def:sensitivity-reduction}, we have that $\E\mathsf{val}(I,\bm \sigma) \geq s$.

    Now, we analyze the swap sensitivity of $A$.
    Let $I,\tilde I \in \mathcal I$ be two CSP instances with swap distance one.
    Then by \Cref{item:reduction-instance-distance} of \Cref{def:sensitivity-reduction}, we have $\swapdist(I',\tilde I') \leq C_I \cdot \swapdist(I,\tilde I)$, which implies that \[\EMD(\bm \sigma',\tilde{\bm \sigma}') \leq C_I \cdot \swapsens(A', \swapclo(\mathcal I')) = C_I \cdot  \swapsens_{c',s'}(\swapclo(\mathcal I')).\]
    By \Cref{item:reduction-assignment-distance} of \Cref{def:sensitivity-reduction}, we have $\EMD(T_\sigma(\bm \sigma'),T_\sigma(\tilde{\bm \sigma}')) \leq C_I  C_\sigma \cdot \swapsens_{c',s'}(\swapclo(\mathcal I'))$.
    Hence, we must have $\swapsens_{c',s'}(\swapclo(\mathcal I')) \geq \swapsens_{c,s}(\mathcal I)/(C_I C_\sigma)$.
\end{proof}

\section{Lower Bounds for \texorpdfstring{$\mathsf{E2LIN}$}{E2LIN}}\label{sec:2-xor}

Let $R_0,R_1 \in \mathbb Z_2^2$ be binary relations over $\mathbb Z_2$ such that $(a,b) \in R_0$ if and only if $a + b = 0 \pmod 2$ and $(a,b) \in  R_1$ if and only $a + b = 1 \pmod 2$, respectively.
Then, we define $\mathsf{E2LIN}$ as the CSP over $\mathbb Z_2$ where only $R_0$ and $R_1$ are used to define the constraints.
Also, let $\mathsf{E2LIN}_{c,s}$ be the special case of $\mathsf{MaxCSP}_{c,s}$, where the instances are restricted to those of $\mathsf{E2LIN}$.
In this section, we establish a sensitivity lower bound for $\mathsf{E2LIN}_{1,1-1/2n}$, which we will later amplify using the PCP framework.

Let $\mathcal{I}_n$ denote the set of all $\mathsf{E2LIN}$ instances $(V,E,\mathbb Z_2,\mathcal R)$ such that the graph $(V,E)$ forms a cycle on $n$ vertices.
Note that $\mathcal I_n$ is swap-closed.
Let $\mathcal{A}$ denote the set of randomized algorithms such that the expected number of errors on satisfiable instances in $\mathcal{I}_n$ is at most half. 
We show the following lower bound.
\begin{lemma}\label{lem:2-xor}
    For any even integer $n \geq 4$, we have $\swapsens(\mathcal{A},\mathcal{I}_n) = \Omega(n)$.
    In particular, we have $\swapsens_{1,1-1/2n}(\mathcal I_n) = \Omega(n)$.
\end{lemma}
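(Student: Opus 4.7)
My plan is to exhibit two satisfiable cycle instances $I^\star, I^{\star\star} \in \mathcal I_n$ that lie at swap distance only $2$ but whose sets of satisfying assignments are mutually far in Hamming distance, and to deduce the bound by a triangle inequality. Let $I^\star$ be the cycle instance in which every edge carries the equality relation $R_0$, so $\mathcal S_{I^\star}=\{0^n,1^n\}$. Obtain $I^{\star\star}$ from $I^\star$ by swapping the relations on two antipodal edges, say $e_1$ and $e_{n/2+1}$, to $R_1$. Walking around the cycle from $x_1=0$ one checks $\mathcal S_{I^{\star\star}}=\{\tau,\bar\tau\}$, where $\tau$ is $1$ on the arc $v_2,\dots,v_{n/2+1}$ and $0$ elsewhere; in particular every pair $(\sigma,\sigma')\in\mathcal S_{I^\star}\times\mathcal S_{I^{\star\star}}$ satisfies $\mathrm{Ham}(\sigma,\sigma')=n/2$.

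The key structural input is a parity observation: for an $\mathsf{E2LIN}$ cycle instance with right-hand sides $b_1,\dots,b_n$, summing the indicators $\mathbf 1[x_i+x_{i+1}\ne b_i]$ modulo $2$ collapses the $x_i$-terms and gives that the number of violated constraints is congruent to $\sum_i b_i \pmod 2$. Hence on any satisfiable instance in $\mathcal I_n$ the number of violated constraints is always even. Combined with $\mathbb E[\#\text{violated}]\le 1/2$ and Markov's inequality at threshold $2$, this upgrades the naive bound $\Pr[A(I)\in\mathcal S_I]\ge 1/2$ to
\[
    \Pr[A(I)\in\mathcal S_I] \;=\; 1-\Pr[\#\text{violated}\ge 2] \;\ge\; 1-\tfrac{1/2}{2} \;=\; \tfrac34
\]
for every satisfiable $I\in\mathcal I_n$ and every $A\in\mathcal A$.

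Applying this to both $I^\star$ and $I^{\star\star}$: for any coupling of $A(I^\star)$ and $A(I^{\star\star})$, a complementary union bound yields $\Pr[A(I^\star)\in\mathcal S_{I^\star}\text{ and } A(I^{\star\star})\in\mathcal S_{I^{\star\star}}]\ge 3/4+3/4-1=1/2$, and conditioned on this event the Hamming distance between the two outputs is exactly $n/2$. Therefore $\EMD(A(I^\star),A(I^{\star\star}))\ge \tfrac12\cdot\tfrac n2=\tfrac n4$. On the other hand, $I^\star$ and $I^{\star\star}$ differ on only the two edges $e_1$ and $e_{n/2+1}$, so interpolating through the single-swap intermediate $J=(I^\star)^{e_1\gets R_1}\in\mathcal I_n$ gives $\EMD(A(I^\star),A(I^{\star\star}))\le 2\,\swapsens(A,\mathcal I_n)$ by the triangle inequality and the definition of $\swapsens$. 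Rearranging yields $\swapsens(A,\mathcal I_n)\ge n/8$ for every $A\in\mathcal A$, proving $\swapsens(\mathcal A,\mathcal I_n)=\Omega(n)$. The ``in particular'' clause follows because any algorithm for $\mathsf{MaxCSP}_{1,1-1/(2n)}$ has expected cost at most $1/(2n)$ on satisfiable instances, i.e.\ expected number of violations at most $n\cdot 1/(2n)=1/2$, placing it in $\mathcal A$.

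The main potential obstacle is that the plain Markov bound $\Pr[A(I)\text{ sat}]\ge 1/2$ would render the union-bound step above vacuous ($\Pr[\text{both sat}]\ge 0$). It is precisely the parity-of-errors structure on a cycle that pushes the per-instance success probability past $3/4$ and thereby makes the coupling argument go through; the rest of the proof is just careful bookkeeping of Hamming distances between the two (two-element) solution sets and a two-step triangle inequality.
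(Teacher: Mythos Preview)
Your proof is correct and follows essentially the same approach as the paper's: exhibit two satisfiable cycle instances at swap distance $2$ whose satisfying sets are pairwise $n/2$ apart in Hamming distance, use the parity-of-violations observation on the cycle to boost the per-instance success probability from the naive $1/2$ to $3/4$, and then couple and divide by the swap distance. The only cosmetic difference is that the paper starts from the all-$R_1$ cycle (alternating satisfying assignments) and swaps two antipodal edges to $R_0$, whereas you start from the all-$R_0$ cycle (constant satisfying assignments) and swap to $R_1$; the arithmetic and final bound $n/8$ are identical.
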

\begin{proof}
    Let $A \in \mathcal{A}$ be the algorithm that attains $\swapsens(\mathcal{A},\mathcal{I}_n)$.
    Consider an instance $I = (V=(v_1,\ldots,v_n),E, \mathbb Z_2, \{R_e\}_{e \in E}) \in \mathcal{I}_n$ such that $R_e = R_1$ for every $e \in E$.
    Note that $I$ is satisfiable and any assignment violates an even number of constraints in $I$, and hence $A$ must output a satisfying assignment on $I$ with probability at least $3/4$ (otherwise, the expected number of errors is more than $1/4 \cdot 2 = 1/2$).
    Note that there are only two satisfying assignments $\sigma_1,\sigma_2:V\to \mathbb Z_2$ for $I$.
    Without loss of generality, we can assume that they are of the following form:
    \[
        \sigma_1(v_i) = \begin{cases}
            0 & i\text{ is odd}, \\
            1 & i\text{ is even},
        \end{cases}
        \quad
        \sigma_2(v_i) = \begin{cases}
            1 & i\text{ is odd}, \\
            0 & i\text{ is even}.
        \end{cases}
    \]

    Let $e = (v_{n/2},v_{n/2}+1)$ and $e' = (v_n,v_1)$ and consider an instance $\tilde I = I^{e \gets R_0,e' \gets R_0}$.
    As $\tilde I$ is also satisfiable, $A$ must output a satisfying assignment on $I'$ with probability at least $3/4$.
    Note that there are only two satisfying assignments $\tilde \sigma_1, \tilde \sigma_2:V\to \{0,1\}$ for $\tilde I$.
    Without loss of generality, we can assume that they are of the following form:
    \[
        \tilde \sigma_1(v_i) = \begin{cases}
            0 & i\text{ is odd and }i \leq n/2, \\
            1 & i\text{ is even and }i \leq n/2, \\
            1 & i\text{ is odd and }i \geq n/2+1, \\
            0 & i\text{ is even and }i \geq n/2+1,
        \end{cases}
        \quad
        \tilde \sigma_2(v_i) = \begin{cases}
            1 & i\text{ is odd and }i \leq n/2, \\
            0 & i\text{ is even and }i \leq n/2, \\
            0 & i\text{ is odd and }i \geq n/2+1, \\
            1 & i\text{ is even and }i \geq n/2+1.
        \end{cases}
    \]
    For every $i,j \in \{1,2\}$, we have $\Ham(\sigma_i,\tilde \sigma_j ) \geq n/2$.
    Hence, we have
    \[
        \swapsens(\mathcal{A},\mathcal{I}_n)
        \geq 
        \frac{\EMD(A(I),A(\tilde I))}{\swapdist(I,\tilde I)}
        = 
        \frac{1}{2} \cdot \left(1 - \frac{1}{4} - \frac{1}{4}\right) \cdot \frac{n}{2} = \Omega(n).
        \qedhere
    \]
\end{proof}

\section{PCPs and Sensitivity}\label{sec:pcp}

In this section, we show a sensitivity lower bound for a problem called  $\mathsf{LabelCover}$, which is a special case of CSPs.
\begin{definition}
An instance of $\mathsf{LabelCover}$ is a tuple $I = (U,V,E,\Sigma_U,\Sigma_V,\mathcal{R} = \{R_e\}_{e \in E})$, where $(U \cup V, E)$ forms a bipartite graph, $\Sigma_U,\Sigma_V$ are finite domains, and each relation $R_e \subseteq \Sigma_L \times \Sigma_R$ is a projection.
Here, a \emph{projection} refers to a relation of the form $R_e =\{(a,\phi_e(a)) : a \in \Sigma_U\}$ for some map $\phi_e:\Sigma_U \to \Sigma_V$.
For $1 \geq c \geq s \geq 0$, we define $\mathsf{LabelCover}_{c,s}$ as the problem that, given a label cover instance $I$ with $\opt(I) \geq c$, the goal is to find an assignment $\sigma$ for $I$ with $\val_I(\sigma) \geq s$.
\end{definition}
The goal of this section is to show the following:
\begin{theorem}\label{thm:label-cover}
    There exist universal constants $\varepsilon,\delta>0$ and $d,k \geq 1$ such that any algorithm for $\mathsf{LabelCover}_{1,1-\varepsilon}$ on a bipartite graph of maximum degree $d$ and a domain of size $k$ has sensitivity $\Omega(n^\delta)$.
\end{theorem}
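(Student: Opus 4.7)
The plan is to start from the $\Omega(n)$ swap sensitivity lower bound for $\mathsf{E2LIN}_{1,1-1/(2n)}$ established in \Cref{lem:2-xor} and use the reduction framework of \Cref{sec:reduction} to amplify the soundness gap up to a constant $\varepsilon$ while losing only a polynomial factor in sensitivity. Following Dinur's PCP construction~\cite{dinur2007pcp}, each round will be a composition of four sensitivity-preserving reductions --- degree reduction, expanderization, gap amplification, and alphabet reduction --- each formalized as a $(c,s,c',s',C_I,C_\sigma)$-reduction in the sense of \Cref{def:sensitivity-reduction}. After $O(\log n)$ rounds the soundness gap reaches a constant, and by composing \Cref{lem:sensitivity-reduction} across the rounds we obtain an $\Omega(N^\delta)$ swap sensitivity lower bound for $\mathsf{MaxCSP}_{1,1-\varepsilon}$ on a constant-degree, constant-alphabet, binary-projection instance, which is exactly a $\mathsf{LabelCover}_{1,1-\varepsilon}$ instance; together with \Cref{lem:sens-and-swap-sens} this yields the theorem.

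First I would instantiate the two easier steps. For degree reduction, I would replace each variable $v$ of degree $d_v$ in the input by $d_v$ cloned copies connected along an explicit expander (\Cref{lem:expander-construction}) by equality constraints, and recover an assignment for $v$ by sampling one copy uniformly at random. This yields $C_I = O(1)$ and, because the label of a single copy only affects $v$ with probability $1/d_v$, effectively $C_\sigma \leq 1/d$ on average, so the step actually strengthens the sensitivity lower bound and provides slack to cover losses from the other steps. Expanderization then superimposes a constant-degree expander with trivial (always satisfied) constraints via \Cref{lem:expander-superimpose}, giving $C_I, C_\sigma = O(1)$ directly.

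The two delicate steps are gap amplification and alphabet reduction, and handling them is the crux of the argument. For gap amplification I would perform $t$-step random walks on the expander and, as sketched in \Cref{subsec:proof-overview}, replace Dinur's discontinuous majority decoder by sampling a label from the empirical distribution of labels collected along the walk, restricted to labels whose empirical mass exceeds a carefully chosen threshold. The hard part will be showing that this smoothed decoder still produces an assignment whose expected violation rate matches Dinur's bound, while simultaneously admitting an $O(1)$-Lipschitz guarantee under the earth mover's distance; I expect this to follow from a coupling that keeps the random walks identical across $I$ and $\tilde I$ and conditions on the set of heavy labels being the same in both. For alphabet reduction, I would apply the Hadamard encoding together with a BLR-based assignment tester~\cite{BlumLR93,ben2004robust,dinur2007pcp} and recover a label by a randomized threshold decoder: choose $\tau \in [0, 1/4]$ uniformly at random and decode $\bm\sigma'[u]$ to the closest Hadamard codeword only if its relative distance is at most $\tau$, outputting a fixed default codeword otherwise. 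A joint coupling with the same $\tau$ will then bound the contribution of each variable to $\EMD(T_\sigma(\bm \sigma'),T_\sigma(\tilde{\bm \sigma}'))$ by the expected Hamming distance between $\bm \sigma'[u]$ and $\tilde{\bm \sigma}'[u]$, giving $C_\sigma = O(1)$.

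Finally I would iterate the composition of the four reductions $O(\log n)$ times. Each round multiplies the soundness gap by a constant factor $> 1$ (until it saturates at a constant $\varepsilon$), multiplies the instance size by a polynomial factor, and multiplies the loss $C_I C_\sigma$ by a polynomial factor, so the repeated application of \Cref{lem:sensitivity-reduction} yields a total sensitivity lower bound of $\Omega(n/n^{O(1)}) = \Omega(N^\delta)$ for some constant $\delta > 0$, where $N = n^{O(1)}$ is the final instance size. The main obstacle throughout is designing the smoothed recovery procedures in gap amplification and alphabet reduction so that they are simultaneously faithful to Dinur's original soundness analysis \emph{and} admit constant Lipschitz bounds under the earth mover's distance; once those two steps are in place, the remaining bookkeeping across the $O(\log n)$ iterations, the bounded degree guarantee (from the last degree reduction), and the bounded alphabet guarantee (from the last alphabet reduction) are routine.
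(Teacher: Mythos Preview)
Your overall strategy and the identification of the two delicate reductions match the paper. However, the accounting in your final paragraph has a genuine gap. You say each round ``multiplies the loss $C_I C_\sigma$ by a polynomial factor'' and then conclude the total loss is $n^{O(1)}$; but a loss that is polynomial in $n$, compounded over $O(\log n)$ rounds, gives $n^{O(\log n)}$, which is vacuous. What the paper actually proves is that each round loses only a \emph{constant} factor (depending on $t$, $|\Sigma|$, $d_0$ but not on $n$), so the total loss is $\mathrm{const}^{O(\log n)}=n^{O(1)}$.

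Getting this constant per-round loss is exactly the step your sketch is missing. You analyze $C_\sigma$ for gap amplification and alphabet reduction but never $C_I$; in fact \Call{Powering}{} by itself has $C_I=D=(dt)^B$ and \Call{AlphabetReduction}{} by itself has $C_I=N$ (the assignment-tester blowup), and either would swamp the $O(1)$ values of $C_\sigma$ you obtain. The paper's fix is to insert a degree reduction \emph{immediately after} each of these two steps (\Cref{lem:powering+degree-reduction} and \Cref{lem:alphabet-reduction+degree-reduction}), not once at the start of the round as in Dinur's original order. Since the instance after powering is $D$-regular, the subsequent degree reduction gains back a factor $D$ in the lower bound (via $C_\sigma=1/D$ in \Cref{lem:degree-reduction}), exactly offsetting the $C_I=D$ loss; similarly the $X$-variables after alphabet reduction have degree $\Theta(N/\log|\Sigma|)$, which cancels $C_IC_\sigma=N\cdot O(1/\log|\Sigma|)$. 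So each round contains \emph{two} degree reductions, the order is Expanderize $\to$ Power $\to$ DegreeReduce $\to$ AlphabetReduce $\to$ DegreeReduce, and this precise pairing---not just ``slack''---is what makes the net per-round loss $O(1)$.
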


As we discussed in \Cref{subsec:proof-overview}, the proof of Theorem~\ref{thm:label-cover} consists of four steps: Degree reduction, expanderization, gap amplification, and alphabet reduction.
We discuss these four steps in \Cref{subsec:degree-reduction,subsec:expanderization,subsec:gap-amplification,subsec:alphabet-reduction}, respectively.
Finally, we prove \Cref{thm:label-cover} in \Cref{subsec:label-cover}.

\subsection{Degree Reduction}\label{subsec:degree-reduction}

In this section, we introduce a transformation that reduces degrees of vertices in the underlying graph of a CSP instance.
Recall that, by \Cref{lem:expander-construction}, there exist universal constants $\lambda_0 < d_0$ such that $(n, d_0, \lambda_0)$-expanders can be explicitly constructed in polynomial time.
Let $I= (V,E,\Sigma,\mathcal R)$ be a $d$-regular CSP instance with $m=|E|$.
We consider the following procedure, called \Call{DegreeReduction}{}, to construct an instance $I' = (V',E',\Sigma',\mathcal R')$:
\begin{itemize}
    \item Replace each vertex $v \in V$ by $d$ many vertices to get the new vertex set $V'$. 
    Denote the set of new vertices corresponding to $v$ by $\mathrm{cloud}(v)$. 
    Each vertex in $\mathrm{cloud}(v)$ naturally corresponds with a neighbor of $v$ from $G=(V,E)$.
    \item For each edge $e \in E$, place an ``inter-cloud'' edge $e'$ in $E'$ between the associated cloud vertices. 
    This gives exactly one inter-cloud edge per vertex in $V'$.
    Whatever the old constraint $R_e$ on $e$ was, put the exact same constraint on $e'$.
    \item For each $v \in V$, put a $(d, d_0, \lambda_0)$-expander on $\mathrm{cloud}(v)$ given by \Cref{lem:expander-construction}. 
    Further, put equality constraints on these expander edges.
\end{itemize}
We can observe that in this process each new vertex in $V'$ has degree exactly equal to $d_0 + 1$.
Thus we have created a $(d_0 + 1)$-regular graph, as desired. 
Also the number of newly added edges is equal to $\sum_{u\in V} d d_0/2 = d_0 \sum_{u\in V} d/2 = d_0m$, and hence $m'=(d_0 +1)m$.
Note that the domain $\Sigma$ does not change and in particular $\Sigma'=\Sigma$. A depiction is given in \Cref{fig:degreeRed}.

\begin{figure}[h]
    \centering
    \begin{subfigure}[b]{0.4\textwidth}
    \centering
    \begin{tikzpicture}[scale=0.9]
        \node[circle, draw=black!80, fill=green!9, minimum size=1pt, thick] at (0,0)(v1){\small \phantom{a}};

        \node[circle, draw=black!80, fill=red!12, minimum size=1pt, thick] at (1,1.4)(n1){\small \phantom{a}};

        \node[circle, draw=black!80, fill=red!12, minimum size=1pt, thick] at (0,-1.5)(n2){\small \phantom{a}};

        \node[circle, draw=black!80, fill=red!12, minimum size=1pt, thick] at (1.5,-0.5)(n3){\small \phantom{a}};

        \node[circle, draw=black!80, fill=red!12, minimum size=1pt, thick] at (-1.5,0.5)(n4){\small \phantom{a}};

        \draw[very thick, draw=black!80] (v1) --(n1);

        \draw[very thick, draw=black!80] (v1) --(n2);

        \draw[very thick, draw=black!80] (v1) --(n3);

        \draw[very thick, draw=black!80] (v1) --(n4);

        \node at (-.2,0.5)(tau2){$v$};
    \end{tikzpicture}
    \caption{Vertex $v$ prior to \Call{DegreeReduction}{}.}
    \end{subfigure}
    \quad
    \begin{subfigure}[b]{0.4\textwidth}
    \centering
    \begin{tikzpicture}[scale=0.9]
        \filldraw[color=black!0,rounded corners=10mm, fill=green!9, very thick] (-0.8,-2.2) -- (2.2,-1.3) -- (1.8,2) --(-2,1) --cycle;

        \node[circle, draw=black!80, fill=betterYellow!16, minimum size=1pt, thick] at (1.2,1.3)(v1){\small \phantom{a}};

        \node[circle, draw=black!80, fill=betterYellow!16, minimum size=1pt, thick] at (-0.3,-1.4)(v2){\small \phantom{a}};

        \node[circle, draw=black!80, fill=red!12, minimum size=1pt, thick] at (2,2.4)(n1){\small \phantom{a}};

        \node[circle, draw=black!80, fill=red!12, minimum size=1pt, thick] at (0,-2.5)(n2){\small \phantom{a}};

        \node[circle, draw=black!80, fill=betterYellow!16, minimum size=1pt, thick] at (1.5,-0.8)(v3){\small \phantom{a}};

        \node[circle, draw=black!80, fill=red!12, minimum size=1pt, thick] at (2.5,-1.5)(n3){\small \phantom{a}};

        \node[circle, draw=black!80, fill=betterYellow!16, minimum size=1pt, thick] at (-1.,0.5)(v4){\small \phantom{a}};

        \node[circle, draw=black!80, fill=red!12, minimum size=1pt, thick] at (-2.5,1.5)(n4){\small \phantom{a}};

        \draw[very thick, draw=black!80] (v1) --(n1);

        \draw[very thick, draw=black!80] (v2) --(n2);

        \draw[very thick, draw=black!80] (v3) --(n3);

        \draw[very thick, draw=black!80] (v4) --(n4);

        \draw[very thick, draw=black!80] (v4) --(v1);

        \draw[very thick, draw=black!80] (v1) --(v2);

        \draw[very thick, draw=black!80] (v2) --(v3);

        \draw[very thick, draw=black!80] (v4) --(v3);

        \node at (-.3,1.8)(tau2){$\mathsf{cloud}(v)$};

    \end{tikzpicture}
    \caption{$\mathsf{cloud}(v)$ after \Call{DegreeReduction}{}.}
    \end{subfigure}
    \caption{\Call{DegreeReduction}{} on a vertex $v$ with $d=4$. The intra-cloud edges represent an expander with $d_0=2$.}
    \label{fig:degreeRed}
\end{figure}
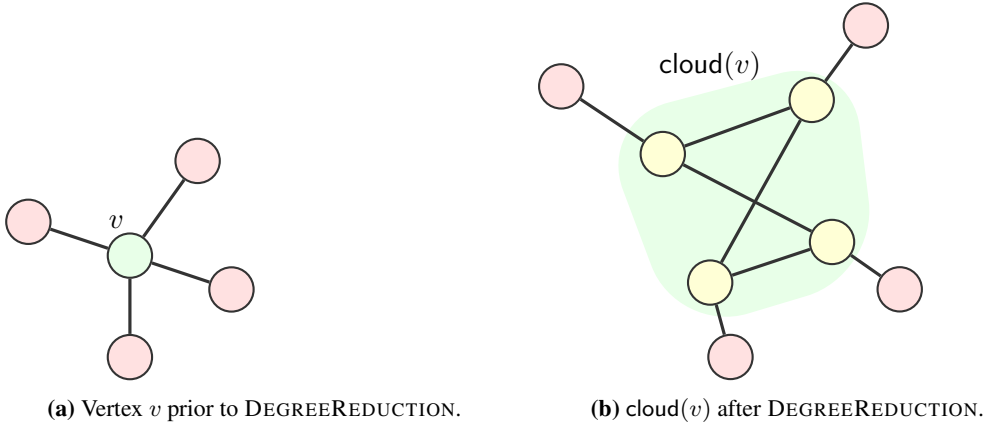

We now show how the sensitivity bound translates through the transformation.
\begin{lemma}\label{lem:degree-reduction}
    Let $\mathcal I$ be a swap-closed family of binary $d$-regular CSP instances over $\Sigma$ and let $\mathcal I' = \Call{DegreeReduction}{\mathcal I}$.
    Then for any $\varepsilon>0$, we have  
    \[
        \swapsens_{1,1-\varepsilon'}(\swapclo(\mathcal I')) \geq d \cdot \swapsens_{1,1-\varepsilon}(\mathcal I),
    \] 
    for $\varepsilon' := \varepsilon/C$, where $C>0$ is a universal constant.
\end{lemma}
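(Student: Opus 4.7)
The plan is to invoke \Cref{lem:sensitivity-reduction} with the pair $(T_I, T_\sigma)$, where $T_I$ is the procedure \Call{DegreeReduction}{} described above, and $T_\sigma$ is the following randomized procedure: given $\sigma' : V' \to \Sigma$, for each $v \in V$ I sample an index $\bm i_v \in [d]$ independently and uniformly at random and set $\bm \sigma(v) := \sigma'(v_{\bm i_v})$, where $v_1,\ldots,v_d$ denote the vertices of $\mathrm{cloud}(v)$. It then suffices to show that $(T_I, T_\sigma)$ is a $(1,\,1-\varepsilon,\,1,\,1-\varepsilon',\,1,\,1/d)$-sensitivity-preserving reduction for a suitable $\varepsilon' = \Theta(\varepsilon)$, since plugging $C_I \cdot C_\sigma = 1/d$ into \Cref{lem:sensitivity-reduction} gives exactly the claimed factor of $d$.

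The easy conditions of \Cref{def:sensitivity-reduction} I would verify directly. For \emph{completeness} ($\mathsf{opt}(I) = 1 \Rightarrow \mathsf{opt}(I') = 1$), I copy a satisfying assignment of $I$ across each cloud: this trivially satisfies all equality constraints within clouds and, since every inter-cloud constraint inherits its relation from $I$, all of those as well. The \emph{instance-distance} bound $C_I = 1$ is immediate: swapping a constraint of $I$ on an edge $e$ changes exactly one constraint of $I'$, namely the one on the unique inter-cloud edge corresponding to $e$, while leaving every equality constraint untouched. For the \emph{assignment-distance} bound $C_\sigma = 1/d$, I would use the coupling that samples a single shared vector of indices $\{\bm i_v\}_{v \in V}$ and evaluates $T_\sigma$ on $\sigma'$ and $\tilde\sigma'$ under the same sample; this gives $\E\,\Ham(T_\sigma(\sigma'), T_\sigma(\tilde\sigma')) = \sum_v \Pr[\sigma'(v_{\bm i_v}) \neq \tilde\sigma'(v_{\bm i_v})] = \Ham(\sigma',\tilde\sigma')/d$, and composing with an optimal coupling of the random $\bm\sigma'$, $\tilde{\bm\sigma}'$ yields the desired $\EMD$ bound.

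The bulk of the work is the \emph{soundness} condition, for which I plan to show that there is a constant $C = C(|\Sigma|, d_0, \lambda_0)$ such that $\E_{T_\sigma}[\cost_I(T_\sigma(\sigma'))] \leq C \cdot \cost_{I'}(\sigma')$ for every $\sigma' : V' \to \Sigma$. I would argue this via the plurality assignment, as in standard Dinur-style PCP analyses. Let $\sigma^*(v)$ be the most frequent value that $\sigma'$ assigns to $\mathrm{cloud}(v)$ and let $\beta_v$ be the fraction of vertices in $\mathrm{cloud}(v)$ disagreeing with $\sigma^*(v)$. Applying the edge-expansion property of each $(d, d_0, \lambda_0)$-expander to the set of non-plurality vertices of $\mathrm{cloud}(v)$, and using $p_v^{\sigma^*(v)} \geq 1/|\Sigma|$, I obtain at least $(d_0 - \lambda_0) d \beta_v / |\Sigma|$ violated intra-cloud equality constraints in $\mathrm{cloud}(v)$; summing and normalizing gives $\frac{1}{n}\sum_v \beta_v = O(\cost_{I'}(\sigma'))$. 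Next, any inter-cloud edge $(u_i, v_j)$ violated by $\sigma^*$ in $I$ but not by $\sigma'$ in $I'$ must have $\sigma'(u_i) \neq \sigma^*(u)$ or $\sigma'(v_j) \neq \sigma^*(v)$, which is paid for by a non-plurality endpoint; counting gives $\cost_I(\sigma^*) = O(\cost_{I'}(\sigma'))$. Finally, since $T_\sigma(\sigma')$ disagrees with $\sigma^*$ at $v$ with probability exactly $\beta_v$, a union bound over endpoints of each $e \in E$ yields $\E_{T_\sigma}[\cost_I(T_\sigma(\sigma'))] \leq \cost_I(\sigma^*) + (2/n)\sum_v \beta_v = O(\cost_{I'}(\sigma'))$.

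The main obstacle is the expander-mixing bookkeeping when no single value has a strict majority in $\mathrm{cloud}(v)$: one must still extract enough intra-cloud violations to pay for $\beta_v$, which is the source of the $|\Sigma|$ factor in $C$. Because $\mathcal I$ consists of binary CSPs over a fixed alphabet, this constant is absolute, so the reduction goes through with $\varepsilon' = \varepsilon / C$, concluding the proof via \Cref{lem:sensitivity-reduction}.
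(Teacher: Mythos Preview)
Your approach is essentially the same as the paper's: both use the procedure $T_\sigma$ that samples a uniformly random vertex from each cloud, and both verify that $(\Call{DegreeReduction}{},T_\sigma)$ is a $(1,1-\varepsilon,1,1-\varepsilon',1,1/d)$-sensitivity-preserving reduction and then invoke \Cref{lem:sensitivity-reduction}. The completeness, $C_I=1$, and $C_\sigma=1/d$ arguments are identical.

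The soundness analyses differ in a way worth noting. You route through the deterministic plurality assignment $\sigma^*$: bound $\cost_I(\sigma^*)$ and the deviation $\E[\cost_I(T_\sigma(\sigma'))]-\cost_I(\sigma^*)$ separately, using expansion only on the single cut $(C_{\sigma^*(v)},\overline{C_{\sigma^*(v)}})$ in each cloud. The paper instead works directly with the random assignment: it defines $\bm S^u=\{w\in\mathrm{cloud}(u):\sigma'(w)\neq\bm\sigma(u)\}$, computes $\E|\bm S^u|=|\mathrm{cloud}(u)|\sum_a p^u_a(1-p^u_a)$, and bounds the number of violated intra-cloud edges by $\tfrac{\phi}{2}\sum_a\min\{|C^u_a|,|\overline{C^u_a}|\}\geq\tfrac{\phi}{2}\E|\bm S^u|$ via the elementary inequality $\min\{p,1-p\}\geq p(1-p)$, applying expansion to \emph{every} color class rather than just the plurality one.

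The practical consequence is that your constant $C$ carries a factor of $|\Sigma|$ (coming from the lower bound $p^v_{\sigma^*(v)}\geq 1/|\Sigma|$), whereas the paper's $C$ depends only on the fixed expander parameters $d_0,\lambda_0$. Since the lemma asserts a \emph{universal} constant and is later applied (in \Cref{lem:powering+degree-reduction}) to instances whose alphabet has size $|\Sigma|^{1+d+\cdots+d^t}$, your version technically proves a weaker statement. The fix is easy---sum the expansion bound over all color classes as the paper does---but as written your final sentence (``this constant is absolute'') does not quite deliver what the lemma promises.
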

We note that this process indeed \emph{increases} the sensitivity lower bound, which is crucial for offsetting the sensitivity decrease in the gap amplification and alphabet reduction steps.
\begin{proof}[Proof of \Cref{lem:degree-reduction}]
    Let $I = (V, E, \Sigma, \mathcal R) \in \mathcal I$ and $I' = (V',E', \Sigma,\mathcal R') = \Call{DegreeReduction}{\mathcal I}$.
    We consider an algorithm $T_\sigma$ that, given an assignment $\sigma': V' \to \Sigma$ for $I'$, constructs an assignment $\bm \sigma:V\to \Sigma$ for $I$ by setting $\bm \sigma(u) = \sigma'(\bm u')$, where $\bm u' \in V'$ is a vertex sampled from $\mathrm{cloud}(u)$ uniformly at random.

    We show that the pair $(\Call{DegreeReduction}{},T_\sigma)$ is a $(1,1-\varepsilon,c'=1,s' = 1-\varepsilon',C_I=1,C_\sigma = 1/d)$-sensitivity-preserving reduction, from which the claim follows.
    As the analysis for $c'=1$ and $C_I=1$ is obvious, we analyze $s'$ and $C_\sigma$ below.

    Let $\bm \sigma'$ be an assignment for $I'$ with $\E \val_{I'}(\bm \sigma') \geq 1-\varepsilon' = s'$.
    Our goal is to show that $\E \val_I(\bm \sigma) \geq 1-\varepsilon$.
    We condition on $\bm \sigma' = \sigma'$ and aim to show that $\E[\cost_I(\bm \sigma)] \leq C \cdot \E[\cost_{I'}(\sigma')]$. 
    We then obtain the result by unconditioning  $\bm \sigma'$.

    For a vertex $u \in V$, let us define $\bm S^u$ to be the set of vertices in $\mathrm{cloud}(u)$ on which $\sigma'$ disagrees with $\bm \sigma(u)$. 
    Suppose $e = (u, v) \in E$ is one of the edges in $I$ that are violated by $\bm \sigma$. 
    Let $e'$ be the corresponding inter-cloud edge in $E'$. 
    The key observation is that either $\sigma'$ violates the edge $e'$ or one of the endpoints of $e'$ belongs to $\bm S^u$ or $\bm S^v$. 
    Thus we have
    \[
        \E \cost_{I}(\bm \sigma) \cdot m
        \leq \E[\cost_{I'}(\sigma')m' + \sum_{u \in V}|\bm S^u|]. 
    \]
    It follows that either
    \begin{itemize}
        \item[a)] $\E \cost_{I'}(\sigma') \cdot m' \geq \E \cost_{I}(\bm \sigma) \cdot m/2$, or
        \item[b)] $\sum_{u\in V} \E|\bm S^u| \geq \E \cost_{I}(\bm \sigma) \cdot m/2$.
    \end{itemize}
    In case a), we obtain
    \[
        \cost_{I'}(\sigma ') \cdot m' 
        \geq \frac{\E \cost_{I}(\bm \sigma) \cdot m}{2} = \frac{\E \cost_{I}(\bm \sigma) \cdot m'}{2(d_0 + 1)}.
    \]
    Hence, we have $\E \cost_{I}(\bm \sigma) \leq C \cdot \E \cost_{I'}(\sigma')$ by setting $C \geq 2(d_0+1)$.

    To handle case b), for each label $a \in \Sigma$, let $C^u_a = (\sigma')^{-1}(a) \cap \mathrm{cloud}(u)$ be the set of vertices in $\mathrm{cloud}(u)$ that are labelled $a$ by $\sigma'$ and $p^u_a = \frac{|C^u_a|}{|\mathrm{cloud}(u)|}$ be its fraction.
    Then, note that
    \[
    \E |\bm S^u| = \sum_{a \in \Sigma}p^u_a |\mathrm{cloud}(u) \setminus C^u_a| = |\mathrm{cloud}(u)| \cdot \sum_{a \in \Sigma}p^u_a(1-p^u_a), 
    \]
    where the first equality follows as $p_a^u$ is the probability that we choose label $a$ as our label for $\bm \sigma$, and hence every vertex in $\mathrm{cloud}(u) \setminus C^u_a$ differs from the label given to $\bm \sigma$.
    Note that every edge between $C^u_a$ and $C^u_b$ for $a \neq b$ is violated by $\sigma'$ because they are all labelled with ``equality'' constraints.
    Then by the fact that the cloud is an expander, 
    there exists a constant $\phi>0$ that is determined by $d_0$ and $\lambda_0$ such that the number of edges in $\mathrm{cloud}(u)$ violated by $\sigma'$ is at least
    \begin{align*}
        & \frac{\phi}{2} \sum_{a \in \Sigma} \min\{|C^u_a|, |\mathrm{cloud}(u) \setminus C^u_a|\} 
        = \frac{\phi |\mathrm{cloud}(u)|}{2} \cdot \sum_{a \in \Sigma} \min\{p^u_a, 1-p^u_a\} \\
        & \geq \frac{\phi |\mathrm{cloud}(u)|}{2} \cdot \sum_{a \in \Sigma} p^u_a(1-p^u_a)
        = \frac{\phi}{2} \E |\bm S^u|.
    \end{align*}

    Therefore $\sigma'$ violates at least the following number of edges:
    \begin{align*}
        & \cost_{I'}(\sigma') \cdot m'
        \geq \frac{\phi}{2} \sum_{u \in V} \E|\bm S^u| \\
        & \geq \frac{\phi \E \cost_I(\bm \sigma) \cdot m}{4} \tag{since we are in case (b)} \\
        & = \frac{\phi \E \cost_I(\bm \sigma) \cdot m'}{4(d_0 + 1)}.
    \end{align*}
    Hence, we have $\E \cost_I(\bm \sigma) \leq C \cdot \cost_{I'}(\sigma')$ by setting $C \geq 4(d_0+1)/\phi$.

    Next, we analyze the value of $C_\sigma$.
    Let $I,\tilde I \in \mathcal I$ be CSP instances and let $I'=\Call{DegreeReduction}{I}$ and $\tilde I'=\Call{DegreeReduction}{\tilde I}$.
    Let $\sigma',\tilde \sigma':V'\to \Sigma$ be assignments for $\mathcal I',\tilde{\mathcal I}'$, respectively.
    Let $\pi \in \Pi(\bm \sigma',\tilde{\bm \sigma}')$ such that 
    $\mathrm{EMD}(\bm \sigma',\tilde{\bm \sigma}') = \E_{(\sigma',\tilde \sigma') \sim \pi}\mathrm{Ham}(\sigma',\tilde \sigma')$.
    Then, we have 
    \begin{align*}
        & \EMD(\bm \sigma,\tilde{\bm \sigma}) 
        \leq \E_{(\sigma',\tilde \sigma') \sim \pi} \sum_{u \in V} \TV(\bm \sigma(u) \mid \sigma', \tilde{\bm \sigma}(u) \mid \tilde \sigma')      \\
        & = \E_{(\sigma',\tilde \sigma') \sim \pi}  \sum_{u \in V} \frac{1}{2}\sum_{a \in \Sigma} |\frac{|( \sigma')^{-1}(a) \cap \mathrm{cloud}(u)|}{|\mathrm{cloud}(u)|} - \frac{|(\tilde{\sigma}')^{-1}(a) \cap \mathrm{cloud}(u)|}{|\mathrm{cloud}(u)|}| \\
        & = \frac{1}{d} \E_{(\sigma',\tilde \sigma') \sim \pi} \sum_{u \in V'}\bm{1}[\sigma'(u) \neq \tilde{\sigma}'(u)] \\
        & = \frac{\EMD(\bm \sigma', \tilde{\bm \sigma}')}{d}.
    \end{align*}
    Hence, the choice $C_\sigma = 1/d$ satisfies \Cref{item:reduction-assignment-distance} of \Cref{def:sensitivity-reduction}.
\end{proof}

We will need a variant of \Cref{lem:degree-reduction} in the alphabet reduction step, which we discuss below.
First, we note that \Call{DegreeReduction}{} can be extended to non-regular instances by introducing  $\deg(v)$ copies of each vertex $v \in V$.
Note that the resulting graph is $(d_0 + 1)$-regular.

\paragraph{Marked CSPs.}
A \emph{marked CSP instance} $\hat I=(V,E,\Sigma,\mathcal R,S)$ consists of a CSP instance $(V,E,\Sigma,\mathcal R)$ and a set of \emph{marked vertices} $S\subseteq V$.\\

For a marked CSP instance $\hat I$, we measure the sensitivity of an algorithm $A$ using marked vertices only.
Specifically, we define
\[
    \sens(A,\hat I) = \max_{e \in E}\Ham(A(I)\mid_S, A(I-e)\mid_S),
\]
where for an assignment $\sigma : V \to \Sigma$, $\sigma|_S : S \to  \Sigma$ is a restriction of $\sigma$ on $S$.
Then, we can define other sensitivity-related notions for marked CSP instances using the definition above.

For a marked instance $\hat I$, we define  $\Call{DegreeReduction}{\hat I}$ as a marked instance $(V',E',\Sigma,\mathcal R',S')$, where $(V',E',\Sigma,\mathcal R')= \Call{DegreeReduction}{I}$ and $S' = \bigcup_{v \in S}\mathrm{cloud}(v)$.
We can define the swap-closed property and the swap-closure for a family of marked instances naturally.
The proof of \Cref{lem:degree-reduction} gives the following:
\begin{corollary}\label{cor:degree-reduction}
    Let $\hat{\mathcal I}$ be a swap-closed family of binary marked CSP instances over $\Sigma$, where every marked vertex has degree at least $d$, and let $\hat{\mathcal I}' = \Call{DegreeReduction}{\hat{\mathcal I}}$.
    Then for any $\varepsilon>0$, we have  
    \[
        \swapsens_{1,1-\varepsilon'}(\swapclo(\hat{\mathcal I}')) \geq d \cdot \swapsens_{1,1-\varepsilon}(\hat{\mathcal I}),
    \] 
    for $\varepsilon' := \varepsilon/C$, where $C>0$ is a universal constant.  
\end{corollary}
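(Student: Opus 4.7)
The plan is to replay the proof of \Cref{lem:degree-reduction} almost verbatim, using the non-regular extension of \Call{DegreeReduction}{} mentioned just above the statement and a straightforward marked analogue of \Cref{def:sensitivity-reduction} and \Cref{lem:sensitivity-reduction} in which item~\ref{item:reduction-assignment-distance} uses $\EMD$ restricted to the marked sets $S$ and $S'$. The only nontrivial change lies in the bound on $C_\sigma$, and this is exactly where the hypothesis that every marked vertex has degree at least $d$ is used. Define $T_\sigma$ exactly as in the proof of \Cref{lem:degree-reduction}: given $\sigma' : V' \to \Sigma$ for $I'$, set $\bm\sigma(u) = \sigma'(\bm u')$ where $\bm u'$ is sampled uniformly from $\mathrm{cloud}(u)$. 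I will show that $(\Call{DegreeReduction}{}, T_\sigma)$ is a $(1, 1-\varepsilon, 1, 1-\varepsilon/C, 1, 1/d)$-sensitivity-preserving reduction and then invoke the marked version of \Cref{lem:sensitivity-reduction}.

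Completeness ($c' = 1$) and $C_I = 1$ are immediate: a satisfying assignment for $I$ lifts to one on $I'$ by giving each cloud a constant label, and a single constraint swap in $I$ produces exactly one swap on the corresponding inter-cloud edge of $I'$. The soundness bound $s' = 1 - \varepsilon/C$ is obtained by copying the proof of \Cref{lem:degree-reduction} verbatim, because that argument only manipulates the total fractions of violated constraints and the total number of edges and therefore makes no use either of regularity of $I$ or of the marked structure. Concretely, $m' = (d_0+1)m$ still holds in the non-regular setting via $\sum_v \deg(v) = 2m$, and each $\mathrm{cloud}(u)$ is a $(|\mathrm{cloud}(u)|, d_0, \lambda_0)$-expander so the same edge-expansion constant $\phi > 0$ yields $|\{\text{violated edges in } \mathrm{cloud}(u)\}| \geq (\phi/2)\,\E|\bm S^u|$, leaving cases (a) and (b) completely unchanged.

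The substantive step is the bound $C_\sigma = 1/d$ on the marked EMD. Fix an optimal coupling of $\bm\sigma'$ and $\tilde{\bm\sigma}'$ witnessing $\EMD(\bm\sigma'|_{S'}, \tilde{\bm\sigma}'|_{S'})$, where $S' = \bigcup_{u \in S}\mathrm{cloud}(u)$. Since $\bm\sigma(u)$ depends only on values of $\sigma'$ inside $\mathrm{cloud}(u)$, the triangle inequality for $\EMD$ gives
\[
    \EMD(\bm\sigma|_S, \tilde{\bm\sigma}|_S)
    \;\leq\;
    \sum_{u \in S} \E\, \TV\bigl(\bm\sigma(u) \mid \sigma',\; \tilde{\bm\sigma}(u) \mid \tilde\sigma'\bigr),
\]
and the same computation as in \Cref{lem:degree-reduction} shows the conditional TV equals $\tfrac{1}{|\mathrm{cloud}(u)|}\sum_{u' \in \mathrm{cloud}(u)} \bm 1[\sigma'(u') \neq \tilde\sigma'(u')]$. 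Here the hypothesis $\deg(u) \geq d$ for every $u \in S$ becomes essential: it implies $|\mathrm{cloud}(u)| \geq d$ for each marked $u$, so the per-vertex contribution is at most $\tfrac{1}{d}\sum_{u' \in \mathrm{cloud}(u)} \bm 1[\sigma'(u') \neq \tilde\sigma'(u')]$. Summing over $u \in S$ and using that the $\mathrm{cloud}(u)$ partition $S'$ yields $\EMD(\bm\sigma|_S, \tilde{\bm\sigma}|_S) \leq \tfrac{1}{d}\EMD(\bm\sigma'|_{S'}, \tilde{\bm\sigma}'|_{S'})$, as required.

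Expected obstacles: there is no genuine obstacle beyond bookkeeping; the proof is essentially an annotation of \Cref{lem:degree-reduction}. The one place that requires a moment of care is making sure that non-marked vertices (which may have arbitrary, possibly very small degree) do not enter the $C_\sigma$ estimate, and this is exactly what happens because only clouds of marked vertices contribute to the restricted EMD. A second minor point is verifying that the marked analogue of \Cref{lem:sensitivity-reduction} holds; its proof is identical to the unmarked case after replacing every $\EMD(A(I), A(\tilde I))$ by $\EMD(A(I)|_S, A(\tilde I)|_S)$, so the conclusion $\swapsens_{1,1-\varepsilon/C}(\swapclo(\hat{\mathcal I}')) \geq d \cdot \swapsens_{1,1-\varepsilon}(\hat{\mathcal I})$ follows.
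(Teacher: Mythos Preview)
Your proposal is correct and matches the paper's approach exactly: the paper itself provides no separate argument, stating only that ``the proof of \Cref{lem:degree-reduction} gives the following,'' and your write-up supplies precisely the bookkeeping one would fill in---the marked analogue of \Cref{lem:sensitivity-reduction}, the unchanged soundness analysis, and the $C_\sigma=1/d$ bound via $|\mathrm{cloud}(u)|\geq d$ for marked $u$.
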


\subsection{Expanderization}\label{subsec:expanderization}
In this section, we introduce a transformation that makes the underlying graph of a CSP instance into an expander.
This subroutine, called \Call{Expanderization}{}, is simple. 
Given a binary $d$-regular instance $I$ on $n$ variables, we just superimpose an $(n, d_0, \lambda_0)$-expander given by \Cref{lem:expander-construction}. (This may lead to multiple edges.) 
On each edge of the expander we simply put a trivial constraint, i.e., a constraint that is always satisfied. A depiction can be seen in \Cref{fig:expanderize}.

\begin{figure}
    \centering
    \begin{subfigure}[b]{0.4\textwidth}
    \centering
    \begin{tikzpicture}[scale=0.8]

        \node[circle, draw=black!80, fill=green!9, minimum size=1pt, thick] at (1.2,1.3)(v1){\small \phantom{a}};

        \node[circle, draw=black!80, fill=green!9, minimum size=1pt, thick] at (-0.3,-1.4)(v2){\small \phantom{a}};

        \node[circle, draw=black!80, fill=green!9, minimum size=1pt, thick] at (0,2.3)(v5){\small \phantom{a}};

        \node[circle, draw=black!80, fill=green!9, minimum size=1pt, thick] at (0,-2.5)(n2){\small \phantom{a}};

        \node[circle, draw=black!80, fill=green!9, minimum size=1pt, thick] at (1.5,-0.8)(v3){\small \phantom{a}};

        \node[circle, draw=black!80, fill=green!9, minimum size=1pt, thick] at (2.5,-1.5)(n3){\small \phantom{a}};

        \node[circle, draw=black!80, fill=green!9, minimum size=1pt, thick] at (-1.,0.5)(v4){\small \phantom{a}};

        \node[circle, draw=black!80, fill=green!9, minimum size=1pt, thick] at (-2.5,1.5)(n4){\small \phantom{a}};

        \draw[very thick, draw=black!80] (v2) --(n2);

        \draw[very thick, draw=black!80] (v5) --(v4);

        \draw[very thick, draw=black!80] (n2) --(n3);

        \draw[very thick, draw=black!80] (v4) --(n4);

        \draw[very thick, draw=black!80] (v3) --(n3);

        \draw[very thick, draw=black!80] (v2) --(v5);

        \draw[very thick, draw=black!80] (v1) --(v3);

    \end{tikzpicture}
    \caption{$G$ before \Call{Expanderization}{}}
    \end{subfigure}
    \quad 
    \begin{subfigure}[b]{0.4\textwidth}
        \centering
        \begin{tikzpicture}[scale=0.8]

        \node[circle, draw=black!80, fill=green!9, minimum size=1pt, thick] at (1.2,1.3)(v1){\small \phantom{a}};

        \node[circle, draw=black!80, fill=green!9, minimum size=1pt, thick] at (-0.3,-1.4)(v2){\small \phantom{a}};

        \node[circle, draw=black!80, fill=green!9, minimum size=1pt, thick] at (0,2.3)(v5){\small \phantom{a}};

        \node[circle, draw=black!80, fill=green!9, minimum size=1pt, thick] at (0,-2.5)(n2){\small \phantom{a}};

        \node[circle, draw=black!80, fill=green!9, minimum size=1pt, thick] at (1.5,-0.8)(v3){\small \phantom{a}};

        \node[circle, draw=black!80, fill=green!9, minimum size=1pt, thick] at (2.5,-1.5)(n3){\small \phantom{a}};

        \node[circle, draw=black!80, fill=green!9, minimum size=1pt, thick] at (-1.,0.5)(v4){\small \phantom{a}};

        \node[circle, draw=black!80, fill=green!9, minimum size=1pt, thick] at (-2.5,1.5)(n4){\small \phantom{a}};

        \draw[very thick, draw=black!80] (v2) --(n2);

        \draw[very thick, draw=black!80] (v5) --(v4);

        \draw[very thick, draw=black!80] (n2) --(n3);

        \draw[very thick, draw=black!80] (v4) --(n4);

        \draw[very thick, draw=black!80] (v3) --(n3);

        \draw[very thick, draw=black!80] (v2) --(v5);

        \draw[very thick, draw=black!80] (v1) --(v3);

        \draw[very thick, draw=red!80] (v1) --(v2);
        \draw[very thick, draw=red!80] (v4) --(v1);

        \draw[very thick, draw=red!80] (v2) --(n3);

        \draw[very thick, draw=red!80] (n4) --(v5);

        \draw[very thick, draw=red!80] (v3) --(v5);

        \draw[very thick, draw=red!80] (v3) --(n2);

        \draw[very thick, draw=red!80] (v3) --(v4);

        \draw[very thick, draw=red!80] (v2) --(n4);

    \end{tikzpicture}
    \caption{$G$ after \Call{Expanderization}{}}
    \end{subfigure}
    \caption{A graph $G$ before and after \Call{Expanderization}{}. The edges of the expander are marked in red.}
    \label{fig:expanderize}
\end{figure}
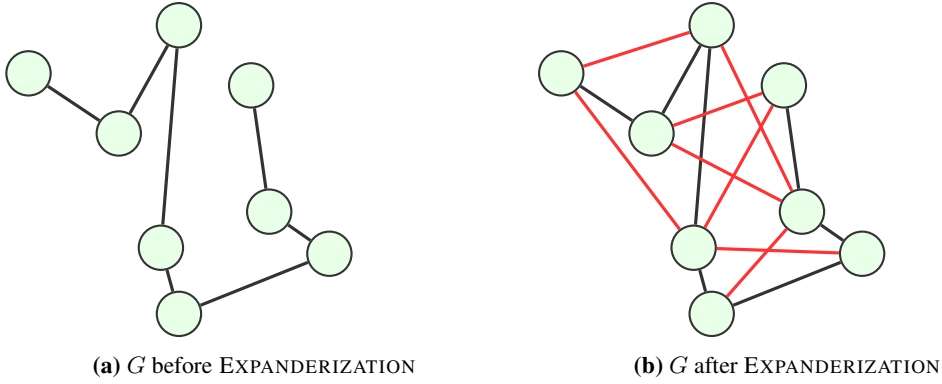

Let us now record some parameters of $G'$.
The new graph is regular with degree $d + d_0$ and the new number of edges is $n(d + d_0)/2$.
Also, the new constraint graph is indeed a constant degree expander because the new $\lambda'$ is at most $d + \lambda_0 < d + d_0$ by \Cref{lem:expander-superimpose}.
Note that domain $\Sigma$ does not change and so $\Sigma'=\Sigma$.

We now show how the sensitivity bound translates through the transformation.
\begin{lemma}\label{lem:expanderization}
    Let $\mathcal I$ be a swap-closed family of binary $d$-regular instances of a CSP over $\Sigma$, and let $\mathcal I' = \Call{Expanderize}{\mathcal I}$.
    Then for any $\varepsilon>0$, we have
    \[
        \swapsens_{1,1-\varepsilon'}(\swapclo(\mathcal I')) \geq \swapsens_{1,1-\varepsilon}(\mathcal I)
    \] 
    for $\varepsilon' := \varepsilon/C$, where $C>0$ is a universal constant.
\end{lemma}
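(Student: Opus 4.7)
The plan is to invoke the sensitivity-preserving reduction framework (\Cref{def:sensitivity-reduction} and \Cref{lem:sensitivity-reduction}) with $T_I(I) = \Call{Expanderize}{I}$ and $T_\sigma$ the identity map, which is well-defined because expanderization preserves the vertex set ($V' = V$). I will verify the four conditions of \Cref{def:sensitivity-reduction} with constants $C_I = C_\sigma = 1$, so that \Cref{lem:sensitivity-reduction} immediately yields $\swapsens_{1,1-\varepsilon'}(\swapclo(\mathcal I')) \geq \swapsens_{1,1-\varepsilon}(\mathcal I)$.

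The completeness condition is trivial: the superimposed expander edges carry trivially-satisfied constraints, so every satisfying assignment for $I$ remains satisfying for $I' = T_I(I)$, giving $\opt(I') \geq \opt(I)$. The soundness condition is the only step requiring a small computation. Since the expander edges never contribute to $\cost$, the absolute number of violated constraints is unchanged by $T_I$, i.e., $\cost_I(\sigma)\cdot m = \cost_{I'}(\sigma)\cdot m'$ with $m=nd/2$ and $m'=n(d+d_0)/2$. Rearranging gives $\cost_I(\sigma) = \frac{d+d_0}{d}\cdot \cost_{I'}(\sigma) \leq (1+d_0)\cdot \cost_{I'}(\sigma)$ whenever $d \geq 1$, and taking expectations (with $\bm \sigma = \bm \sigma'$) then shows that any $\bm \sigma'$ with $\E[\cost_{I'}(\bm \sigma')] \leq \varepsilon/C$ satisfies $\E[\cost_I(\bm \sigma)] \leq \varepsilon$, provided $C \geq 1 + d_0$. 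Since $d_0$ is the universal constant from \Cref{lem:expander-construction}, this choice of $C$ is universal.

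Conditions \Cref{item:reduction-instance-distance} and \Cref{item:reduction-assignment-distance} are immediate. The superimposed expander depends only on $n$ and the explicit construction from \Cref{lem:expander-construction}, so it is identical for every instance in $\mathcal I$; consequently a single swap in $I$ translates to a single swap on the corresponding non-expander edge of $T_I(I)$, giving $\swapdist(T_I(I),T_I(\tilde I)) = \swapdist(I,\tilde I)$ and hence $C_I = 1$. Similarly, because $T_\sigma$ is the identity on the shared vertex set, $\EMD(T_\sigma(\bm \sigma'), T_\sigma(\tilde{\bm \sigma}')) = \EMD(\bm \sigma', \tilde{\bm \sigma}')$, giving $C_\sigma = 1$.

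I do not foresee any real obstacle. In contrast to the degree reduction proof, no expander mixing argument is needed here because the expander edges do not interact with the assignment values at all; the entire cost analysis reduces to counting edges. The only subtlety is to observe that the universal constant $C$ can be chosen independently of $d$, which follows from $(d+d_0)/d \leq 1 + d_0$ for $d \geq 1$.
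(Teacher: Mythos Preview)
Your proof is correct and follows essentially the same approach as the paper: both use the identity map $T_\sigma$ and verify that $(\Call{Expanderize}{},T_\sigma)$ is a $(1,1-\varepsilon,1,1-\varepsilon',1,1)$-sensitivity-preserving reduction, then invoke \Cref{lem:sensitivity-reduction}. Your explicit observation that $(d+d_0)/d \leq 1+d_0$ for $d\geq 1$, making $C$ genuinely universal rather than $d$-dependent, is a useful detail that the paper's proof leaves implicit.
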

\begin{proof}
    Let $I = (V, E, \Sigma, \mathcal R) \in \mathcal I$ and $I' = (V,E', \Sigma, \mathcal R') = \Call{Expanderize}{\mathcal I}$.
    We consider a trivial algorithm $T_\sigma$ that, given an assignment $\bm \sigma'$ for $I'$, outputs $\bm \sigma = \bm \sigma'$ as an assignment for $I$.

    We show that the pair $(\Call{Expanderize}{},T_\sigma)$ is a $(1,1-\varepsilon,c'=1,s' = 1-\varepsilon',C_I=1,C_\sigma=1)$-sensitivity-preserving reduction.
    Then the claim follows by \Cref{lem:sensitivity-reduction}.
    As the analysis for  $c'=1$, $C_I=1$, and $C_\sigma=1$ is trivial, we analyze $s'$ below.
    Let $\bm \sigma'$ be an assignment for $I'$ with $\E \val_{I'}(\bm \sigma') \geq 1-\varepsilon' = s'$.
    Then we have 
    \[
        \varepsilon' m' \geq \E \cost_{I'}(\bm \sigma) \cdot m' = \E \cost_{I}(\bm \sigma) \cdot m
        = \E \cost_{I}(\bm \sigma) \cdot \frac{d m'}{d+d_0}.
    \]
    Hence, we get $\E \cost_{I}(\bm \sigma) \leq \varepsilon m$ by setting $C \geq d/(d+d_0)$.
\end{proof}

\subsection{Gap Amplification}\label{subsec:gap-amplification}

In this section, we consider amplifying the gap between the completeness and soundness.
Here, we use the transformation based on graph powering, introduced by Dinur~\cite{dinur2007pcp}.

\subsubsection{Graph Powering}
Let $t$ be an integer and $I:=(V,E, \Sigma, \mathcal{R}=\{R_e\}_{e \in E})$ be a CSP instance, where the graph $G=(V,E)$ is an $(n,d,\lambda)$-expander.
We construct a new instance $I':=(V,E',\Sigma,\mathcal R'=\{R'_e\}_{e \in E})$, where the constraint $R'_e$ for $e \in E$ is a subset of $\Sigma^{1+d+\cdots +d^t} \times \Sigma^{1+d+\cdots +d^t}$. 
As $G$ is $d$-regular, for any vertex $v \in V$, the number of vertices at distance at most $t$ from $v$ is at most $1+d+\cdots +d^t$. 
In any assignment $\sigma' : V \rightarrow \Sigma^{1+d+\cdots +d^t}$ for $I'$, we will think of each vertex $v \in V$ as having an ``opinion'' about what the value of each vertex $w \in V$ at distance $\leq t$ should be; which is $\sigma'(v)_w$. 
Roughly, the constraints will state that if the edge $(a,b)$ is of distance $\leq t$ from $u$ and $v$ then the assignment $(\sigma'(v)_a, \sigma'(w)_b)$ satisfies the constraint $R_{ab}$. 
As each vertex now is required to answer correctly on all constraints within radius $t$, this should blow up the gap by $O(t)$. 

It will be convenient to define the edges $E'$ and the constraints $\{R'_e\}_{e \in E'}$ by the following process:
\begin{enumerate}
    \item Describe a distribution over edges $e \in E'$ and corresponding constraints $R'_e$.
    \item View this distribution as a weighted graph, where edges have rational weights $w_e$ depending on $d$, $|\Sigma|$, and $t$.
    \item Introduce copies of the constraint $R_e$, where the number of copies is proportional to $w_e$.
\end{enumerate}
In particular, the distribution will be useful for analyzing the change in the gap and the sensitivity. In order to describe this distribution, we will make use of following two types of random walks.
\begin{itemize}
    \item[--] \emph{Before Stopping Random Walk (BSRW).}~ Pick a random vertex $\bm v \in V$ to be the start vertex. Repeat the following: Choose a random neighbour of the current vertex, and move to that vertex, and halt with probability $1/t$.
    \item[--] \emph{After Stopping Random Walk (ASRW).}~ Let $v$ be a given start vertex. Repeat the following: With probability $1/t$ halt. If we did not halt, pick a random neighbour of the current vertex, and move to that vertex. 
\end{itemize}

We now describe the gap amplification step, which is by a reduction known as \Call{Powering}{}.
Given an instance $I=(V,E,\Sigma,\mathcal{R}=\{R_e\}_{e \in E})$, the set $E'$ and the constraints $\{R'_e\}_{e \in E'}$ of the instance $I'$ that we are constructing are defined by the following distribution:
\begin{itemize}
    \item $E'$:  Pick a random vertex $\bm v$. Perform an ASRW from $\bm v$, ending at some vertex $\bm w$. 
    If the length of this walk is greater than $B:=10t\log |\Sigma|$, then do nothing. 
    Otherwise, add an edge $(\bm v,\bm w)$ to $E'$.
    \item $R'_e$: 
    Let $e = (v,w)$ be the edge defined in the previous step. We describe the constraint $R'_e$: for each edge $(a,b)$ traversed in the ASRW used to define $e$, if $d_G(v,a) \leq t$ and $d_G(w,b) \leq t$, then add the constraint $(\sigma'(v)_a, \sigma'(w)_b) \in R_{ab}$ to $R'_{uv}$.
\end{itemize}
We note that the probability that an edge $e = (v,w)$ is chosen in the above distribution is a multiple of $1/(n (dt)^B)$ and we can view the distribution as a weighted CSP instance, where each weight $w_e$ is a multiple of $1/(n(dt)^B)$.
Then, we output the CSP instance $I'$ obtained by copying each constraint $n(dt)^B w_e$-many times.
Note that the underlying graph of $I'$ is $D:=(dt)^B$-regular.

It remains to show that the gap increases by $O(t)$.
Although the sensitivity lower bound decreases by $D$ here, we will later offset this by applying degree reduction.
\begin{lemma}\label{lem:powering}
    Let $\mathcal I$ be a swap-closed family of binary instances of a CSP over $\Sigma$ such that the underlying graph is an $(n,d,\lambda)$-expander, and let $\mathcal I' = \Call{Powering}{\mathcal I}$.
    Then for any $\varepsilon = O(1/t)$, we have 
    \[
        \swapsens_{1,1-\varepsilon'}(\swapclo(\mathcal I')) \geq \frac{1}{8D} \cdot \swapsens_{1,1-\varepsilon}(\mathcal I),
    \] 
    for 
    \[
       D := (dt)^B, \quad
       B := 10t\log |\Sigma|,
       \quad \text{and} \quad
       \varepsilon' := \frac{t}{C |\Sigma|^4} \cdot \varepsilon,
    \]
    where $C>0$ is a universal constant.
\end{lemma}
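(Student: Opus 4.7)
The plan is to construct a sensitivity-preserving reduction $(\Call{Powering}{}, T_\sigma)$ with parameters $c = c' = 1$, $s = 1 - \varepsilon$, $s' = 1 - \varepsilon'$, and $C_I \cdot C_\sigma \leq 8D$, and then apply \Cref{lem:sensitivity-reduction}. Completeness is immediate: given a satisfying assignment $\sigma$ for $I$, the assignment $\sigma'(v)_w := \sigma(w)$ (for every $w$ within distance $t$ of $v$) satisfies $I'$, since each $R'_e$ is a conjunction of base constraints $R_{ab}$ evaluated on the opinions $\sigma'(v)_a = \sigma(a)$ and $\sigma'(w)_b = \sigma(b)$, all satisfied by $\sigma$.

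The decoder $T_\sigma$ is the main point of departure from Dinur's construction. For each $u \in V$, the plan is to compute the label distribution $\pi_u$ on $\Sigma$ defined by sampling $\bm w$ from the ASRW started at $u$ (conditioned on $d_G(u,\bm w) \leq t$) and setting $\pi_u(a) := \Pr[\sigma'(\bm w)_u = a]$. Dinur's plurality decoder is incompatible with any sensitivity bound, since flipping a single entry of $\sigma'$ can swing the plurality label for many vertices. Instead, I would sample $\bm \sigma(u)$ from $\pi_u$ restricted to labels with mass above a randomized threshold $\tau$, drawn uniformly from an interval of width $\Theta(1/|\Sigma|^2)$, and then renormalized. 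The random threshold smooths out discontinuities present at any fixed cutoff, while simultaneously ensuring that only labels with non-trivial mass are ever produced --- essential for the soundness argument below.

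Soundness, the main obstacle, will be obtained by adapting Dinur's gap-amplification argument to the smoothed decoder. Let the \emph{plausible} label set $L_v \subseteq \Sigma$ at each $v$ be the support of the thresholded distribution. I would first show that if $\E[\cost_I(T_\sigma(\sigma'))] \geq \varepsilon$, then an $\Omega(\varepsilon)$-fraction of edges $(a,b) \in E$ are \emph{bad}, meaning $(L_a \times L_b) \cap R_{ab} = \emptyset$. Using that $G$ is an $(n,d,\lambda)$-expander and that the ASRW has expected length $\Theta(t)$, Dinur's expander-mixing argument then implies that a constant fraction of random walks traverse $\Omega(t\varepsilon)$ bad edges; whenever this happens, the resulting $R'_e$ is violated with probability $\Omega(1/|\Sigma|^4)$, since each plausible label at either endpoint of a bad edge carries mass $\Omega(1/|\Sigma|^2)$. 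This yields $\varepsilon' = \Omega(t\varepsilon/|\Sigma|^4)$ as claimed. A conditioning argument over the threshold $\tau$ is needed to decouple $\tau$ from the distribution $\pi_u$ it thresholds, and will be the most delicate technical step.

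Finally, for the sensitivity parameters: swapping $R_{e^*}$ in $I$ affects $R'_e$ in $I'$ only if the walk defining $e$ traverses $e^*$. By stationarity of random walks on the $d$-regular graph $G$ and the $O(t)$ expected walk length, the fraction of walks through $e^*$ is $O(t/(nd))$, so the total multiplicity of affected constraints in $I'$ is $O((nD) \cdot t/(nd)) = O(D)$ after absorbing $t/d$ into the constant. Changing $\sigma'(v)$ at a single vertex perturbs $\pi_u$ by at most $P_u(v)$ in total variation, where $P_u$ is the walk-endpoint distribution started at $u$; by reversibility of random walks on the regular graph, $\sum_u P_u(v) \leq 1$, and composing with the thresholding step loses only a constant factor, giving $C_\sigma = O(1)$. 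Combining these bounds yields $C_I C_\sigma \leq 8D$, and \Cref{lem:sensitivity-reduction} then gives $\swapsens_{1,1-\varepsilon'}(\swapclo(\mathcal I')) \geq \frac{1}{8D} \swapsens_{1,1-\varepsilon}(\mathcal I)$, completing the plan.
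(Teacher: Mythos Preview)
Your overall framework---construct a sensitivity-preserving reduction and apply \Cref{lem:sensitivity-reduction}---matches the paper, as does your $C_\sigma = O(1)$ argument via reversibility of the random walk. Two points deserve comment.

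First, the decoder. You propose a randomized threshold $\tau$; the paper instead uses a \emph{fixed} cutoff at $1/(10|\Sigma|)$, defining $\mu_v^*(a) \propto [\mu_v(a) - 1/(10|\Sigma|)]_+$ and sampling from that. The paper shows directly that this hard truncation is $4$-Lipschitz in $\ell_1$, i.e., $\|\mu_v^* - \tilde\mu_v^*\|_1 \leq 4\|\mu_v - \tilde\mu_v\|_1$, and then combines with $\sum_v \|\mu_v - \tilde\mu_v\|_1 \leq 2$ (reversibility) to get $C_\sigma = 8$. No randomized threshold is needed; the fixed cutoff is already smooth enough.

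Second, and more seriously, your soundness sketch does not go through. The implication ``$\E[\cost_I(T_\sigma(\sigma'))] \geq \varepsilon \Rightarrow$ an $\Omega(\varepsilon)$-fraction of edges $(a,b)$ are bad, meaning $(L_a \times L_b)\cap R_{ab} = \emptyset$'' is false: take $|\Sigma|=2$, $\pi_u \equiv (1/2,1/2)$ for every $u$, and every constraint equality. Then no edge is bad (both $(0,0)$ and $(1,1)$ lie in $R_{ab}$), yet the decoded assignment is uniformly random and violates half the constraints in expectation. The paper avoids this by first \emph{conditioning on the decoded assignment} $\bm\sigma = \sigma$, so that the violated-edge set $F_\sigma$ is deterministic. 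It then defines a ``$\sigma$-faulty'' step in the ASRW to be a traversal of some $(u,v)\in F_\sigma$ for which the walk-endpoint opinions match $\sigma$ (i.e., $\sigma'(x)_u = \sigma(u)$ and $\sigma'(y)_v = \sigma(v)$); any faulty step forces $\sigma'$ to violate the resulting $R'_e$. The fixed threshold guarantees each such opinion matches with probability $\Omega(1/|\Sigma|)$ over the BSRW endpoint, giving $\E[N^*_\sigma] \geq \Omega(t|F_\sigma|/(|\Sigma|^2 m))$. The expander property enters through a second-moment bound $\E[(N^*_\sigma)^2] \leq O(t|F_\sigma|/m)$, after which $\cost_{I'}(\sigma') \geq \Pr[N^*_\sigma > 0] \geq (\E N^*_\sigma)^2/\E[(N^*_\sigma)^2] = \Omega(t|F_\sigma|/(|\Sigma|^4 m))$. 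Unconditioning over $\bm\sigma$ then requires one more moment estimate on $|F_{\bm\sigma}|^2$. Your expander-mixing intuition is in the right place---it is exactly what controls the second moment---but it must be applied to $F_\sigma$ after conditioning, not to the plausible-label ``bad'' set.
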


Let  $I = (V, E, \Sigma, \mathcal R) \in \mathcal I$ and  $I' = (V,E',\Sigma,\mathcal R') = \Call{Powering}{\mathcal I}$.
We design an algorithm $T_\sigma$ that extracts a solution $\bm \sigma$ for the original instance $I$ from an assignment $\bm \sigma': V' \to \Sigma$ for $I'$.
This will be done by another probability distribution, using a random walk and truncating probabilities.
Fix a vertex $v \in V$ and consider the following: perform a BSRW starting from $v$, conditioned on this walk ending within $t$ steps. 
Let $\bm w$ be the final vertex. 
This gives a probability distribution $\mu_v: \Sigma \rightarrow \mathbb{R}$ over opinions $\sigma'(\bm w)_v$ of what $v$'s value should be. 
That is, $\bm w$ contributes the value $\sigma'(\bm w)_v$ towards what $v$'s value should be, weighted by the probability of ending at $\bm w$ in $t$ steps given that we started at $v$. 
Then, we consider the \emph{truncated} distribution $\mu^*_v$, which is defined as
\[
    \mu_v^*(a) = \frac{\qty[\mu_v(a) - \frac{1}{10|\Sigma|}]_+}{\sum_{b \in \Sigma} \qty[\mu_v(b) - \frac{1}{10|\Sigma|}]_+}.
\]
In particular $\mu_v^*(a) >0$ only if $\mu_v(a) > 1/(10|\Sigma|)$.
The value of $\bm \sigma(v)$ is drawn from the distribution $\mu_v^*$.
Our goal is to show that the pair $(\Call{Powering}{},T_\sigma)$ is $(1,1-\varepsilon,c'=1,s'=1-\varepsilon',C_I=D,C_\sigma=8)$-sensitivity-preserving reduction, where $\varepsilon'$ is as in \Cref{lem:powering}.
Then, \Cref{lem:powering} follows by \Cref{lem:sensitivity-reduction}.
The analysis for $c'$ and $C_I$ is clear, and hence we focus on analyzing $s'$ and $C_\sigma$ and discuss them in \Cref{subsubsec:gap-amplification-sensitivity-increase,subsubsec:gap-amplification-gap-analysis}, respectively.

\subsubsection{Sensitivity Increase in the Recovery Procedure}\label{subsubsec:gap-amplification-sensitivity-increase}
In this we section, we show the following:
\begin{lemma}\label{lem:powering-sensitivity}
    The choice $C_\sigma = 8$ satisfies \Cref{item:reduction-assignment-distance} of \Cref{def:sensitivity-reduction}.
\end{lemma}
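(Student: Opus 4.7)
The plan is to reduce everything to a per-vertex total-variation bound. If $\pi$ is the optimal coupling achieving $\EMD(\bm \sigma', \tilde{\bm \sigma}')$, then for each realization $(\sigma', \tilde \sigma') \sim \pi$ we can sample $\bm \sigma(v) \sim \mu_v^*(\sigma')$ and $\tilde{\bm \sigma}(v) \sim \mu_v^*(\tilde \sigma')$ independently across $v$ using the optimal single-vertex coupling, yielding
\[
\EMD(T_\sigma(\bm \sigma'), T_\sigma(\tilde{\bm \sigma}')) \leq \E_\pi\!\left[\sum_{v \in V}\TV(\mu_v^*(\sigma'), \mu_v^*(\tilde \sigma'))\right].
\]
So the main task is to establish the pointwise bound $\sum_v \TV(\mu_v^*(\sigma'), \mu_v^*(\tilde \sigma')) \leq 8 \cdot \Ham(\sigma', \tilde \sigma')$; taking expectations then gives $C_\sigma = 8$.

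I would prove this pointwise bound in two steps. First, I show that the truncation $\mu \mapsto \mu^*$ is $O(1)$-Lipschitz in total variation: $\TV(\mu_v^*, \tilde \mu_v^*) \leq C \cdot \TV(\mu_v, \tilde \mu_v)$ for some small absolute constant $C$. The standard quotient-rule decomposition
\[
\mu_v^*(a) - \tilde \mu_v^*(a) = \frac{f(\mu_v(a)) - f(\tilde \mu_v(a))}{Z} + \frac{(\tilde Z - Z)\,f(\tilde \mu_v(a))}{Z\tilde Z},
\]
combined with the $1$-Lipschitz property of $f(x) = [x - 1/(10|\Sigma|)]_+$, gives this estimate provided we can lower-bound the normalizer $Z = \sum_a f(\mu_v(a))$. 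A short pigeonhole yields $Z \geq 9/10$: the mass $\mu_v$ places on coordinates with $\mu_v(a) \leq 1/(10|\Sigma|)$ is at most $|\Sigma|\cdot 1/(10|\Sigma|) = 1/10$, and truncation subtracts at most $1/10$ from the remaining coordinates.

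Second, I reduce to the case of a single-coordinate change via a hybrid argument. The identity $\mu_v(a) = \sum_w p^{(\leq t)}_{v,w}\, \mathbf{1}[\sigma'(w)_v = a]$, where $p^{(\leq t)}_{v,w}$ denotes the probability that the BSRW from $v$ conditioned on terminating in $\leq t$ steps lands at $w$, shows that if $\sigma'$ and $\tilde \sigma'$ agree except at a single vertex $w_0$, then $\TV(\mu_v, \tilde \mu_v) \leq p^{(\leq t)}_{v, w_0}$. Summing over $v$ and exploiting that on a $d$-regular graph the random-walk transition matrix is symmetric (so $\sum_v p^{(\leq t)}_{v, w_0} = \sum_v p^{(\leq t)}_{w_0, v} = 1$) gives $\sum_v \TV(\mu_v, \tilde \mu_v) \leq 1$ per single-coordinate change, and hence $\sum_v \TV(\mu_v, \tilde \mu_v) \leq \Ham(\sigma', \tilde \sigma')$ in general. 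Composing with the Lipschitz bound completes the argument.

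The only step requiring genuine care is the pigeonhole lower bound on $Z$, which has to hold uniformly over all probability distributions on $\Sigma$ --- crucially, the bound must be independent of $|\Sigma|$, since otherwise $C_\sigma$ would inherit an alphabet dependence that would be fatal later. Once $Z \geq 9/10$ is in place, the constant $8$ leaves comfortable slack over the $\tfrac{20}{9}$ that this approach actually produces.
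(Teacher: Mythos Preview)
Your proposal is correct and follows essentially the same route as the paper: reduce to a single Hamming-distance-one change, show the truncation $\mu\mapsto\mu^*$ is $O(1)$-Lipschitz in total variation via the quotient decomposition together with the normalizer bound $Z\geq 9/10$, and then sum $p^{(\leq t)}_{v,w_0}$ over $v$ to~$1$ using reversibility on a regular graph. Your write-up is in fact slightly more explicit than the paper's about the outer coupling over random $(\bm\sigma',\tilde{\bm\sigma}')$ and about why the \emph{conditioned} transition probabilities still sum to~$1$; one minor remark is that the pigeonhole argument as you phrased it yields only $Z\geq 8/10$, but the cleaner observation $Z=\sum_a[\mu_v(a)-\tfrac{1}{10|\Sigma|}]_+\geq\sum_a(\mu_v(a)-\tfrac{1}{10|\Sigma|})=9/10$ gives the constant you quote, and either bound is comfortably enough for $C_\sigma=8$.
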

\begin{proof}
    Let $I \in \mathcal I$ be an instance and $I'$ be the new instance after \Call{Powering}{}.
    Let $\sigma',\tilde \sigma'$ be two assignments for $I'$ with $\Ham(\sigma',\tilde \sigma')=1$.
    For $v \in V$, let $\mu_v,\tilde \mu_v$ be the distributions constructed as before from $\sigma'$ and $\tilde \sigma'$, and similarly for $\mu_v^*,\tilde \mu_v^*$.
    Let $\bm \sigma,\tilde {\bm \sigma}$ be assignments for $I$ constructed from $\sigma',\tilde \sigma'$, respectively.
    We have    
    \[
        \EMD(\bm \sigma, \tilde{\bm \sigma}) 
        \leq \sum_{v \in V} \TV(\bm \sigma(v), \tilde{\bm \sigma}(v)) 
        \leq \sum_{v \in V} \|\mu^*_v - \tilde \mu^*_v \|_1. 
    \]

    Let $A_v = \sum_{a \in \Sigma}\qty[\mu_v(a) - \frac{1}{10|\Sigma|}]_+$ and $\tilde A_v = \sum_{a \in \Sigma}\qty[\tilde \mu_v(a) - \frac{1}{10|\Sigma|}]_+$.
    Let $\delta_{v,a} := \tilde \mu_v(a) - \mu_v(a)$.
    Note that $A_v \geq 9/10$, $\tilde A_v \geq 9/10$ and 
    \begin{align}
        & \big|A_v - \tilde A_v \big|
        \leq \sum_{a \in \Sigma} \qty |\qty[\mu_v(a) - \frac{1}{10|\Sigma|}]_+ 
        - \qty[\tilde \mu_v(a) - \frac{1}{10|\Sigma|}]_+ | \nonumber \\
        & \leq \sum_{a \in \Sigma} \qty |\qty[\mu_v(a) - \frac{1}{10|\Sigma|}]_+ 
        - \qty[\mu_v(a) + \delta_{v,a} - \frac{1}{10|\Sigma|}]_+ | \nonumber \\   
        & \leq \sum_{a \in \Sigma} |\delta_{v,a}|
        = \|\mu_v - \tilde \mu_v\|_1. \label{eq:powering-sensitivity-1}
    \end{align}
    Then, we have
    \begin{align*}
        & \|\mu_v^* - \tilde \mu_v^*\|_1
        = \sum_{a \in \Sigma} \qty|\frac{\qty[\mu_v(a) - \frac{1}{10|\Sigma|}]_+ }{A_v} - \frac{\qty[\tilde \mu_v(a) - \frac{1}{10|\Sigma|}]_+}{\tilde A_v}| \\
        & = \sum_{a \in \Sigma} \qty|\frac{\qty[\mu_v(a) - \frac{1}{10|\Sigma|}]_+ }{A_v} - \frac{\qty[\mu_v(a) + \delta_{v,a} - \frac{1}{10|\Sigma|}]_+}{\tilde A_v}| \\
        & = \sum_{a \in \Sigma} \qty|\frac{\tilde A_v \cdot \qty[\mu_v(a) - \frac{1}{10|\Sigma|}]_+ - A_v \cdot \qty[\mu_v(a) + \delta_{v,a} - \frac{1}{10|\Sigma|}]_+}{A_v \tilde A_v}| \\
        & \leq \sum_{a \in \Sigma} \qty|\frac{(A_v \pm \|\mu_v - \tilde \mu_v\|_1) \cdot \qty[\mu_v(a) - \frac{1}{10|\Sigma|}]_+ - A_v \cdot \qty[\mu_v(a) + \delta_{v,a} - \frac{1}{10|\Sigma|}]_+}{A_v \tilde A_v}|       \tag{by \eqref{eq:powering-sensitivity-1}} \\
        & \leq 
        \sum_{a \in \Sigma} \qty|\frac{\qty[\mu_v(a) - \frac{1}{10|\Sigma|}]_+ - \qty[\mu_v(a) + \delta_{v,a} - \frac{1}{10|\Sigma|}]_+}{\tilde A_v}|  
        + 
        \sum_{a \in \Sigma} \qty|\frac{\|\mu_v - \tilde \mu_v\|_1 \cdot \qty[\mu_v(a) - \frac{1}{10|\Sigma|}]_+ }{A_v \tilde A_v}| \\
        & \leq 
        \frac{\sum_{a \in \Sigma} |\delta_{v,a}|}{\tilde A_v} 
        + 
        \frac{\|\mu_v - \tilde \mu_v\|_1 }{\tilde A_v}  \\
        & \leq 4\|\mu_v - \tilde \mu_v\|_1,
    \end{align*}
    where the last inequality is because $\tilde A_v \geq 9/10$.
    Let $w \in V$ be such that $\sigma'(w) \neq \tilde \sigma'(w)$ and let $p_{v \rightarrow w}$ be the probability that we reach $w$ from a vertex $v \in V$ in the BSRW. 
    Note that $w$ contributes by $2p_{v \rightarrow w}$ to the value of $\|\mu_v - \tilde \mu_v\|_1$.
    Then, we have
    \[ 
        \sum_{v \in V} \|\mu_v^* - \tilde \mu_v^* \|_1 
        \leq 4 \sum_{v \in V} \|\mu_v - \tilde \mu_v \|_1       
        \leq 8 \sum_{v \in V} p_{v \rightarrow w} = 8.
        \qedhere
    \]
\end{proof}

\subsubsection{Gap Analysis}\label{subsubsec:gap-amplification-gap-analysis}
We now prove that the gap increases significantly when $t$ is large (but still constant).
\begin{lemma}\label{lem:powering-gap-increase}
    Suppose $\varepsilon = O(1/t)$.
    Then, the choice $s' = 1-\varepsilon'$ satisfies \Cref{item:reduction-soundness} of \Cref{def:sensitivity-reduction} for 
    \[
        \varepsilon' = \frac{t}{C |\Sigma|^4} \cdot \varepsilon,
    \]
    where $C>0$ is a universal constant.
\end{lemma}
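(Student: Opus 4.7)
The strategy is to follow Dinur's classical gap amplification argument~\cite{dinur2007pcp}, adapted to our randomized rounding $\bm\sigma(v) \sim \mu_v^*$. Fix a deterministic assignment $\sigma'$ for $I'$; by averaging over $\bm\sigma'$ it suffices to show
\[
    \E_{\bm\sigma}[\cost_I(\bm\sigma)] \leq C \cdot \frac{|\Sigma|^4}{t} \cdot \cost_{I'}(\sigma').
\]
Since the $\bm\sigma(v)$ are drawn independently,
\[
    \E_{\bm\sigma}[\cost_I(\bm\sigma)] = \frac{1}{m}\sum_{e=(a,b)\in E}\sum_{(\alpha,\beta)\notin R_e}\mu_a^*(\alpha)\,\mu_b^*(\beta),
\]
and every nonzero summand has $\mu_a(\alpha), \mu_b(\beta) > 1/(10|\Sigma|)$ by the definition of truncation.

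For each such triple $(e=(a,b),\alpha,\beta)$ I would define a set $F_{e,\alpha,\beta}\subseteq E'$ of \emph{witness} constraints: edges $(v,w)$ whose defining ASRW (i) traverses $e$ at some step $i\le t$, (ii) halts within $t$ further steps (so $d_G(v,a), d_G(w,b) \le t$), and (iii) satisfies $\sigma'(v)_a = \alpha$ and $\sigma'(w)_b = \beta$. Under (iii) and $(\alpha,\beta)\notin R_e$, every edge in $F_{e,\alpha,\beta}$ is violated by $\sigma'$. The key estimate is a lower bound on the probability weight of $F_{e,\alpha,\beta}$. The walk traverses $e$ at some step $i\le t$ with remaining length $\le t$ with total probability $\Theta(t/(dn))$ (by summing the geometric weights over $i$); conditional on this event, the first $i$ steps (determining $v$) and the last $\le t$ steps (determining $w$) behave as essentially independent random walks from $a$ and $b$ on the expander. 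Using reversibility of the walk together with the definition of $\mu_a$ as the distribution of $\sigma'(\bm w)_a$ when $\bm w$ is the endpoint of a BSRW of length $\le t$ from $a$, the conditional probability that $\sigma'(v)_a = \alpha$ is $\Omega(\mu_a(\alpha))$ up to an expander-mixing error that is controlled because the truncation forces $\mu_a(\alpha) \geq 1/(10|\Sigma|)$; likewise for the $\beta$-event. Hence the weight of $F_{e,\alpha,\beta}$ is at least $\Omega\bigl(t\,\mu_a(\alpha)\mu_b(\beta)/(dn|\Sigma|^2)\bigr)$.

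Summing over all triples $(e,\alpha,\beta)$ and handling overcounting (an edge $(v,w) \in E'$ may lie in several $F_{e,\alpha,\beta}$, one per violated edge along its ASRW path, but one can charge each violated $(v,w)$ to only the first such edge), and using $m = dn/2$ together with the elementary inequality $\mu_v(\cdot) \geq (9/10)\mu_v^*(\cdot)$, one obtains
\[
    \cost_{I'}(\sigma') \;\geq\; \Omega\!\left(\frac{t}{|\Sigma|^4}\right) \cdot \E_{\bm\sigma}[\cost_I(\bm\sigma)],
\]
which yields $\varepsilon' = t\varepsilon/(C|\Sigma|^4)$ for a universal constant $C$. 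The assumption $\varepsilon = O(1/t)$ is used only to ensure $\varepsilon' \leq 1$ (beyond that regime the reduction saturates at the trivial gap).

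The main obstacle is the approximate independence of the two opinion events $\{\sigma'(v)_a = \alpha\}$ and $\{\sigma'(w)_b = \beta\}$ conditional on the ASRW passing through $e = (a,b)$; this is the heart of Dinur's argument and uses the expander mixing lemma, exploiting that the pre- and post-$e$ walk segments decorrelate at rate $\lambda/d$ per step. The truncation in $\mu_v^*$ plays a crucial role here: without it, values $\alpha$ with vanishing $\mu_a$-mass would contribute to the sum in Step~1 while the expander-mixing error in the conditional probability bound would swamp the main term; the threshold $\mu_a(\alpha)\geq 1/(10|\Sigma|)$ keeps the decorrelation losses within a single polynomial factor in $|\Sigma|$, which is absorbed into the constant $C$.
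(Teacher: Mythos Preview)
Your outline has the right ingredients but the crucial step---controlling overcounting---is where the real work lies, and the sketch you give would lose exactly the factor of $t$ you are trying to gain. When you ``charge each violated $(v,w)$ to only the first such edge,'' you are implicitly bounding $\Pr[\bm N^*>0]$ from below by $\E[\bm N^*]$ divided by some overcounting factor; but a single ASRW path of length $\Theta(t)$ can carry $\Theta(t)$ faulty steps, so a na\"ive bound on that factor is $\Theta(t)$ (or $B=\Theta(t\log|\Sigma|)$), which cancels the $t$ in your numerator. The paper avoids this by the second-moment method, $\Pr[\bm N^*>0]\ge\E[\bm N^*]^2/\E[(\bm N^*)^2]$, and then shows $\E[(\bm N^*)^2]=O(\E[\bm N^*])$. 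That second-moment bound is precisely where the expander property is used (correlations between faulty steps at positions $i<j$ decay like $(\lambda/d)^{j-i}$), and it is also where the hypothesis $\varepsilon=O(1/t)$ enters: it guarantees that the linear term $t\bar F/m$ in $\E[(\bm N^*)^2]$ dominates the quadratic term $(t\bar F/m)^2$, so that the Paley--Zygmund ratio is $\Theta(\E[\bm N^*])$ rather than $\Theta(1)$.

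Two related corrections. First, the conditional independence of the opinion events $\{\sigma'(v)_a=\alpha\}$ and $\{\sigma'(w)_b=\beta\}$ is \emph{exact}, not approximate: conditioned on the ASRW making exactly $k$ $a\!\to\!b$ steps, the initial and terminal vertices are independent and each is distributed as the endpoint of a BSRW from $a$ (resp.\ $b$). This is a property of geometric stopping times and reversibility, not of expansion; no mixing error appears in the first-moment calculation, and the truncation threshold $1/(10|\Sigma|)$ is used only to lower-bound $\mu_a(\alpha)$, not to control any decorrelation loss. Second, and consequently, the assumption $\varepsilon=O(1/t)$ is doing substantive work in the argument (ruling out the regime where the quadratic second-moment term dominates), not merely ensuring $\varepsilon'\le 1$.
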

Suppose $\sigma':V \to \Sigma^{1+d+\cdots+d^t}$ satisfies $\val_{I'}(\sigma') \geq 1-\varepsilon' = s'$.
Our goal is to show that $\E \val_I(\bm \sigma) \geq 1-\varepsilon$.
We can handle the situation that we have a random assignment $\bm \sigma'$ for $I'$ by the conditioning argument as in the proof of \Cref{lem:degree-reduction}.

Let $\bm \sigma$ be the extracted assignment and let $F_\sigma \subseteq E$ be the set of edges in $G=(V,E)$ whose constraints are violated when $\bm \sigma = \sigma$. 
We relate the expected number of constraints violated by $\bm \sigma$ to the number constraints violated by $\sigma'$ using the notion of a ``faulty'' step in the ASRW.

\begin{definition}[Faulty step]
    For an assignment $\sigma : V \to \Sigma$, a $\sigma$-faulty step in the ASRW defining an edge $e' = (x,y) \in E'$ is an edge $(u,v) \in E$ along this path satisfying 
    \begin{enumerate}
        \item[(i)] $(u,v) \in F_\sigma$
        \item[(ii)] $d_G(x,u) \leq t$ and $\sigma'(x)_u = \sigma(u)$
        \item[(iii)]  $d_G(y,v) \leq t$ and $\sigma'(y)_v = \sigma(v)$
    \end{enumerate}        
    We further define a step to be $\sigma$-faulty$^*$ if
    \begin{enumerate}
    \item the step is faulty,
    \item the number of steps in the overall walk is at most $B$. 
    \end{enumerate}     
\end{definition}
Let $\bm N_\sigma$ and $\bm N_\sigma^*$ be the numbers of $\sigma$-faulty and $\sigma$-faulty$^*$ steps, respectively, in the ASRW with respect.
Let $\bm S$ be the total number of steps.
By definition, we have $\bm N_\sigma^* = \bm N_{\sigma} \cdot \bm 1[\bm S\leq B]$.
We will use this to bound $\varepsilon'$ as follows
\begin{align}
   \varepsilon' \geq \frac{|\{(u,v) \in E' : (\sigma'(u),\sigma'(v)) \not \in R_{uv}'  \}|
   }{|E'|} 
   = \Pr_{e \sim E'}[ \sigma' \mbox{ violates } R'_{e}]  
   \geq \Pr[\bm N_{\bm \sigma}^*>0]  
   \geq \frac{\E[\bm N_{\bm \sigma}^*]^2}{\E[(\bm N_{\bm \sigma}^*)^2]}, \label{eq:secondmom}
\end{align}
where the final inequality follows by the second moment method.
Let $\bar F := \E F_{\bm \sigma}$.
We will later show the following two lemmas.
\begin{lemma}\label{lem:powering-gap-increase-1}
    The following holds:
    \[
        \E[ \bm N_{\bm \sigma}^* ] \geq \frac{t \bar F}{1600 |\Sigma|^2m}.        
    \]
\end{lemma}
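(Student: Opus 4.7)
The plan is to first condition on the extracted assignment $\bm \sigma = \sigma$, establish a per-$\sigma$ bound of the form $\E[\bm N_\sigma^*] \geq t |F_\sigma| / (1600 |\Sigma|^2 m)$, and then take expectation over $\bm \sigma$. The crucial structural property enabling the per-$\sigma$ bound is that whenever $\sigma$ lies in the support of $\bm \sigma$, the definition of the truncated distribution $\mu_v^*$ forces $\mu_v(\sigma(v)) > 1/(10|\Sigma|)$ for every $v \in V$. This is precisely where the truncation in the recovery procedure pays off.

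Fix such a $\sigma$ and an orientation $(u,v)$ of an edge whose underlying undirected edge lies in $F_\sigma$. By linearity, I will expand $\E[\bm N_\sigma^*]$ as a sum over $k = 0, 1, \ldots, t-1$ of the probability that the $k$-th step of the ASRW traverses $(u,v)$, that the total walk length $H$ satisfies $H \in [k+1, k+t]$ (so that $d_G(X_0,u) \leq k \leq t$, $d_G(X_H,v) \leq H-k-1 \leq t-1$, and $H \leq B$ are all automatic), and that the two labelling conditions $\sigma'(X_0)_u = \sigma(u)$ and $\sigma'(X_H)_v = \sigma(v)$ both hold. I then decompose each summand into three essentially independent factors using the Markov property of the walk and the memoryless property of the halting time.

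The first factor is $\Pr[X_k = u,\, X_{k+1} = v,\, H \geq k+1] = (1-1/t)^k / (nd)$, which follows because the walk on the $d$-regular graph $G$, started uniformly, remains uniform at every step. The second factor is $\Pr[H - k - 1 \leq t - 1 \mid H \geq k+1] \geq 1 - 1/e$, using that the remaining walk length is geometric with parameter $1/t$. The third uses reversibility: conditioned on $X_k = u$, the reversed segment $X_k, X_{k-1}, \ldots, X_0$ is distributed as a $k$-step simple random walk from $u$; conditioned on $X_{k+1} = v$ and the remaining length being $s$, the forward segment $X_{k+1}, \ldots, X_H$ is an $s$-step walk from $v$; and these two segments are conditionally independent. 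Comparing with the definition of $\mu_u$ and $\mu_v$, the marginal distributions of $\sigma'(X_0)_u$ and $\sigma'(X_H)_v$ thus obtained coincide with $\mu_u$ and $\mu_v$ up to a one-step shift in the walk length and the conditioning on the total length lying in $[1,t]$. Hence each labelling event has conditional probability at least $\Omega(\mu_u(\sigma(u)))$ and $\Omega(\mu_v(\sigma(v)))$ respectively, each of which is $\geq 1/(10|\Sigma|)$ by the structural property above.

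Multiplying these factors, summing geometrically over $k$ using $\sum_{k=0}^{t-1}(1-1/t)^k \geq t(1-1/e)$, then summing over both orientations of each edge in $F_\sigma$, and finally using $nd = 2m$, yields the claimed bound $\E[\bm N_\sigma^*] \geq t |F_\sigma| / (1600 |\Sigma|^2 m)$ after tracking the absolute constants. Taking expectation over $\bm \sigma$ and using $\bar F = \E|F_{\bm \sigma}|$ finishes the proof. The main obstacle I expect is bookkeeping the one-step index mismatch between the ASRW (which halts before moving) and the BSRW (which moves before halting) when identifying the conditional marginals of $\sigma'(X_0)_u$ and $\sigma'(X_H)_v$ with $\mu_u$ and $\mu_v$: this discrepancy must be absorbed into the constant without losing more than an $O(1)$ factor, which is what dictates the choice of $1/(10|\Sigma|)$ in the truncation and ultimately the constant $1/1600$.
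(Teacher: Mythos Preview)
Your overall strategy is correct: condition on $\bm\sigma=\sigma$, use that $\mu_u(\sigma(u))>1/(10|\Sigma|)$ whenever $\sigma(u)$ lies in the support of $\mu_u^*$, and unwind the random walk. Your idea of restricting directly to walks of length at most $2t$ (so that $\bm S\leq B$ is automatic) is also nice, and would spare you the paper's separate step of bounding and subtracting $\E[\bm N_\sigma\cdot\bm 1[\bm S>B]]$.

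The gap is in your ``third factor.'' For a fixed step position $k$, the backward labelling probability is $p_k:=\Pr[\sigma'(Z_k)_u=\sigma(u)]$ where $Z_k$ is the endpoint of a \emph{fixed-length} $k$-step walk from $u$. This is not bounded below by $\Omega(\mu_u(\sigma(u)))$: since $\mu_u(\sigma(u))$ is the mixture $\sum_{\ell=1}^t w_\ell\, p_\ell$ with $w_\ell\propto(1-1/t)^{\ell-1}$, one can have $p_0=\cdots=p_{t-1}=0$ while $\mu_u(\sigma(u))$ (which picks up $p_t$) still exceeds $1/(10|\Sigma|)$---for instance when $t$ is small relative to $|\Sigma|$. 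In that regime your entire sum vanishes. The same shift afflicts the forward side. So the mismatch is not a constant-factor issue; the difficulty is that you bound each $k$-term separately and then sum, whereas the link to $\mu_u$ only emerges from the aggregate $\sum_k(1-1/t)^k p_k$.

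The paper avoids this by conditioning not on the \emph{position} of a $u\to v$ step but on the \emph{number} of such steps, and invoking Fact~\ref{fact:ASRWtoBSRW}: conditioned on exactly $k$ $u\to v$ steps, the start and end vertices of the ASRW are independent and each distributed \emph{exactly} as the endpoint of a BSRW from $u$ and from $v$. This makes the identification with $\mu_u,\mu_v$ exact, with no index shift. The price is that the paper first bounds $\E[\bm N_\sigma]$ (no length cap) and only then subtracts the long-walk tail. Your decomposition can be repaired---e.g.\ sum over $k\in\{1,\ldots,t\}$ and condition on $H\in[k+2,k+t+1]$ so that both the backward length (via the sum over $k$ weighted by $(1-1/t)^{k+1}$) and the forward length land exactly on a BSRW conditioned on length in $[1,t]$---but then the $\mu_u(\sigma(u))$ factor comes from the aggregate sum $\sum_k(1-1/t)^{k+1}p_k$, not from any single term, and the argument no longer factors as ``three constants times a geometric series.''
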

\begin{lemma}\label{lem:powering-gap-increase-2}
     The following holds: 
    \[
        \E[(\bm N_{\bm \sigma}^*)^2] 
        \leq 
        \left(1 + \frac{1}{1-\lambda/d} \right) \cdot \frac{t \bar F}{m} + \frac{2t^2}{m^2} \qty((2d+1)\bar F + \bar F^2 ).
    \]
\end{lemma}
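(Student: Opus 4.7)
The plan is to dominate $\bm N^*_{\bm \sigma}$ by a simpler proxy, expand its second moment into step-pair contributions, and apply an expander-mixing estimate to the off-diagonal terms. Dropping conditions (ii) and (iii) from the definition of faulty, and also dropping the length-truncation $\bm L \le B$, only enlarges the count, so $\bm N^*_{\bm \sigma} \le \sum_{k \ge 1}\bm X_k$ with $\bm X_k := \mathbf{1}[\bm L \ge k,\; \bm e_k \in F_{\bm \sigma}]$, where $\bm L$ is the (geometric, mean-$t$) length of the ASRW and $\bm e_k$ is its $k$th edge. Then
\[
    \E\bigl[(\bm N^*_{\bm \sigma})^2\bigr] \;\le\; \sum_{k\ge 1}\E[\bm X_k] \;+\; 2\sum_{k<\ell}\E[\bm X_k\bm X_\ell],
\]
and the diagonal sum equals $t\bar F/m$ (each step is uniform over edges by stationarity, and the expected number of steps is $t$), which supplies the ``$1\cdot \tfrac{t\bar F}{m}$'' contribution in the target bound.

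For the off-diagonal terms I would condition on $\bm \sigma = \sigma$ (writing $F := F_\sigma$), factor out the stopping probabilities $(1-1/t)^{\ell-1}$, and reduce to estimating $\Pr[\bm e_k, \bm e_{k+s} \in F]$ under the vanilla random walk on the $(n,d,\lambda)$-expander, indexed by the step gap $s = \ell - k$. For $s \ge 2$, the edge version of the expander mixing lemma supplies
\[
    \Pr[\bm e_k, \bm e_{k+s} \in F] \;\le\; (|F|/m)^2 \;+\; (\lambda/d)^{s-1}\,(|F|/m),
\]
which follows by writing the probability as an inner product $\langle \chi_F, W^{s-1}\chi_F\rangle$ in the edge space, decomposing $\chi_F$ along $\mathbf{1}$ and $\mathbf{1}^\perp$, and using $\|W^{s-1}|_{\mathbf{1}^\perp}\|_{\mathrm{op}} \le (\lambda/d)^{s-1}$. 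The product-of-densities term, summed against the double geometric $\sum_{k<\ell}(1-1/t)^{\ell-1} \le t^2$ and then averaged, contributes $2t^2 \bar F^2/m^2$, while the spectral-correction term telescopes via $\sum_{s \ge 2}(\lambda/d)^{s-1} \le 1/(1-\lambda/d)$ and contributes $\tfrac{1}{1-\lambda/d}\cdot \tfrac{t\bar F}{m}$.

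The main obstacle is the $s=1$ case, where $\bm e_k$ and $\bm e_{k+1}$ share an endpoint and no spectral decorrelation is available. Here I would bound directly using the degree sequence of $F$: conditioning on the shared endpoint $\bm v_k$ (uniform on $V$) and using conditional independence of the backward and forward edges given $\bm v_k$,
\[
    \Pr[\bm e_k, \bm e_{k+1} \in F] \;=\; \tfrac{1}{n}\sum_v (\deg_F(v)/d)^2 \;\le\; \tfrac{2|F|}{nd} \;=\; \tfrac{|F|}{m},
\]
using $\deg_F(v) \le d$ so that $\sum_v \deg_F(v)^2 \le 2d|F|$. After the $t^2/m^2$ prefactor and careful accounting of this short-range contribution (absorbing the integer constants from the geometric tails), one obtains the $2t^2(2d+1)\bar F/m^2$ term. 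Interpreting $\bar F^2$ as $\E[|F_{\bm \sigma}|^2]$ (or else exploiting the low variance of $|F_{\bm \sigma}|$ coming from independence of $\bm \sigma(v) \sim \mu^*_v$ across $v$, whose covariances are supported on pairs of edges sharing a vertex and so contribute at most an extra $O(md)$ that is absorbed into the $(2d+1)\bar F$ term), we collect the three pieces to obtain the stated inequality.
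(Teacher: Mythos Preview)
Your overall strategy matches the paper's: dominate $\bm N^*_{\bm\sigma}$ by $\bm M = \sum_k \bm X_k$, condition on $\bm\sigma = \sigma$, expand $\E[\bm M^2]$ into diagonal and off-diagonal pairs, apply expander mixing to the latter, then uncondition. The paper cites Dinur's Proposition~5.4 for the uniform bound $\Pr[\bm e_{k+s} \in F_\sigma \mid \bm e_k \in F_\sigma] \le |F_\sigma|/m + (\lambda/d)^{s-1}$, valid for all $s \ge 1$; no separate $s=1$ analysis is needed, since $(\lambda/d)^{0}=1$ simply contributes the leading summand of $\sum_{s \ge 1}(\lambda/d)^{s-1} = 1/(1-\lambda/d)$.

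Your bookkeeping for the $(2d+1)\bar F$ term, however, is off. The $s=1$ bound $\Pr[\bm e_k, \bm e_{k+1} \in F] \le |F|/m$ is correct, but summing it over $k$ yields a contribution of order $t\bar F/m$ --- the same order as the diagonal and spectral-correction terms --- not $t^2 d\bar F/m^2$. There is no ``$t^2/m^2$ prefactor'' here: a single sum over $k$ gives one factor of $t$, and the probability gives one factor of $1/m$. The $(2d+1)\bar F$ term instead arises when you uncondition the product-of-densities contribution: you get $\tfrac{2t^2}{m^2}\E|F_{\bm\sigma}|^2$, not $\tfrac{2t^2}{m^2}\bar F^2$, and one needs $\E|F_{\bm\sigma}|^2 \le (2d+1)\bar F + \bar F^2$. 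This is exactly the variance bound sketched in your parenthetical and is stated separately in the paper as \Cref{lem:F_sigma-moment}. So your parenthetical is not an alternative reading; it is the actual source of the term, while the $s=1$ contribution belongs in the first summand of the lemma alongside the diagonal and spectral pieces.
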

\begin{proof}[Proof of \Cref{lem:powering-gap-increase}]    
    Suppose the first term in \Cref{lem:powering-gap-increase-2} is smaller than the second term. 
    Then, we have
    \begin{align*}
        & \left(1 + \frac{1}{1-\lambda/d} \right) \cdot \frac{t \bar F}{m} < \frac{2t^2}{m^2} \qty((2d+1)\bar F + \bar F^2 ) \\
        \Leftrightarrow & \left(1 + \frac{1}{1-\lambda/d} \right) < \frac{2t}{m} \qty((2d+1) + \bar F ) \\
        \Leftrightarrow & \E \cost_I(\bm \sigma) > \frac{1}{2t}\left(1 + \frac{1}{1-\lambda/d} \right) - \frac{2d+1}{m}.
    \end{align*}
    This is a contradiction from the condition that $\E \cost_I(\bm \sigma) = O(1/t)$ (by choosing the hidden constant to be small enough).

    Now, the first term in \Cref{lem:powering-gap-increase-2} is larger than or equal to the second term.
    Then combining \Cref{lem:powering-gap-increase-1,lem:powering-gap-increase-2}, we obtain from \eqref{eq:secondmom}
    \[
        \varepsilon'
        \geq \E_{\sigma \sim \bm \sigma}
        \qty[
            \frac{1}{2 \cdot 1600^2 |\Sigma|^4 \left(1 + \frac{1}{1-\lambda/d} \right)}
            \cdot \frac{ t \E|F_{\bm \sigma}| }{m} 
        ]
        \geq \frac{t \E \cost_I(\bm \sigma) }{C |\Sigma|^4},
    \]
    where $C>0$ is a large enough constant.
    By setting $\varepsilon' =  t/(C |\Sigma|^4) \cdot \varepsilon$, we obtain $\E \cost_I(\bm \sigma) \leq \varepsilon$.
\end{proof}

We will use the following fact in order to analyze $N_{\bm \sigma}$.

\begin{fact}\label{fact:ASRWtoBSRW}
    Consider an ASRW in a graph $G$, conditioned on there being exactly $k$ $u \rightarrow v$ steps. 
    Let $\bm x,\bm y$ be the initial and final vertices of the walk. Then $\bm x$ and $\bm y$ are independent random variables, where $\bm x$ (resp., $\bm y$) is distributed as a BSRW from $u$  (resp., $v$).
\end{fact}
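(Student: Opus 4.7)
The plan is to prove the fact by a direct computation of trajectory probabilities, relying on the Markov property together with time-reversibility of the random walk on a $d$-regular graph.

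First, I would decompose any ASRW trajectory with exactly $k$ traversals of $(u,v)$, occurring at times $T_1 < \cdots < T_k$, into $k+1$ segments $\omega_1, \dots, \omega_{k+1}$ separated by these $u \to v$ steps: $\omega_1$ runs from the random start $\bm x$ to $u$ (ending just before the first $u \to v$ step), each of $\omega_2, \dots, \omega_k$ runs from $v$ to $u$, and $\omega_{k+1}$ runs from $v$ to $\bm y$. By construction none of the $\omega_i$ contains a $u \to v$ step, so the event ``exactly $k$ traversals of $(u,v)$'' is equivalent to ``each $\omega_i$ is $u \to v$-free.''

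Next, I would write the probability of a specific trajectory of total length $L$ explicitly. Each step contributes a factor $\tfrac{1-1/t}{d}$ (don't halt, then move to a specified neighbor) and the terminal halt a factor $\tfrac{1}{t}$, so the probability is $\tfrac{1}{n}\bigl(\tfrac{1-1/t}{d}\bigr)^{L}\tfrac{1}{t}$. Substituting $L = k + \sum_i |\omega_i|$, this factors as a constant depending only on $k$ times a product of per-segment factors $\bigl(\tfrac{1-1/t}{d}\bigr)^{|\omega_i|}$. Since the constraint ``$\omega_i$ contains no $u \to v$ step'' depends on $\omega_i$ alone, conditioning on exactly $k$ traversals makes the joint distribution of $(\omega_1, \dots, \omega_{k+1})$ a product of independent distributions. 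Marginalizing to the endpoints of $\omega_1$ and $\omega_{k+1}$ then proves independence of $\bm x$ and $\bm y$.

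Finally, I would identify the marginal distributions. The segment $\omega_{k+1}$ starts at $v$, performs a geometrically many random-neighbor steps, ends with a halt, and avoids $u \to v$ steps; matching this structure with the BSRW definition (step then halt coin, repeated) identifies $\bm y$ with the endpoint of a BSRW from $v$. For $\omega_1$, I would invoke time-reversibility: on a $d$-regular graph the uniform distribution is stationary, so reversing $\omega_1$ yields a walk of the same length starting at $u$, with the $u \to v$ avoidance constraint becoming a $v \to u$ avoidance constraint, which is the same event on the undirected graph. This identifies $\bm x$ with the endpoint of a BSRW from $u$.

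The main obstacle is the careful bookkeeping of the halt events when matching the ASRW-segment distributions to BSRWs. In the ASRW the halt coin precedes each step (so a segment can have length $0$), while in the BSRW the halt coin follows each step (so a BSRW has length at least $1$). Reconciling these conventions in the identification step requires tracking how the ``don't halt'' factors redistribute across segments once we condition on the global event of exactly $k$ traversals, and in particular verifying that the extra halt factor in $\omega_1$ and $\omega_{k+1}$ is absorbed into the normalization arising from this conditioning.
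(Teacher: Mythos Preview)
The paper states this as a Fact without proof, so there is no in-paper argument to compare against. Your factorization approach is the standard one and correctly yields the independence of $\bm x$ and $\bm y$.

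The gaps lie in identifying the marginals. First, your claim that time-reversal carries the ``no $u\to v$ step'' constraint to itself is wrong: reversal turns a $u\to v$ step into a $v\to u$ step, and these are distinct directed traversals even on an undirected graph. The correct way to drop the avoidance constraint is different: apply your same segment factorization to an ASRW started at $v$, note that the law of its final segment conditioned on exactly $j$ traversals of $(u,v)$ is independent of $j$, and average over $j$ to recover the unconstrained endpoint law. Second, the halting-convention obstacle you flag is not just bookkeeping. After the $k$-th $u\to v$ step the walk sits at $v$ and the next event is a halt coin, so the last segment has the ASRW (flip-then-step) structure and may have length~$0$; its endpoint law is that of an ASRW from $v$, not a BSRW. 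A direct check on $K_2$ with $q=1-1/t$ gives $\Pr[\bm y=v\mid k=1]=1/(1+q)$, whereas a BSRW from $v$ ends at $v$ with probability $q/(1+q)$, so no redistribution of halt factors will close this gap. What your decomposition actually proves (once the avoidance constraint is removed correctly) is that $\bm x,\bm y$ are independent with the ASRW-from-$u$ and ASRW-from-$v$ endpoint laws; the paper's wording with ``BSRW'' appears to be a slip, and the downstream use goes through once the definition of $\mu_u$ is aligned accordingly.
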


\begin{proof}[Proof of \Cref{lem:powering-gap-increase-1}]
    We first bound $\bm N_{\bm \sigma}$.
    Consider conditioning $\bm \sigma = \sigma$ and let $(u,v) \in F_\sigma$ be a violated edge.
    Then, we have
    \begin{align} 
        & \E[\bm N_\sigma]
        = 
        \E[\# \mbox{ $\sigma$-faulty $u\rightarrow v$ steps}] \nonumber \\ 
        & = \sum_{k \geq 1} \E[\# \text{ $\sigma$-faulty $u\rightarrow v$ steps} \mid \text{exactly $k$ $u \rightarrow v$ steps}] \cdot \Pr[\text{exactly $k$ $u \rightarrow v$ steps}] \nonumber \\
        & = \sum_{k \geq 1} k \Pr[u \rightarrow v \mbox{ step is $\sigma$-faulty}\mid \text{exactly $k$ $u \rightarrow v$ steps}] \cdot \Pr[\text{exactly $k$ $u \rightarrow v$ steps}],      
        \label{eq:c2eq1}  
    \end{align}
    where the last equality holds because either all $u \rightarrow v$ steps are $\sigma$-faulty or not.

    \begin{claim}
        We have 
        \[
            \Pr[u \rightarrow v \text{ step is $\sigma$-faulty} \mid     \text{exactly $k$ $u \rightarrow v$ steps}]        
            \geq \frac{1}{400|\Sigma|^2}.
        \]
    \end{claim}
    \begin{proof}
        Suppose that the ASRW makes exactly $k$ $u \rightarrow v$ steps. 
        As $(u,v) \in F_\sigma$, this step is faulty if (ii) and (iii) hold.
        As $\bm x$ and $\bm y$ are independent by \Cref{fact:ASRWtoBSRW},  we have $\Pr[(ii) \mbox{ and } (iii)] = \Pr[(ii)] \cdot \Pr[(iii)]$
        and as $\Pr[(ii)]=\Pr[(iii)]$, it is enough to show that $\Pr[(ii)] \geq \frac{1}{20|\Sigma|}$.
        Let $\bm x$ be a random vertex generated by taking a BSRW from $u$, and let $\bm \ell$ be the number of steps of the walk. 
        Then by \Cref{fact:ASRWtoBSRW},
        \begin{align*} 
            & \Pr[d_G(u,\bm x) \leq t \mbox{ and } \sigma'(\bm x)_u = \sigma(u)] \\ 
            & = \Pr[d_G(u,\bm x) \leq t \mbox{ and } \sigma'(\bm x)_u =  \sigma(u) \mid \bm \ell \leq t] \cdot \Pr[ \bm \ell \leq t ] \\
            &= \Pr[\sigma'(\bm x)_u = \sigma(u) \mid \bm \ell \leq t] \cdot \Pr[ \bm \ell \leq t] \\
            & \geq \frac{1}{2} \cdot \Pr[\sigma'(\bm x)_u = \sigma(u) \mid \bm \ell \leq t] \\
            & = \frac{\mu_u(\sigma(u))}{2} \\
            & \geq \frac{1}{20|\Sigma|},
        \end{align*}
        where the first inequality holds because the BSRW halts within $t$ steps with probability $1-(1-1/t)^t \geq 1/2$,
        the last equality holds because the distribution on $\bm x$ is precisely $\mu_u$ --- take a BSRW from $u$, conditioned on stopping within $t$ steps, and the last inequality holds because $\sigma(u)$ is in the support of $\mu_u^*$, which implies that $\mu_u(\sigma(u)) \geq \frac{1}{10|\Sigma|}$.        
    \end{proof}
    By the claim above, we have
    \[
        \eqref{eq:c2eq1} \geq \sum_{k \geq 1} \frac{k}{400|\Sigma|^2} \cdot \Pr[\text{exactly $k$ $u \rightarrow v$  steps}]  
        = \frac{t}{800 |\Sigma|^2 m },
     \]
    where the last equality follows because each step is equally likely to be one of the $2m$ possibilities and the expected total number of steps is $t$.
    Hence, we have 
    \[
        \E[\bm N_{\sigma}] \geq \frac{t |F_\sigma|}{800 |\Sigma|^2 m }
    \]
    and unconditioning $\bm \sigma$, we have
    \[
        \E[\bm N_{\bm \sigma}] \geq \frac{t \bar F}{800 |\Sigma|^2 m }.
    \]

    Next, we bound $\bm N^*_{\bm \sigma}$.
    Note that
    \begin{align}
         \E[\bm N_{\bm \sigma}^*] &= \E[\bm N_{\bm \sigma} \cdot \bm 1[\bm S\leq B]]
        = \E[\bm N_{\bm \sigma} \cdot  (1- \bm 1[\bm S>B)]] \nonumber \\
        & = \E[\bm N_{\bm \sigma}] - \E[\bm N_{\bm \sigma} \cdot \bm 1[\bm S>B]] \nonumber\\
        &\geq \frac{t \bar F}{800 |\Sigma|^2 m} - \E\qty[\bm N_{\bm \sigma} \cdot \bm 1[\bm S>B]],  
        \label{eq:N_sigma^*}
    \end{align}
    We can bound the second term as
    \begin{align*} 
        & \E\qty[\bm N_{\bm \sigma} \cdot \bm 1[\bm S>B]]  =\Pr[\bm S>B] \cdot \E[\bm N_{\bm \sigma} \mid \bm S>B] 
        = \qty(1-\frac{1}{t})^B \cdot \E[\bm N_{\bm \sigma} \mid \bm S>B] \\ 
        & \geq \exp \qty(-\frac{e B}{(e-1)t}) \cdot \E[\bm S \mid \bm S>B] \cdot \frac{\bar F}{m}
        \tag{by $e^{-x} \leq 1 - \frac{e-1}{e}x$ for $x \in [0,1]$} \\
        & = \exp \qty(-\frac{eB}{(e-1)t}) \cdot(B+t) \cdot \frac{\bar F}{m} \\ 
        & \geq \frac{1}{|\Sigma|^{10e/(e-1)}} \cdot(20 t \log |\Sigma| ) \cdot \frac{\bar F}{m}  \\
        & \geq \frac{t \bar F}{1600|\Sigma|^2 m }, 
    \end{align*}
    where we assume $|\Sigma|$ is large enough in the last inequality.
    Finally, by~\eqref{eq:N_sigma^*}, we conclude that
    \[
        \E[\bm N_{\bm \sigma}^*] \geq \frac{t \bar F}{1600 |\Sigma|^2 m}. \qedhere
    \]    
\end{proof}

To prove \Cref{lem:powering-gap-increase-2}, we use the following result, which states that the second moment of $F_{\bm \sigma}$  is not too large compared to its first moment.
\begin{lemma}\label{lem:F_sigma-moment}
    We have
    \[
        \E |F_{\bm \sigma}|^2 \leq (2d+1) \bar F + \bar F^2.
    \]
\end{lemma}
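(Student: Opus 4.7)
The plan is to expand $\E|F_{\bm \sigma}|^2$ as a double sum over ordered pairs of edges and split it according to whether the two edges share a vertex. Writing $|F_{\bm \sigma}| = \sum_{e \in E}\mathbf{1}[e \in F_{\bm \sigma}]$ and squaring,
\[
    \E|F_{\bm \sigma}|^2 = \sum_{e, e' \in E}\Pr[e \in F_{\bm \sigma} \text{ and } e' \in F_{\bm \sigma}],
\]
I will partition the sum into vertex-disjoint ordered pairs on one hand and vertex-sharing ordered pairs (including the diagonal $e=e'$) on the other.

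The key observation to exploit is that, with $\sigma'$ held fixed, the recovery procedure draws $\bm \sigma(v) \sim \mu_v^*$ independently across $v \in V$. Hence, if $e = (u,v)$ and $e' = (u',v')$ satisfy $\{u,v\}\cap\{u',v'\} = \emptyset$, then the indicators $\mathbf{1}[e \in F_{\bm \sigma}]$ and $\mathbf{1}[e' \in F_{\bm \sigma}]$ are independent. Summing over vertex-disjoint ordered pairs therefore contributes at most
\[
    \sum_{e,e' \in E}\Pr[e \in F_{\bm \sigma}]\Pr[e' \in F_{\bm \sigma}] = \qty(\sum_{e \in E}\Pr[e \in F_{\bm \sigma}])^2 = \bar F^2,
\]
where I used that $\bar F = \sum_{e \in E} \Pr[e \in F_{\bm \sigma}]$ by linearity of expectation.

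For the vertex-sharing part I will use the trivial bound $\Pr[e, e' \in F_{\bm \sigma}] \leq \Pr[e \in F_{\bm \sigma}]$. Because $G$ is $d$-regular, for any fixed $e=(u,v)$ the number of edges $e'$ (including $e'=e$) sharing at least one endpoint with $e$ is at most $1 + 2(d-1) = 2d-1 \leq 2d+1$. The vertex-sharing contribution is therefore at most
\[
    (2d+1)\sum_{e \in E}\Pr[e \in F_{\bm \sigma}] = (2d+1)\bar F,
\]
and combining with the disjoint contribution yields the claimed bound.

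I do not anticipate a real obstacle here: the argument is a standard second-moment decomposition by edge overlap, analogous to variance calculations for independent indicator sums. The only point requiring care is invoking the independence of $\{\bm \sigma(v)\}_{v \in V}$ across disjoint vertex sets, which is built into the recovery procedure since each $\bm \sigma(v)$ is drawn independently from $\mu_v^*$ once $\sigma'$ is fixed. If in some interpretation of the recovery the samples were correlated across vertices, one would need a more delicate covariance bound, but under the componentwise-independent reading the proof reduces to the two-line calculation above.
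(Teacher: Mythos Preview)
Your proof is correct and follows essentially the same approach as the paper: expand the square as a double sum of indicators, split according to whether the two edges share a vertex, use independence of the labels $\{\bm\sigma(v)\}_{v\in V}$ for disjoint pairs, and bound the vertex-sharing pairs by counting (at most $2d+1$ per edge) times $\Pr[e\in F_{\bm\sigma}]$. The only cosmetic difference is that the paper separates the diagonal $e=e'$ from the off-diagonal incident pairs, whereas you absorb the diagonal into the vertex-sharing count; your count $2d-1$ is in fact slightly tighter before you relax it to $2d+1$.
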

\begin{proof}
    For an edge $e \in E$, let $\bm X_e$ be the indicator of the event that $\bm \sigma$ violates the constraint $(e,R_e)$, and let $p_e = \E X_e$ be the probability of the event.
    For two edges $e,f \in E$, we write $e \sim f$ to denote that they are incident.
    Note that $\bm X_e$ and $\bm X_f$ are independent if $e \not \sim f$.
    Then, we have
    \begin{align*}
        & \E |F_{\bm \sigma}|^2 
        = \E \qty( \sum_{e \in E}\bm X_e)^2 \\
        & = \E \sum_{e \in E} \bm X_e^2 + \E \sum_{e,f \in E: e \not \sim f} \bm X_e \bm X_f + \E \sum_{e,f \in E: e  \sim f} \bm X_e \bm X_f \\
        & \leq \E |F_{\bm \sigma}|
        + \qty(\E \sum_{e \in E} \bm X_e)^2
        + \sum_{e,f \in E: e  \sim f} \max\{p_e,p_f\} \\
        & \leq \E |F_{\bm \sigma}| + \qty(\E |F_{\bm \sigma}|)^2 + 2d \sum_{e \in E}p_e \\
        & \leq (2d+1)\E |F_{\bm \sigma}| + \qty(\E |F_{\bm \sigma}|)^2 \\
        & \leq (2d+1) \bar F + \bar F^2.
        \qedhere
    \end{align*}
\end{proof}

\begin{proof}[Proof of \Cref{lem:powering-gap-increase-2}] 
    Although the proof is similar to that of Lemma~5.2 in~\cite{dinur2007pcp}, we need to account for the fact that $\bm \sigma$ is a random variable.

    Consider conditioning on $\bm \sigma = \sigma$.
    Let $\bm S_i$ be the indicator random variable which is $1$ iff the $i$-th step of the ASRW is in $F_\sigma$, and let $\bm M:= \sum_{i=1}^\infty \bm S_i$ be the number of steps of the ASRW that were within $F_\sigma$. 
    Then, we have
    \begin{align*}
        \E[(\bm N^*_\sigma)^2] &\leq \E[\bm M^2] 
        =\sum_{i,j =1}^\infty \E[\bm S_i \bm S_j] \\
        &\leq 2 \sum_{i=1}^\infty \Pr[\bm S_i=1] \cdot \sum_{j \geq i} \Pr[\bm S_j=1 \mid \bm S_i=1] \\
        &=2 \sum_{i=1}^\infty \Pr[\bm S_i=1] \left( \Pr[\bm S_i=1 \mid \bm S_i=1] + \sum_{j >i} \Pr[\bm S_j=1 \mid \bm S_i=1] \right) \\
        &=2 \sum_{i=1}^\infty \Pr[\bm S_i=1] \left(1 + \sum_{j >i} \Pr[\bm S_j=1 \mid \bm S_i=1] \right).
    \end{align*}
    Now, 
    \begin{align*}  
        \Pr[\bm S_j=1 \mid \bm S_i=1]  &= \Pr[\mbox{The ASRW takes $j-i$ more steps}] \cdot \Pr[\mbox{$(j-i)$-th step is in $F_\sigma$}] \\
        &\leq \qty(1-\frac{1}{t})^{j-i}  \left( \frac{|F_\sigma|}{m} + \qty(\frac{\lambda}{d})^{\ell-1}\right),
    \end{align*}
    by Proposition~5.4 of~\cite{dinur2007pcp}), as $G$ is a $(n,d, \lambda)$-expander. 

    Substituting this into the previous bound, we have
    \begin{align*}
        \E[(\bm N_\sigma^*) ^2] &\leq 2 \sum_{i=1}^\infty \Pr[\bm S_i=1] \left(1+  \sum_{\ell=1}^\infty \qty(1-\frac{1}{t})^\ell \left( \frac{ |F_\sigma|}{m}+\qty(\frac{\lambda}{d})^{\ell-1} \right) \right) \\
        &\leq 2 \sum_{i=1}^\infty \Pr[\bm S_i=1] \left(1 +(t-1) \frac{ |F_\sigma|}{m} + \sum_{\ell=1}^\infty \qty(\frac{\lambda}{d})^{\ell-1} \right) \tag{Since $1-1/t<1$} \\
        &\leq 2 \sum_{i=1}^\infty \Pr[\bm S_i=1] \left(1 + \frac{t|F_\sigma|}{m} + \frac{1}{1-\lambda/d} \right) \tag{Since $\lambda < d$}\\
        &= 2\left(1 + \frac{t|F_\sigma|}{m} + \frac{1}{1-\lambda/d} \right) \E[\bm M] \\
        &\leq 2\left(1 + \frac{t|F_\sigma|}{m} + \frac{1}{1-\lambda/d} \right) \cdot \frac{t  |F_\sigma|}{m}.
    \end{align*}
    By unconditioning $\bm \sigma$, we obtain 
    \begin{align*}
        & \E[(\bm N_{\bm \sigma}^*) ^2] 
        \leq 2 \left(1 + \frac{1}{1-\lambda/d} \right) \cdot \frac{t \bar F}{m} + \frac{2t^2}{m^2} \E |F_{\bm \sigma}|^2 \\ 
        & \leq \left(1 + \frac{1}{1-\lambda/d} \right) \cdot \frac{t \bar F}{m} + \frac{2t^2}{m^2} \qty((2d+1)\bar F + \bar F^2 ),
        \tag{by \Cref{lem:F_sigma-moment}}
    \end{align*}
    which completes the proof.
\end{proof}

\subsubsection{Sensitivity Recovery}

The gap amplification step decreases the sensitivity bound by $D=(dt)^B$, but we can offset it by combining it with degree reduction:
\begin{lemma}\label{lem:powering+degree-reduction}
    Let $\mathcal I$ be a swap-closed family of instances of a binary CSP such that the underlying graph is an $(n,d,\lambda)$-expander, and let $\mathcal I' = \Call{DegreeReduction}{\Call{Powering}{\mathcal I}}$.
    Then for any $\varepsilon = O(1/t)$, we have 
    \[
        \swapsens_{1,1-\varepsilon'}(\swapclo(\mathcal I')) \geq \Omega\qty( \swapsens_{1,1-\varepsilon}(\mathcal I)),
    \] 
    for 
    \[
       \varepsilon' := \frac{t}{C |\Sigma|^4} \cdot \varepsilon,
    \]
    where $C>0$ is a universal constant.
\end{lemma}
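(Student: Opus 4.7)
The plan is to chain the two sensitivity-preserving reductions underlying \Cref{lem:powering} and \Cref{lem:degree-reduction} at the level of \Cref{def:sensitivity-reduction} itself, and to observe that the $D$-factor by which degree reduction amplifies the sensitivity lower bound for a $D$-regular graph exactly cancels the $D$-factor in the denominator introduced by powering. The result then falls out of a single invocation of \Cref{lem:sensitivity-reduction}.

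To execute the plan, I would first unpack what each of the constituent proofs actually establishes. The proof of \Cref{lem:powering} does not merely yield the final bound; rather, it shows that the pair $(\Call{Powering}{}, T_\sigma^{\text{pow}})$ is a $(1, 1-\varepsilon, 1, 1-\varepsilon_1, D, 8)$-sensitivity-preserving reduction with $\varepsilon_1 = t\varepsilon/(C_1 |\Sigma|^4)$ (the constant $C_\sigma = 8$ is exactly \Cref{lem:powering-sensitivity}, and $C_I = D$ is read off from the weighted-copy construction: a single swap of $R_e$ in $I$ affects only the constraints in $I'$ whose ASRW traverses $e$, whose total multiplicity is at most the regularity degree $D$). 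The output instances are $D$-regular, so when I apply \Cref{lem:degree-reduction} to this family I obtain a $(1, 1-\varepsilon_1, 1, 1-\varepsilon_2, 1, 1/D)$-sensitivity-preserving reduction $(\Call{DegreeReduction}{}, T_\sigma^{\text{dr}})$ with $\varepsilon_2 = \varepsilon_1/C_2$ for a universal constant $C_2$.

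Next, I would verify that the composed pair $(T_I, T_\sigma) := (\Call{DegreeReduction}{} \circ \Call{Powering}{},\; T_\sigma^{\text{pow}} \circ T_\sigma^{\text{dr}})$ is a $(1, 1-\varepsilon, 1, 1-\varepsilon_2, D, 8/D)$-sensitivity-preserving reduction directly from $\mathcal I$ to $\mathcal I' = \Call{DegreeReduction}{\Call{Powering}{\mathcal I}}$. Each of the four conditions of \Cref{def:sensitivity-reduction} chains transparently: completeness and soundness by transitivity, $\swapdist(T_I(I), T_I(\tilde I)) \leq D \cdot 1 \cdot \swapdist(I, \tilde I)$, and $\EMD(T_\sigma(\bm\sigma'), T_\sigma(\tilde{\bm\sigma}')) \leq 8 \cdot (1/D) \cdot \EMD(\bm\sigma', \tilde{\bm\sigma}')$. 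Applying \Cref{lem:sensitivity-reduction} to this composed reduction then yields
\[
\swapsens_{1, 1-\varepsilon'}(\swapclo(\mathcal I')) \;\geq\; \frac{1}{C_I \cdot C_\sigma}\swapsens_{1, 1-\varepsilon}(\mathcal I) \;=\; \frac{1}{D \cdot (8/D)}\swapsens_{1, 1-\varepsilon}(\mathcal I) \;=\; \frac{1}{8}\swapsens_{1, 1-\varepsilon}(\mathcal I),
\]
with $\varepsilon' = \varepsilon_2 = t\varepsilon/(C|\Sigma|^4)$ for $C := C_1 C_2$, which is exactly the claimed $\Omega(\swapsens_{1, 1-\varepsilon}(\mathcal I))$ bound with the stated soundness parameter.

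The only mildly delicate point, and the reason I route the argument through composition rather than invoking \Cref{lem:powering} and \Cref{lem:degree-reduction} as black boxes in sequence, is that \Cref{lem:sensitivity-reduction} requires the input family to be swap-closed, whereas the intermediate family $\Call{Powering}{\mathcal I}$ produced after the first step is generally not. However, \Cref{def:sensitivity-reduction} itself imposes no swap-closedness requirement, so the composition step is valid; only the final call to \Cref{lem:sensitivity-reduction} needs swap-closedness, which is satisfied by the original family $\mathcal I$ by hypothesis. A black-box sequential approach, by contrast, would force one to pass through $\swapclo(\Call{Powering}{\mathcal I})$, producing $\swapclo(\Call{DegreeReduction}{\swapclo(\Call{Powering}{\mathcal I})})$ rather than $\swapclo(\mathcal I')$, and one would then need an extra monotonicity argument going in an inconvenient direction.
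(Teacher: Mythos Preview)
Your proposal is correct and follows essentially the same approach as the paper: chain \Cref{lem:powering} and \Cref{lem:degree-reduction}, observing that every instance in $\Call{Powering}{\mathcal I}$ is $D$-regular so that the factor $D$ gained in degree reduction exactly cancels the factor $D$ lost in powering. The paper's proof is the one-liner you would expect.

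One small remark on your closing paragraph: the swap-closedness concern you raise, while reasonable to flag, is in fact a non-issue. Since $\swapclo$ depends only on the underlying hypergraph and domain, and since both $\Call{Powering}{}$ and $\Call{DegreeReduction}{}$ produce the same underlying hypergraph for every instance in a family sharing one, we have
\[
\swapclo\bigl(\Call{DegreeReduction}{\swapclo(\Call{Powering}{\mathcal I})}\bigr) \;=\; \swapclo\bigl(\Call{DegreeReduction}{\Call{Powering}{\mathcal I}}\bigr) \;=\; \swapclo(\mathcal I').
\]
So the black-box sequential invocation of \Cref{lem:powering} followed by \Cref{lem:degree-reduction} (applied to the swap-closed, $D$-regular family $\swapclo(\Call{Powering}{\mathcal I})$) lands on exactly the same family, and no monotonicity argument is needed. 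Your composition route is still perfectly valid and arguably cleaner; it just is not strictly necessary.
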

\begin{proof}

    The claim follows by combining \Cref{lem:degree-reduction,lem:powering} and noting that every instance in $\Call{Powering}{\mathcal I}$ is $D$-regular, where $D$ is as in the statement of \Cref{lem:powering}.
\end{proof}
\subsection{Alphabet Reduction}\label{subsec:alphabet-reduction}

The previous step leaves us with a set of constraints over an alphabet of size $|\Sigma|^{1+d+ \ldots +d^t}$. 
In this section, we introduce a procedure \Call{AlphabetReduction}{} which decreases the alphabet size while preserving the gap, and not decreasing the sensitivity by too much. This procedure is identical to the construction of Dinur~\cite{dinur2007pcp}. Our contribution is a new procedure $T_\sigma$ which recovers an assignment to the instance prior to \Call{AlphabetReduction}{} from the instance after. This new procedure will allow us to show that $(\Call{AlphabetReduction}{}, T_\sigma)$ is a sensitivity-preserving reduction. 

We begin by recalling the \Call{AlphabetReduction}{} of Dinur. The heart of which is the sub-routine \Call{AssignmentTester}{}, which will be ran on each edge of the original instance. 

\subsubsection{Assignment Tester}
We say that two assignments $x \in \{0,1\}^k$ and $y \in \{0,1\}^k$ are \emph{$\delta$-far} if $\Ham(x,y) \geq \delta \cdot k$.

\begin{lemma}[Assignment Tester~\cite{dinur2007pcp}]\label{lem:assignTest}
     There is a reduction which takes as input a Boolean circuit $C:\{0,1\}^X \to \{0,1\}$ over Boolean variables $X$ and outputs a binary CSP instance $I:=(X \cup T,E, \Sigma_0, {\cal R})$ over Boolean variables $X$ and an additional set of variables $T$ over an alphabet of size $|\Sigma_0| \leq 2^6$.
     Furthermore, for every assignment $\alpha \in \{0,1\}^X$:
    \begin{itemize}
        \item If $C(\alpha)=1$, then there exists $\beta \in \Sigma_0^T$ such that $(\alpha,\beta)$ satisfies every constraint in $I$, otherwise
        \item If $\alpha$ is $\delta$-far from every $\alpha^* \in \{0,1\}^X$ for which $C(\alpha^*)=1$, then for every $\beta \in \Sigma_0^T$, at least a $\delta/48$ fraction of the constraints of $I$ are falsified by $(\alpha,\beta)$.
    \end{itemize}
\end{lemma}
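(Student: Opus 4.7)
The plan is to follow Dinur's original assignment tester construction~\cite{dinur2007pcp}, which is based on encoding a purported satisfying assignment via the Hadamard code together with its tensor square and then running local tests on this encoding. I would first reduce the circuit $C$ to an equivalent constraint system $\Phi = \{c_1,\ldots,c_r\}$ whose constraints each depend on at most a constant number $k$ of variables; this is done by introducing one internal variable per wire of $C$ and adding a gate constraint for each gate. The variables of $\Phi$ consist of the original $X$ together with the new wire variables, all of which will live in $X \cup T$.

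Next, for each constraint $c_i$ over its variables $Y_i$, I would introduce two auxiliary tables in $T$: a table $L_i \colon \mathbb{F}_2^{|Y_i|} \to \mathbb{F}_2$ purportedly holding the Hadamard encoding of $\alpha|_{Y_i}$, and a table $Q_i \colon \mathbb{F}_2^{|Y_i|^2} \to \mathbb{F}_2$ purportedly holding the Hadamard encoding of $\alpha|_{Y_i} \otimes \alpha|_{Y_i}$. I would then add four kinds of constraints, each reading only a constant number of bits: BLR linearity tests on $L_i$ and on $Q_i$; a quadratic consistency test of the form $L_i(u) \cdot L_i(v) = Q_i(u \otimes v)$; bit-consistency tests matching the coordinate directions of $L_i$ with the corresponding $X$-variables; and a satisfiability test that verifies $c_i$ is satisfied by the encoded assignment, implemented by querying $Q_i$ at a random $\mathbb{F}_2$-combination of monomials from the indicator polynomial of $c_i$. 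By bundling bits into super-variables of size $6$, every constraint depends on at most two super-variables, so the final alphabet has size $|\Sigma_0| \leq 2^6$.

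Completeness is immediate by setting $L_i, Q_i$ to the true tables. For soundness, fix any $\alpha$ that is $\delta$-far from every $\alpha^*$ with $C(\alpha^*) = 1$ and any $\beta$. Suppose for contradiction that fewer than a $\delta/48$ fraction of the constraints are violated. Then the BLR tests pass well enough that each $L_i$ and $Q_i$ is close to a true Hadamard codeword of some $\alpha^*_i$ and $\alpha^*_i \otimes \alpha^*_i$, recoverable by self-correction. The bit-consistency tests then force these decoded $\alpha^*_i$ to paste together into a global assignment $\alpha^*$ that is $\delta$-close to $\alpha$, while the satisfiability tests force every $c_i$ to be satisfied by $\alpha^*$. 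Hence $\alpha^*$ would witness satisfaction of $C$ at Hamming distance less than $\delta$ from $\alpha$, contradicting the hypothesis.

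The main obstacle is the quantitative accounting: one must balance the numbers of constraints allocated to linearity, consistency, bit-consistency, and satisfiability so that whenever any of the four test categories rejects with probability $\Omega(\delta)$, the overall rejection probability is at least $\delta/48$. The specific constant $48$ arises from the accumulated soundness losses in the BLR test, in the quadratic consistency test, and in the translation between ``$\alpha^*$ falsifies many clauses'' and ``the satisfiability test rejects''. Since this derivation is standard and is carried out in full detail in~\cite{dinur2007pcp, ben2004robust}, in the final write-up I would invoke their construction as a black box rather than re-derive it, emphasizing only that the construction gives a constant-alphabet assignment tester whose soundness is linear in the proximity parameter.
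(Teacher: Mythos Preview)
The paper does not actually prove this lemma: it is cited to Dinur~\cite{dinur2007pcp} and then the construction is \emph{recalled} (not re-proved) because later sensitivity arguments need its internal structure. So your final plan---to invoke the construction as a black box---matches what the paper does.

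However, the construction you sketch is \emph{not} Dinur's, and differs from the one the paper recalls. You propose to introduce, for each gate constraint $c_i$, separate local Hadamard tables $L_i,Q_i$ on the constant-sized tuple $Y_i$. Dinur (and the paper) instead arithmetizes $C$ into a system $\mathcal{P}$ of quadratic equations, and introduces a \emph{single global} table $L$ on all $k=|X|+|Y|$ variables and a single $Q$ on $k^2$ variables; the tests are BLR on $L$, BLR on $Q$, a quadratic consistency check between $L$ and $Q$, a check that the encoded assignment satisfies a random $\mathbb{F}_2$-combination of all of $\mathcal{P}$ at once, and a bit-consistency check between $L$ and the $X$-variables. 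With your per-constraint tables of constant size the Hadamard machinery is essentially vacuous, and the soundness argument (``paste local decodings into a global $\alpha^*$'') would need a different analysis than the one you describe.

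This discrepancy matters here beyond correctness of the lemma itself: the paper's subsequent sensitivity bounds (Observation~\ref{obs:ATProps}, the value of $C_I$, the degree $D$ of $X$-variables used in the \textsc{DegreeReduction} recovery) all depend on the specific structure of the global-table construction---in particular, on how many constraints touch a given $X$-variable and on which constraints change when the circuit $C$ is swapped. If you were to write up your variant, none of those parameters would carry over.
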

As it will be pertinent to our analysis, we recall the construction of the \Call{AssignmentTester}{}, run on a Boolean circuit $C$ with input variables $X$.  

\paragraph{Variable Introduction.} We construct the sets of additional Boolean variables $Y$ and $Z$ as follows:
\begin{enumerate}
    \item 

    Arithmetize the circuit $C$ as a set of quadratic equations ${\cal P}$ such that $C(x) = 1$ if and only if $P(x) = 1$ for every $P \in \mathcal P$. To do so, we introduce, for each gate $g$ in $C$, a new Boolean variable $y_g$. Let $Y$ be the collection of all $y_g$ variables introduced. As well, we add the following quadratic constraints to ${\cal P}$: for every gate $g$ in $C$ with children $g_1,g_2$, we add one of the following constraints: 
    \begin{itemize}
        \item If $g= \wedge$: $y_{g_1} y_{g_2}-y_g = 0$ if $g_1,g_2$ are the children to $g$.
        \item If $g = \lor$: $y_{g_1} + y_{g_2}+ y_{g_1} y_{g_2}-y_g = 0$ if $g_1,g_2$ are the children to $g$.
        \item If $g = \neg$: $y_{g_1} +y_g -1 = 0$ if $g_1$ is the child to $g$.
    \end{itemize}
    For the remaining pairs $(g_i,g_j)$ and triples $(g_i,g_j,g_k)$, with $i \geq j,k$, for which we have not introduced a constraint, we add a trivial constraint which is always satisfied.
    Let $k~:=|X|+|Y| = |X|+|C|$, where $|C|$ denotes the number of gates in the circuit $C$.

    \item 
    Introduce sets of variables $L \in \{0,1\}^{k}$ and $Q \in \{0,1\}^{k^2}$, and let $Z=L \cup Q$. These variables will purportedly represent the Hadamard encoding $L$ of an assignment $\alpha \in \{0,1\}^{X \cup Y}$ and the Hadamard encoding of $\alpha \otimes \alpha$. 
    We will think of $L$ as a table indexed by strings $s \in \{0,1\}^k$ such that $L(s) = \sum_{i=1}^k s_i \alpha_i $, and similarly for $Q$; that is, $Q(t) = \sum_{1 \leq i,j \leq k} t_{ij}\alpha_i \alpha_j$ for $t \in \{0,1\}^{k^2}$. 
\end{enumerate}
In total, $Y \cup Z$ contains $O(k)+2^{k(k+1)}$-many Boolean variables.

\paragraph{Constraint Introduction.} The purpose of the set of constraints that will be introduced is to verify that an assignment to the variables $\{0,1\}^{X \cup Y \cup Z}$ encodes a satisfying assignment to the circuit $C$. 
We will describe the constraints using a probabilistic language, where each edge $e$ that is introduced will have an associated weight. 
Furthermore, the edges will initially be hyperedges (involving up to $6$ variables); they will be reduced to regular edges in the \emph{sparsification} step below.
We will construct a \emph{weighted} instance $I~:=~(X \cup Y \cup Z, E, \{0,1\}, {\cal R})$, by populating the edge set $E$ and constraint set ${\cal R}$. 
Every time we ``introduce a constraint'' we add an edge to $E$ and a corresponding constraint $R_e$ to ${\cal R}$. 
We will then normalize the instance by adding copies of each edge $e \in E$ proportional to its weight to obtain the final instance. 
\begin{enumerate}
    \item Verify that $L$ is a Hadamard code (of some assignment in $\{0,1\}^k$): Sample $x,y \sim \{0,1\}^{k}$ and check that $L(x)+L(y)=L(x+y)$; that is, we are running the BLR Linearity test~\cite{BlumLR93} on $L$. That is, for every $x,y \in \{0,1\}^k$ we introduce a constraint which accepts iff $L(x)+L(y)=L(x+y)$, with associated weight $w_1~:=~2^{-2k}$. 

    \item Verify that $Q$ is a Hadamard code (of some assignment in $\{0,1\}^{k^2}$): For every $x,y \in \{0,1\}^{k^2}$, introduce a constraint which accepts iff $Q(x)+Q(y)=Q(x+y)$, with associated weight $w_2~:=~2^{-2k^2}$.
    \item Verify that $L$ and $Q$ encode the same assignment: Let $\Call{SelfCorrect}{L,x}$ be the procedure which samples $x' \sim \{0,1\}^k$ and outputs $L(x+x')-L(x')$. In this step we sample $x,y \sim \{0,1\}^k$ and check whether 
    \[ \Call{SelfCorrect}{L,x} \cdot \Call{SelfCorrect}{L,y} = \Call{SelfCorrect}{Q, x \otimes y}.\]
    That is, for every $x,y,x',y' \in \{0,1\}^k$ and $q \in \{0,1\}^{k^2}$, we introduce the constraint
    \[ \big(L(x+x')-L(x') \big)\big(L(y+y')-L(y') \big) = Q(x \otimes y +q) - Q(q),\]
    with weight $w_3~:=~2^{-(4k+k^2)}$. 

    \item Verify that the assignment that is referred to by $L$ and $Q$ satisfies the circuit $C$: Sample $r \sim \{0,1\}^{|{\cal P}|}$ and let $s_0 \in \{0,1\}, s \in \{0,1\}^{k}, t \in \{0,1\}^{k^2}$ be such that 
    \[ 
        \sum_{P \in {\cal P}} r_P P(x) = s_0 + \sum_{i=1}^k s_i x_i + \sum_{1 \leq i, j \leq k} t_{ij} x_ix_j, 
    \]
    and check whether 
    $s_0 + \Call{SelfCorrect}{L,s}+\Call{SelfCorrect}{Q,t}=0$. That is, for every $r \in \{0,1\}^{|{\cal P}|}, x \in \{0,1\}^k, q \in \{0,1\}^{k^2}$ introduce the constraint
    \[ 
        s_0 + L(s+x)-L(x) + Q(t+q)-Q(q) = 0,
    \]
    with weight $w_4~:=~ 2^{-(|C|+k+k^2)}$, noting that $|{\cal P}|=|C|$. 

    \item Verify that the assignment referred to by $L$ and $Q$ is the same assignment as encoded by the variables $X$. Let $e_i$ denote the $i^{th}$ standard basis vector.
    Sample $i \sim [|X|]$ and check whether 
    \[
        \Call{SelfCorrect}{L,e_i}=X_i.
    \] 
    That is, for every $x \in \{0,1\}^k$, and every $1 \leq i \leq |X|$, introduce $L(e_i+x)-L(x) = X_i$.
    Weight each of these constraints by $w_5~:=~2^{-k}/|X|$. 
\end{enumerate}
We convert this instance $I$ to an unweighted instance by replacing each constraint edge pair $(e,R_e)$ of type (i), for $i \in [5]$, with $w_i N$-many copies of it, where $N$ is the least common multiple of $1/w_1,\ldots,1/w_5$. 
Observe that each of these constraints depend on at most $6$ variables in $X \cup Y \cup Z$. Finally, as we would like to measure swap sensitivity (rather than sensitivity) we would like to be able to recover any instance obtained by swapping $C$ for another circuit $C$ of the same size (encoding a constraint in our instance).\footnote{When we apply the Assignment Tester to the constraints of our instance, we will assume that every constraint is encoded by a circuit of the same size, by taking the maximum.} To accommodate this, let $\alpha$ be the maximum number of times that a specific $L(s)$ or $Q(q)$ appears in the constraints obtained from step (4). For every $s,x \in \{0,1\}^k$ and $t,q \in \{0,1\}^{k^2}$ for which a constraint of type (4) does not exist, we introduce a trivial (always satisfied) constraint on the variables $L(s+x),L(x),Q(t+q),Q(q)$, with multiplicity $\alpha w_4N$.

It remains to reduce these to binary constraints.

\paragraph{Sparsification.} Let $I=(X \cup Y \cup Z,E,\Sigma_0, {\cal R})$ be the instance constructed so far. We reduce these $6$-ary constraints to binary constraints by introducing an additional set of $2^6$-ary variables $W~:=~\{w_e : e \in E\}$. 
We replace each pair of edge and constraint $(e, R_e)$ belonging to $E$ and ${\cal R}$ as follows: Let $v_1,\ldots, v_t$ for $2 \leq t \leq 6$ be the set of variables on which $R_e$ depends.
Replace $R_e$ by constraints $R_{e,i}$ on $(w_e, v_i)$ for $i \in [t]$, where $(a,b) \in R_{e,i}$ if $b$ is consistent with $a$ and $a$ satisfies $R_e$. 
Let $I = (X \cup T, E, \Sigma_0, {\cal R})$, where $T :=X \cup Y \cup Z \cup W$, be the resulting instance that we have constructed. This completes our description of the \Call{AssignmentTester}{}.\\ 

We record some useful observations.
\begin{observation}
\label{obs:ATProps}
    Let $C$ be any circuit on variables $X$ and consider $I=\Call{AssignmentTester}{C}$. 
    Then following hold: 
    \begin{enumerate}
    \item The parameters $N$ and $k$ depend only on the number of gates of $C$.
    \item Replacing $C$ by any other circuit of the same size on the input variables $X$ will only modify $O(N)$-many constraints from (4). 
    \item Each $X$-variable has degree $D:=3N/|X|$, as each constraint in step (5) depends on exactly three variables and by the sparsification step we replace each such constraint by three binary constraints. 
\end{enumerate}

\end{observation}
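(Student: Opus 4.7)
The plan is to verify each of the three items directly from the construction of the \Call{AssignmentTester}{}, tracking how each parameter and each constraint is introduced during the variable introduction, constraint introduction, and sparsification steps.

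For item (1), I would observe that the auxiliary variable set $Y$ satisfies $|Y|=|C|$ (one variable per gate), so $k=|X|+|Y|=|X|+|C|$ is determined entirely by $|X|$ and the number of gates in $C$. The sets $L, Q$ likewise have sizes $2^k$ and $2^{k^2}$, which depend only on $k$. The five weights $w_1,\ldots,w_5$ are defined purely in terms of $k$ and $|\mathcal{P}|=|C|$, so their least common multiple $N$ is likewise a function of $|C|$ (with $|X|$ fixed), establishing the item.

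For item (2), I would enumerate which of the five constraint-introduction steps actually inspect the circuit $C$ as opposed to merely its input variables and gate count. Steps (1)--(3) implement BLR linearity tests and Hadamard consistency checks on $L$ and $Q$ alone, and step (5) enforces consistency between $L$ and the $X$-variables; none of these reference the gates of $C$. Only step (4) depends on the arithmetization $\mathcal{P}$ of $C$. Moreover, the trivial-filler construction appended to step (4) ensures that for every ``slot'' of variables $(L(s+x),L(x),Q(t+q),Q(q))$ there is a constraint (either an actual step-(4) constraint or a trivial one), each of multiplicity at most $\alpha w_4 N$, so the underlying hypergraph at these slots is independent of $C$. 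Swapping $C$ for a same-size circuit $C'$ can only alter which relation sits in each such slot, and the total number of such slots times the multiplicity is at most $O(1/w_4)\cdot \alpha w_4 N=O(N)$.

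For item (3), I would compute the degree of an arbitrary $X_i$-variable by noting that $X$-variables appear only in step (5). That step introduces a constraint $L(e_i+x)-L(x)=X_i$ for each $x\in\{0,1\}^k$ with weight $w_5=2^{-k}/|X|$; multiplying by $N$ and summing over the $2^k$ choices of $x$ yields $2^k\cdot w_5 N=N/|X|$ ternary hyperedges containing $X_i$. Sparsification then replaces each such ternary hyperedge by three binary constraints through the bottleneck variable $w_e$, and counting the three binary constraints associated with each step-(5) hyperedge yields the claimed degree $3N/|X|$. The main bookkeeping hurdle is item (2), where one must verify that the filler constraints line up exactly so that the underlying hypergraph (and thus the swap-distance accounting needed in the later application via \Cref{lem:sensitivity-reduction}) is genuinely invariant under replacing $C$ by a same-size circuit whose arithmetization $\mathcal{P}'$ differs from $\mathcal{P}$.
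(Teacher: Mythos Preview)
Your proposal is correct and follows the same approach as the paper, which records this as an observation without a separate proof (the justification for item (3) is even embedded in the statement itself). One minor bookkeeping note on item (2): your final line $O(1/w_4)\cdot \alpha w_4 N=O(N)$ tacitly assumes $\alpha=O(1)$, which you have not argued. A cleaner count avoids $\alpha$ entirely: the non-trivial step-(4) constraints are indexed by $(r,x,q)\in\{0,1\}^{|\mathcal P|}\times\{0,1\}^k\times\{0,1\}^{k^2}$ and each carries $w_4 N$ copies after normalization, so there are exactly $N$ of them; swapping $C$ for a same-size $C'$ therefore alters the relation on at most $O(N)$ hyperedges.
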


\subsubsection{Reduction Procedure}
Now, we describe the procedure \Call{AlphabetReduction}{}, building on \Call{AssignmentTester}{}.
Let $I = (V,E, \Sigma, {\cal R}) \in {\cal I}$ be a binary CSP instance with $n$ variables.
First, we encode the elements of $\Sigma$ in binary by applying the Hadamard code $\mathsf{Had} : \Sigma \rightarrow \{0,1\}^\ell$, where $\ell~:=~4\log |\Sigma|$; 
hence each vertex $v \in V$ is encoded by a set $X_v$ of $\ell$-many Boolean variables.
For each edge $e =(u,v)\in E$ we will replace the constraint $R_e \subseteq \Sigma^2$ by a Boolean circuit $C_e:\{0,1\}^{2 \ell} \rightarrow \{0,1\}$ such that $C_e(\alpha,\beta) = 1$ iff there are $a,b \in \Sigma$ such that $\mathsf{Had}(a)=\alpha, \mathsf{Had}(b)= \beta$, and $(a,b) \in R_e$. 
Let $X_e = X_u \cup X_v$ be the set of Boolean variables (encoding the two endpoints of $e$) on which $C_e$ depends. 
Finally, we will assume without loss of generality that the size of the circuits $C_e$ is the same for all $e \in E$ by padding. In particular, by \Cref{obs:ATProps}, the parameters $N$ and $k$ will be the same for every $e \in E$.

To generate the instance $I':=\Call{AlphabetReduction}{I}$, we will feed the circuit $C_e$ that simulates each constraint $R_e$ to the \Call{AssignmentTester}{}. 
Let $I_e = (X_e \cup T_e, E_e, \Sigma_0, {\cal R}_e):=\Call{AssignmentTester}{C_e}$. 
Define the instance $I':=(X' \cup T',E', \Sigma' = \Sigma_0, {\cal R}')$ as $X'=\bigcup_{e \in E} X_e$, $T'= \bigcup_{e \in E} T_e$, $E' := \bigcup_{e \in E} E_e$, and ${\cal R}' := \bigcup_{e \in E} {\cal R}_e$.

The goal of this section is to show the following:
\begin{lemma}\label{lem:alphabet-reduction}
    Let $\mathcal I$ be a swap-closed family of CSP instances and let $\mathcal I' = \Call{AlphabetReduction}{\mathcal I}$.
    Then we have 
    \[
       \swapsens_{1,1-\varepsilon'}(\swapclo(\mathcal I')) \geq \frac{  \swapsens_{1,1-\varepsilon}(\mathcal I)}{C_I \cdot C_\sigma},
    \] 
    where $C_I = N$, $C_\sigma = O(1/\log |\Sigma|)$, 
    and $\varepsilon' \geq \varepsilon/C$ for some universal constant $C>0$, and parameter $N$ from the \Call{AssignmentTester}{}. 
\end{lemma}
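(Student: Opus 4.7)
The plan is to instantiate \Cref{lem:sensitivity-reduction} by constructing a $(1, 1-\varepsilon, 1, 1-\varepsilon', C_I, C_\sigma)$-sensitivity-preserving reduction $(\Call{AlphabetReduction}{}, T_\sigma)$ on $\mathcal I$ with the claimed parameters $C_I = N$, $C_\sigma = O(1/\log|\Sigma|)$, and $\varepsilon' = \Theta(\varepsilon)$. The map $T_I$ is already defined as $\Call{AlphabetReduction}{}$, so I only need to define $T_\sigma$ and verify the four conditions of \Cref{def:sensitivity-reduction}.

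Define $T_\sigma$ by the randomized thresholding decoder sketched in \Cref{subsec:proof-overview}: given $\sigma' : V' \to \Sigma'$, sample $\boldsymbol\tau$ uniformly from $[0, 1/4]$, and for each $v \in V$ set $\bm\sigma(v) := a$ if there exists $a \in \Sigma$ with $\Ham(\sigma'|_{X_v}, \mathsf{Had}(a))/\ell \leq \boldsymbol\tau$; such an $a$ is unique because $\boldsymbol\tau \leq 1/4$ lies strictly within the unique-decoding radius of the Hadamard code, which has relative minimum distance $1/2$. Otherwise set $\bm\sigma(v) := a_0$ for a fixed default label $a_0 \in \Sigma$. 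Completeness is immediate: any satisfying $\sigma^* : V \to \Sigma$ for $I$ extends to a satisfying assignment for $I'$ by setting the $X$-block of each vertex $v$ to $\mathsf{Had}(\sigma^*(v))$ and filling the auxiliary $T'$-variables (the Hadamard tables $L, Q$ and the sparsification variables $W$) as prescribed by the \Call{AssignmentTester}{} construction. The bound $C_I = N$ is exactly \Cref{obs:ATProps}(2): swapping $R_e$ in $I$ for another binary relation leaves the size of the encoding circuit $C_e$ unchanged and hence modifies only the $O(N)$ constraints produced by step $(4)$ of \Call{AssignmentTester}{}$(C_e)$.

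For soundness I condition on $\bm\sigma' = \sigma'$, let $\eta_e$ denote the fraction of constraints in $I_e = \Call{AssignmentTester}{C_e}$ falsified by $\sigma'$, and observe that $\E_{e \sim E}[\eta_e] \leq \varepsilon'$ since (up to equal weighting) the constraints of $I'$ decompose across the $I_e$. By \Cref{lem:assignTest}, if $\eta_e \leq \delta/48$ then $\sigma'|_{X_e}$ is $\delta$-close to $(\mathsf{Had}(a), \mathsf{Had}(b))$ for some $(a, b) \in R_e$. Hence, whenever $\boldsymbol\tau \geq 48\eta_e$, the decoder returns $\bm\sigma(u) = a$ and $\bm\sigma(v) = b$, which satisfies $(e, R_e)$; the failure probability per edge is at most $192 \eta_e$ because $\boldsymbol\tau$ is uniform on $[0,1/4]$. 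Summing over edges yields $\E \cost_I(\bm\sigma) \leq 192\varepsilon'$, so setting $\varepsilon' = \varepsilon/192$ suffices. Unconditioning on $\bm\sigma'$ proceeds by the same argument used in the proof of \Cref{lem:degree-reduction}.

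The main obstacle is the sensitivity bound $C_\sigma = O(1/\log|\Sigma|)$. Take two assignments $\sigma', \tilde\sigma'$ with $\Ham(\sigma', \tilde\sigma') = 1$ (the general EMD case follows by coupling, as in \Cref{lem:degree-reduction}) and couple the two invocations of $T_\sigma$ by sharing the same $\boldsymbol\tau$. Since the differing bit lies in some block $X_v$, only $\bm\sigma(v)$ can disagree with $\tilde{\bm\sigma}(v)$. Let $d, \tilde d$ be the relative Hamming distances from $\sigma'|_{X_v}, \tilde\sigma'|_{X_v}$ to their nearest Hadamard codewords; then $|d - \tilde d| \leq 1/\ell$. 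A short case analysis on the position of $d, \tilde d$ relative to $1/4$ shows that the decoders disagree only when $\boldsymbol\tau$ lies in an interval of length $\leq 1/\ell$: when both distances are $\leq 1/4$ the nearest codewords are forced to agree (Hadamard's minimum distance $1/2$ implies that a single bit flip cannot leave the unique-decoding ball and enter another), so disagreement is confined to $\boldsymbol\tau \in [\min(d,\tilde d), \max(d,\tilde d)]$; the mixed case $d \leq 1/4 < \tilde d$ forces $1/4 - d \leq 1/\ell$, bounding the interval similarly; and when both exceed $1/4$ both sides return $a_0$. Under $\boldsymbol\tau \sim \text{Uniform}[0, 1/4]$ this gives disagreement probability $\leq 4/\ell = O(1/\log|\Sigma|)$. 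Averaging over the coupling yields the claimed $C_\sigma$, and the lemma then follows from \Cref{lem:sensitivity-reduction}. The delicate point is precisely the uniqueness-of-nearest-codeword verification across a bit flip; everything else is routine.
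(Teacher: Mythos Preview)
Your proof is correct and follows essentially the same approach as the paper: you define the same randomized-threshold decoder $T_\sigma$, verify the four conditions of \Cref{def:sensitivity-reduction}, and invoke \Cref{lem:sensitivity-reduction}. The paper's $C_\sigma$ analysis splits on whether the nearest codewords $c_u,\tilde c_u$ coincide (obtaining $8/\ell$), while you split on where $d,\tilde d$ sit relative to $1/4$ (obtaining $4/\ell$); both are routine. For soundness the paper does a two-case analysis on whether $(a_u,a_v)\in R_e$ and arrives at the constant $384$, whereas you apply \Cref{lem:assignTest} directly—note though that $48\eta_e$-closeness is measured on $|X_e|=2\ell$ bits, so each block can be off by up to $96\eta_e\ell$, giving failure probability $384\eta_e$ rather than your stated $192\eta_e$; this factor-of-two slip is harmless for the claimed $\varepsilon'=\Theta(\varepsilon)$. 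You should also remark that if the flipped bit lies in $T'$ rather than some $X_v$, the decoder ignores it and $\bm\sigma=\tilde{\bm\sigma}$ trivially.
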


Consider the following algorithm $T_\sigma$, which given an assignment $\sigma' : V' \to \Sigma'$ to $I'$, returns an assignment $\bm \sigma : V \to \Sigma$ to $I$ as follows. 
For each $u \in V$, denote by $\sigma'[u]$ the restriction of $ \sigma'$ to the block of variables which represent $ u$'s label. Then $ \sigma'[u]$ is a purported Hadamard codeword on $\ell = 4 \log |\Sigma|$-many bits. 
Let $c_u$ be the closest codeword to $\sigma'[u]$.
Consider the ``Swiss cheese'' $S \subseteq \{0,1\}^\ell$ obtained from the hypercube $\{0,1\}^\ell$ by removing any point which is within the unique decoding radius (the Hamming ball of radius $\ell/4$) of any codeword; an illustration is given in \Cref{fig:swiss}.
Let $\delta_u = \Ham(\sigma'[u],c_u)/\ell$.
Note that $\delta_u \in [0,1/4]$.
Then, we choose a threshold $\tau \in [0,1]$ uniformly at random (we use this threshold for all $u \in V$).
Then, we set $\bm \sigma(u) = a$ if $4\delta_u \leq \tau$, where $a \in \Sigma$ is such that $\mathsf{Had}(a) = c_u$.
Otherwise, we set $\bm \sigma(u) = r$, where $r$ is an arbitrary but fixed 
label in $\Sigma$.
Let $\mu_u$ be this distribution, where $\mu_u(1)$ denotes the first event and $\mu_u(2)$ denotes the second.

\begin{figure}
    \centering
    \begin{tikzpicture}

        \filldraw[color=black!60, fill=betterYellow!16, very thick](0,-1) -- (2,-3) -- (6,-2) -- (6,1.5) -- (4,3.5) -- (0,2.5);

        \draw[color=black!60, very thick] (4,0)-- (4,3.5) -- (6,1.5) -- (6,-2) -- cycle;

        \draw[color=black!60, very thick] (0,-1)-- (0,2.5) -- (2,0.5) -- (2,-3) -- cycle;

        \draw[color=black!60,  very thick] (0,-1) -- (4,0);

        \draw[color=black!60, very thick] (2,0.5) -- (6,1.5);

        \filldraw[color=black!60, fill=white, very thick](0,2.5) circle (1.5);

        \filldraw[color=white, fill=white, very thick](-0.03,1) -- (1.45,2.89) -- (1.43,4) -- (-1.5,4) --  (-1.5,1) --cycle;

        \draw[color=black!60, very thick, dashed] (0,2.5) ellipse (1.3cm and 0.5cm);

        \node at (0,1.8)(tau1){$\tau$};
        \filldraw[color=black!60,fill=red!12, very thick](0,2.5) circle (0.2);

        \filldraw[color=black!60, fill=white, very thick](6,-2)  circle (1.5);

        \filldraw[color=white, fill=white, very thick] (4.55,-2.38) -- (6,-0.5) -- (7.5,-0.5) -- (7.5, -3.5) -- (4.55,-3.5) --  cycle;

        \node at (6,-1.3)(tau2){$\tau$};
        \draw[color=black!60, very thick, dashed] (6,-2) ellipse (1.3cm and 0.5cm);
        \filldraw[color=black!60,fill=red!12, very thick](6,-2) circle (0.2);
    \end{tikzpicture}
    \caption{Conceptual illustration of the ``Swiss cheese'' $S \subseteq \{0,1\}^\ell$. Vertices of $\{0,1\}^\ell$ which are codewords are indicated by red circles, and the Hamming balls of radius $\ell/4$ around each one of them has been removed from $\{0,1\}^\ell$ to form $S$. The random threshold $\tau$ chosen in the recovery procedure is indicated by the dashed circle.}
    \label{fig:swiss}
\end{figure}
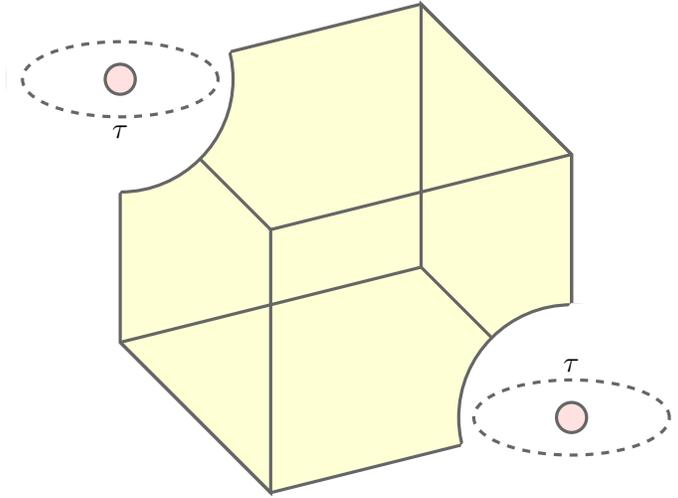

We claim that $(\Call{AlphabetReduction}{}, T_\sigma)$ is a $(1,1-\varepsilon, c'= 1, s' = 1-\varepsilon', C_I=N, C_\sigma = 8/\ell)$-sensitivity-preserving reduction, where $\varepsilon' = \Theta(\varepsilon)$.
The analysis for $c'$ is obvious.
Also, $C_I = N$ suffices by \Cref{obs:ATProps}. 
Hence, we focus on analyzing $C_\sigma$ and $s'$ in the rest of this section.

\subsubsection{Sensitivity Increase in the Recovery Procedure}
We now analyze the sensitivity increase in the recovery procedure $T_\sigma$.
\begin{lemma}
     The choice $C_\sigma = 8/\ell$ satisfies \Cref{item:reduction-assignment-distance} of \Cref{def:sensitivity-reduction}.
\end{lemma}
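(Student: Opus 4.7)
The plan is to reduce via coupling to the case where $\bm\sigma', \tilde{\bm\sigma}'$ are deterministic assignments differing in a single bit, then exploit the randomness of $\tau$ to show that the output at the unique affected vertex changes with probability only $O(1/\ell)$.

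Fix a coupling $\pi \in \Pi(\bm\sigma', \tilde{\bm\sigma}')$ achieving $\EMD(\bm\sigma', \tilde{\bm\sigma}')$, and couple the internal randomness of $T_\sigma$ on both sides by using a single common threshold $\tau$; this is an admissible coupling between $T_\sigma(\bm\sigma')$ and $T_\sigma(\tilde{\bm\sigma}')$, yielding
\[
\EMD(T_\sigma(\bm\sigma'), T_\sigma(\tilde{\bm\sigma}')) \leq \E_{(\sigma', \tilde\sigma') \sim \pi}\, \E_\tau \Ham(T_\sigma(\sigma'; \tau), T_\sigma(\tilde\sigma'; \tau)).
\]
For fixed $\sigma', \tilde\sigma'$ at Hamming distance $k$, interpolate along a one-bit-flip path $\sigma' = \pi_0, \pi_1, \ldots, \pi_k = \tilde\sigma'$ and apply the triangle inequality to reduce to bounding $\E_\tau \Ham(T_\sigma(\pi_{i-1}), T_\sigma(\pi_i)) \leq 8/\ell$ per step.

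Now consider a single flipped bit. If it lies in $T'$, the outputs coincide; otherwise it lies in the block $\sigma'[u]$ for a unique $u \in V$, and only $\bm\sigma(u)$ can differ. Write $\delta'_u, c'_u$ and $\delta''_u, c''_u$ for the fractional distances and closest codewords on the two sides. A single bit flip shifts the distance to any fixed codeword by at most $1$, so $|\delta'_u - \delta''_u| \leq 1/\ell$. If $c'_u = c''_u$, the outputs disagree only when $\tau$ lies in the interval of length at most $4/\ell$ between $4\delta'_u$ and $4\delta''_u$. If instead $c'_u \neq c''_u$, the minimum distance $\ell/2$ of the Hadamard code together with the triangle inequality give $\Ham(\sigma'[u], c''_u) \geq \ell/2 - \ell\delta'_u$, and a further bit flip yields $\ell\delta''_u \geq \ell/2 - \ell\delta'_u - 1$, whence $\delta'_u + \delta''_u \geq 1/2 - 1/\ell$. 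Combined with $\delta'_u, \delta''_u \leq 1/4$, both distances lie in $[1/4 - 1/\ell,\, 1/4]$. Hence the outputs can disagree only when $\tau$ falls in the length-$\leq 4/\ell$ interval between $4\delta'_u$ and $4\delta''_u$ (one output is a codeword label, the other is $r$), or when $\tau > 4\max(\delta'_u, \delta''_u) \geq 1 - 4/\ell$ (two distinct codeword labels); a union bound gives total disagreement probability at most $8/\ell$.

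The main obstacle is the case $c'_u \neq c''_u$: a priori, switching the closest codeword could change the recovered label between two completely different symbols with constant probability, which would destroy any sensitivity bound. The key observation that avoids this is that, by the Hadamard minimum distance, such a flip can occur only when $\sigma'[u]$ is within $1/\ell$ of the unique-decoding boundary, forcing $\tau$ to exceed $4\max(\delta'_u, \delta''_u)$ with probability only $4/\ell$; together with the easy $c'_u = c''_u$ case this yields the stated $C_\sigma = 8/\ell$.
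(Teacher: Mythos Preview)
Your proof is correct and takes essentially the same approach as the paper: reduce to a single bit flip affecting one block $\sigma'[u]$, then case-split on whether the closest codeword changes, bounding the disagreement probability by $4/\ell$ and $8/\ell$ respectively. Your Case~2 analysis is more explicit (deriving $\delta'_u,\delta''_u\in[1/4-1/\ell,\,1/4]$ from the minimum distance and then splitting on the position of $\tau$) than the paper's, which simply asserts both points lie on the unique decoding radius, but the structure and conclusion are identical.
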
    
\begin{proof}
    Consider any pair of assignments $\sigma', \tilde \sigma' : V' \to \Sigma'$ such that $\Ham(\sigma', \tilde \sigma')=1$. 
    As well, let $\bm \sigma = T_\sigma(\sigma')$ and $\tilde { \bm \sigma} = T_\sigma(\tilde \sigma' )$. 
    Then, 
    \begin{align*}
         \EMD(\bm \sigma,\tilde{\bm \sigma}) 
        & \leq \sum_{u \in V} \TV \Big(\bm \sigma(u) ,~ \tilde{\bm \sigma}(u)  \Big).
    \end{align*}
    Suppose $\Ham(\sigma'[u],\tilde \sigma'[u]) = 1$ for a vertex $u \in V$.
    Let $c_u, \tilde c_u \in \{0,1\}^\ell$ be the closest codewords to $\sigma'[u]$ and $\tilde \sigma'[u]$, respectively.  we consider two cases based on whether these codewords are the same. 

    \begin{description}
    \item[Case 1. ] If $c_u = \tilde c_u$: Then, 
    \begin{align*}
        & \TV \qty(\bm \sigma(u),~ \tilde{\bm \sigma}(u) ) 
        = \TV \qty(\mu_u, \tilde \mu_u ) 
        = \frac{1}{2} \Big( |\mu_u(1) - \tilde \mu_u(1) | + | \mu_u(2) - \tilde \mu_u(2)| \Big)  \\
        &= \frac{1}{2 \cdot (\ell/4)} \Big(|\Ham(\sigma'[u], c_u) - \Ham(\tilde \sigma'[u], c_u)| + |\Ham(\sigma'[u], S) - \Ham(\tilde \sigma'[u], S)| \Big) \\
        &\leq \frac{1}{\ell/4} \cdot \Ham(\sigma'[u], \tilde \sigma'[u]). \tag{By the triangle inequality} \\
        & = \frac{4}{\ell}.
    \end{align*}
    \item[Case 2.] If $c_u \neq \tilde c_u$: As $\Ham(\sigma', \tilde \sigma')=1$, this can only occur if $\sigma'[u], \tilde \sigma'[u]$ lie on the unique decoding radius of their respective codewords.
    Let $a,b \in \Sigma$ be such that $\mathsf{Had}(a) = c_u, \mathsf{Had}(b)=\tilde c_u$.
    Then $\bm \sigma(u), \tilde{\bm \sigma}(u)$ take on values $a,b$, respectively, with probability at most $1/(\ell/4)$.
    Then, we have
    \begin{align*}
        & \TV \qty(\bm \sigma(u),~ \tilde{\bm \sigma}(u) ) 
        \leq 
        \qty|\Pr \big[\bm \sigma(u) = a \big] - \Pr \big[ \tilde {\bm \sigma}(u) = b \big]| 
        \leq 
        \frac{2}{\ell/4} = \frac{8}{\ell}.
    \end{align*}
    \end{description}
    Hence, the claim follows.
\end{proof}

\subsubsection{Gap Analysis}
Now we analyze the gap decrease of \Call{AlphabetReduction}{}.
\begin{lemma}
    The choice $s' = 1-\varepsilon'$ for  $\varepsilon' = \varepsilon/C$ satisfies \Cref{item:reduction-soundness} of \Cref{def:sensitivity-reduction}, where $C>0$ is a universal constant.
\end{lemma}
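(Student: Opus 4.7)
The plan is to analyze the soundness edge-by-edge, comparing, for each original edge $e=(u,v) \in E$, the probability that the recovered assignment $\bm \sigma$ violates $(e,R_e)$ with the fraction of constraints of the associated assignment-tester sub-instance $I_e$ violated by $\sigma'$. As is standard, I would condition on $\bm \sigma' = \sigma'$, prove a pointwise bound on the conditional expected cost, and uncondition at the end. For each $u \in V$, let $c_u$ be the codeword closest to $\sigma'[u]$ (ties broken arbitrarily), $a_u \in \Sigma$ the label with $\mathsf{Had}(a_u) = c_u$, and $\delta_u = \Ham(\sigma'[u], c_u)/\ell \in [0,1/4]$. For an edge $e=(u,v) \in E$, let $\delta_e$ denote the relative Hamming distance from $\sigma'[X_e]$ to the nearest satisfying assignment of $C_e$, and let $p_e := \Pr_\tau[\bm \sigma \text{ violates } (e, R_e)]$.

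The main technical claim is that $p_e \leq 8\delta_e$ for every edge $e$, proved by casework on whether $(a_u, a_v) \in R_e$. If $(a_u, a_v) \in R_e$, then $(c_u, c_v)$ is a satisfying assignment of $C_e$ at relative distance $(\delta_u + \delta_v)/2$ from $\sigma'[X_e]$; since the Hadamard code has minimum relative distance at least $1/2$ while $\delta_u, \delta_v \leq 1/4$, no other satisfying assignment can do better, giving $\delta_e = (\delta_u+\delta_v)/2$ exactly. By the thresholding recovery procedure, $\bm \sigma$ can violate $(e, R_e)$ only when $\tau < \max(4\delta_u, 4\delta_v)$, which has probability at most $4(\delta_u + \delta_v) = 8\delta_e$. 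If instead $(a_u, a_v) \notin R_e$, then every satisfying assignment $(\mathsf{Had}(a'), \mathsf{Had}(b'))$ of $C_e$ must have $a' \neq a_u$ or $b' \neq a_v$; combining the $\ell/2$ minimum distance of the Hadamard code with the triangle inequality then forces $\delta_e \geq 1/4 - \max(\delta_u, \delta_v)/2 \geq 1/8$. Since $p_e \leq 1$ trivially, $p_e \leq 8\delta_e$ again.

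The rest is bookkeeping. Applying the soundness of the assignment tester (\Cref{lem:assignTest}) to the sub-instance $I_e$ gives $\cost_{I_e}(\sigma'[X_e \cup T_e]) \geq \delta_e/48$, regardless of the assignment $\sigma'[T_e]$ to the auxiliary variables. Because all circuits $C_e$ are padded to a common size, every $I_e$ contributes the same number of constraints to $I'$, so averaging over edges yields $\E_\tau[\cost_I(\bm \sigma) \mid \sigma'] = \E_e[p_e] \leq 8 \cdot 48 \cdot \E_e[\cost_{I_e}(\sigma'[X_e \cup T_e])] = 384 \cdot \cost_{I'}(\sigma')$. Unconditioning on $\bm \sigma'$ and setting $C = 384$ (so $\varepsilon' = \varepsilon/384$) gives the desired soundness guarantee. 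The main care point is handling the edge case where $\delta_u$ or $\delta_v$ equals $1/4$ exactly so that the nearest codeword is not unique; any fixed tie-breaking rule is harmless and the bounds above go through unchanged.
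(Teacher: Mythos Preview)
Your proof is correct and follows essentially the same edge-by-edge case analysis as the paper: both arguments split on whether $(a_u,a_v)\in R_e$, use the $\ell/2$ minimum distance of the Hadamard code to lower-bound the distance to a satisfying assignment when $(a_u,a_v)\notin R_e$, use closeness to $(c_u,c_v)$ when $(a_u,a_v)\in R_e$, and arrive at the same constant $C=384$ via the assignment-tester soundness factor $1/48$. The only cosmetic differences are that you track $\delta_e$ explicitly and bound $p_e \le 8\delta_e$ uniformly, whereas the paper works directly with $\varepsilon_e,\varepsilon'_e$; and your Case-1 bound $\delta_e=(\delta_u+\delta_v)/2$ is slightly sharper than the paper's $\delta_e\ge \max(\delta_u,\delta_v)/2$, though this makes no difference to the final constant.
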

\begin{proof}

     Let $\sigma'$ be an assignment to $I'$ with $\val_{I'}(\sigma') \geq 1-\varepsilon'$.
    Let $\bm \sigma$ be the assignment recovered from $\bm \sigma'$ by $T_\sigma$.
    Our goal is to show that $\E \val_I(\bm \sigma) \geq 1-\varepsilon$.
    We can handle the situation that we have a random assignment $\bm \sigma'$ for $I'$ by the conditioning argument as in the proof of \Cref{lem:degree-reduction}.

    (Generalizing this argument to the case where $\sigma'$ is a random assignment is straightforward.)

    For $e \in E$, let $\varepsilon_e$ be the probability that $\bm \sigma$ violates the constraint on $e$. 
    Also, let $\varepsilon'_e$ be the fraction of the constraints in $I_e$ violated by $\sigma'$.
    Note that $\sum_{e \in E}\varepsilon'_e /m \leq \varepsilon'$.
    For $u \in V$, let $c_u \in \{0,1\}^\ell$ be the closest codeword to $\sigma'[u]$.
    Let $a_u \in \Sigma$ be such that $\mathsf{Had}(a_u) = c_u$.
    Let $\delta_u = \Ham(\sigma'[u],c_u)/\ell$ .
    Then, the probability that $\bm \sigma(u) = a_u$ is 
    \begin{align*}
        & p_u := \frac{\ell/4-\Ham(\sigma'[u], c_u)}{\ell/4}
        = \frac{1/4-\delta_u}{1/4} = 1 - 4\delta_u.
    \end{align*}

    Fix an edge $e = (u,v) \in E$ of $I$.
    Let $(c^*_u,c^*_v)$ be the satisfying assignment to $I_e$ which is closest to $(\sigma'[u],\sigma'[v])$.     
    As $(c^*_u,c^*_v)$ is satisfying, they must be Hadamard codewords, and hence there are $a^*_u,a^*_v \in \Sigma$ such that $\mathsf{Had}(a^*_u)=c^*_u$, $\mathsf{Had}(a^*_v)=c^*_v$, and $(a^*_u,a^*_v) \in R_e$.       
    Let $p_e$ be the probability that both $\bm \sigma(u) = a_u$ and $\bm \sigma(v) = a_v$ hold.
    Because we use threshold rounding, we have $p_e = 1 - \max\{4\delta_u,4\delta_v\} = \min\{p_u,p_v\}$.

    We consider two cases.
    \begin{description}
    \item[Case 1.] $(a_u,a_v) \not \in R_e$.
        Then at least one of $a^*_u \neq a_u$ or $a^*_v \neq a_v$.
        Suppose without loss of generality that the first holds.
        Then, 
        \begin{align*} 
            \frac{\ell}{2} &\leq \Ham \Big(\mathsf{Had}(a^*_u),~ \mathsf{Had}(a_u) \Big) \tag{As Hadamard codewords have distance $1/2$. } \\
            &= \Ham \Big(c^*_u,~ c_u \Big) \tag{$\mathsf{Had}(a_u) = c_u$}\\
            &\leq \Ham \Big(c^*_u, \sigma'[u] \Big) + \Ham \Big(\sigma'[u], c_u\Big) \\
            &\leq 2 \Ham \Big( c^*_u, \sigma'[u] \Big),
        \end{align*}
        where the final inequality follows as $c_u$ is the closest codeword to $\sigma'[u]$. 
        Hence, $(\sigma'[u], \sigma'[v])$ is  $(1/8)$-far from $(c^*_u,c^*_v)$, and hence $(1/(8\cdot48))$-far from any satisfying assignment to $I_e$ by \Cref{lem:assignTest}.        
        This means $\varepsilon'_e \geq 1/(8\cdot48) \geq \varepsilon_e / 384$, where the last inequality is by $\varepsilon_e \leq 1$. 
    \item[Case 2.] $(a_u,a_v) \in R_e$.
        Suppose without loss of generality that $\delta_u \geq \delta_v$ holds.
        Then, $\Ham(\sigma'[u], c^*_u) \geq \Ham(\sigma'[u], c_u) \geq \delta_u \ell$ holds because $c_u$ is the closest codeword to $\sigma'[u]$.
        Hence, $\sigma'([u],[v])$ is at least $(\delta_u/2)$-far from $(c^*_u,c^*_v)$, and it follows that $(\delta_u/2)$-far from any satisfying assignment to $I_e$ by \Cref{lem:assignTest}.
        This implies $\varepsilon'_e \geq \delta_u/(2\cdot 48)$.

        On the other hand, the probability that we have both $\bm \sigma(u) = a_u$ and $\bm \sigma(v) = a_v$ is $p_u$.
        Hence $\epsilon_e \leq 1 - p_u = 4\delta_u$ and we have  $\varepsilon'_e \geq \varepsilon_e/(2 \cdot 48 \cdot 4) = \varepsilon_e/384$.
    \end{description}
    Combining the two cases, we have
    \[
        \varepsilon' = \frac{1}{m}\sum_{e \in E} \varepsilon'_e
        \geq \frac{1}{m} \sum_{e \in E} \frac{\varepsilon_e}{384}
        = \frac{\cost_I(\bm \sigma)}{384}.
    \]
    Hence, we have $\cost_I(\bm \sigma) \leq \epsilon$ by setting $\varepsilon' = \varepsilon/384$.
\end{proof}

\subsubsection{Sensitivity Recovery}
Finally, we combine the \Call{AlphabetReduction}{} procedure with \Call{DegreeReduction}{} in order to restore the our sensitivity bound.

\begin{lemma}\label{lem:alphabet-reduction+degree-reduction}
    Let $\mathcal I$ be a swap-closed family of instances of a binary CSP and consider the family of instances $\mathcal I' = \Call{DegreeReduction}{\Call{AlphabetReduction}{\mathcal I}}$.
    Then for any $\varepsilon>0$, we have 
    \[
        \swapsens_{1,1-\varepsilon'}(\swapclo(\mathcal I')) \geq  \Omega\qty(\swapsens_{1,1-\varepsilon}(\mathcal I) ),
    \] 
    for $\varepsilon'=\varepsilon/C$ for some universal constant $C>0$.
\end{lemma}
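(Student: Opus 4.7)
The plan is to compose \Call{AlphabetReduction}{} with the marked degree reduction of \Cref{cor:degree-reduction}, exploiting the fact that each variable in $X'$ has high degree in $\Call{AlphabetReduction}{\mathcal I}$. This high degree will exactly cancel the $\Omega(N/\ell)$ sensitivity loss from \Cref{lem:alphabet-reduction}, where $\ell = 4\log|\Sigma|$, leaving the sensitivity bound unchanged up to constant factors. The soundness $\varepsilon$ shrinks by a constant in each step, yielding $\varepsilon' = \varepsilon/C$ overall.

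First, I would reinterpret the conclusion of \Cref{lem:alphabet-reduction} as a bound on \emph{marked} swap sensitivity, where the marked vertex set for each instance in $\Call{AlphabetReduction}{\mathcal I}$ is $X'$. Inspecting the recovery procedure $T_\sigma$ used in that proof, one sees that $\bm \sigma$ depends only on the bits of $\sigma'$ lying in $X'$: the variables in $T'$ are never read. Hence the bound $C_\sigma = 8/\ell$ goes through verbatim when Hamming distance is measured only on $X'$, and the same reduction shows that the marked swap sensitivity of $\swapclo(\Call{AlphabetReduction}{\mathcal I})$ with marked set $X'$ is at least $\Omega(\ell/N) \cdot \swapsens_{1,1-\varepsilon}(\mathcal I)$ at parameters $(1,1-\varepsilon_1)$ with $\varepsilon_1 = \Theta(\varepsilon)$.

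Next, I would apply \Cref{cor:degree-reduction} to this marked family. By item~3 of \Cref{obs:ATProps}, each input variable of an \Call{AssignmentTester}{} invocation over $|X_e| = 2\ell$ input variables has degree $3N/(2\ell)$ within the constraints generated for that single edge. Since each marked variable $x \in X'$ corresponds to one Hadamard bit of some original vertex $u \in V$ and appears in the assignment-tester outputs for every edge incident to $u$ (of which there is at least one), the minimum degree of a marked vertex in $\Call{AlphabetReduction}{\mathcal I}$ is at least $d := 3N/(2\ell)$. Thus \Cref{cor:degree-reduction} multiplies the marked sensitivity by $\Omega(N/\ell)$, and combined with the previous step the two factors cancel, yielding marked sensitivity $\Omega(\swapsens_{1,1-\varepsilon}(\mathcal I))$ at parameters $(1,1-\varepsilon')$. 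Since marked swap sensitivity is always at most unmarked swap sensitivity, this implies the claimed lower bound for $\swapsens_{1,1-\varepsilon'}(\swapclo(\Call{DegreeReduction}{\Call{AlphabetReduction}{\mathcal I}}))$.

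The main conceptual obstacle is the first step: verifying that the sensitivity bound of \Cref{lem:alphabet-reduction} survives passage to the marked setting. The remaining items are bookkeeping --- tracking how the constants $C_{AR}, C_{DR}$ propagate into the soundness parameter, observing that swap closure only alters constraints and hence preserves the degrees of marked vertices, and passing from marked to unmarked sensitivity at the very end via the trivial inequality between the two notions.
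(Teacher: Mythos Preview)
Your proposal is correct and follows essentially the same approach as the paper: mark the $X'$ vertices in $\Call{AlphabetReduction}{\mathcal I}$, observe that the recovery procedure $T_\sigma$ depends only on $X'$ so that \Cref{lem:alphabet-reduction} yields a marked-sensitivity bound, then apply \Cref{cor:degree-reduction} using the $\Theta(N/\ell)$ degree of each $X'$-vertex (from \Cref{obs:ATProps}(3)) to cancel the $C_I C_\sigma = \Theta(N/\ell)$ loss. If anything, your write-up is slightly more careful than the paper's, which speaks of the ``maximum degree'' of $X'$-vertices where the minimum is what \Cref{cor:degree-reduction} requires; your explicit remarks on why $T_\sigma$ ignores $T'$, why swap closure preserves marked-vertex degrees, and the final marked-to-unmarked inequality fill in details the paper leaves implicit.
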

\begin{proof}
    Let $I'=(X' \cup T',E', \Sigma' = \Sigma_0, {\cal R}')= \Call{AlphabetReduction}{I}$ for an instance $I \in {\cal I}$, where $X'$ are the vertices which represent the Hadamard encoding of the vertices of $I$. Let $D$ be the maximum degree of any $X'$-vertex among all $I \in {\cal I}'$. Our aim is to apply \Cref{cor:degree-reduction} to $I'$. Define the marked CSP $\hat I'$ which is $I'$ with marked set of vertices $S=X'$. 
    Then by \Cref{lem:alphabet-reduction} and \Cref{cor:degree-reduction}, we have
    \begin{align*}
        & \swapsens_{1,1-\varepsilon'}(\swapclo(\mathcal I'))
        \geq 
        \frac{D \cdot \swapsens_{1,1-\varepsilon}(\mathcal I)}{C_I C_\sigma} \\
        & = 
        \Omega\qty(\frac{N/\log|\Sigma| \cdot \swapsens_{1,1-\varepsilon}(\mathcal I)}{N \cdot (1/\log |\Sigma|)})   
        = \Omega\qty(\swapsens_{1,1-\varepsilon}(\mathcal I)).
        \qedhere
    \end{align*}
\end{proof}

\subsection{Proof of \Cref{thm:label-cover}}\label{subsec:label-cover}

\begin{proof}[Proof of \Cref{thm:label-cover}]
    Starting with the $\Omega(1/n)$ gap between completeness and soundness from \Cref{lem:2-xor}, we gradually increase this gap  by applying the reductions developed in this section.

    Each round consists of \Cref{lem:expanderization}, \Cref{lem:powering+degree-reduction} with a sufficiently large constant $t$, and \Cref{lem:alphabet-reduction+degree-reduction}, in this order, and we apply the reductions for $r := O(\log_t n)$ rounds, where $\delta>0$ is a sufficiently small constant.
    The gap amplification and subsequent degree reduction steps increase the number of vertices by a factor of $O(d^t)$, where $d$ is the degree of the instance at the beginning of the round, and the alphabet reduction and subsequent degree reduction steps increase the number of vertices by a factor of $2^{O(k^2)}$, where $k$ is from \Call{AssignmentTester}{}.
    The instance size after $r$ rounds is 
    \[
        N := n \cdot \qty(O\qty(d^t) \cdot 2^{O(k^2)})^r = n^{O(t/\log t)}.
    \]
    This implies $n = N^{\Omega(\log t/t)}$.
    Each round decreases the sensitivity lower bound by a constant.
    Then after $r$ rounds, the sensitivity lower bound becomes
    \[
        \frac{n}{O(1)^r} = n^{1-O(1/\log t)} = N^{\Omega(\log t/t)}.
    \]
    Each round increases the gap between the completeness and soundness by $t/C$ for some universal constant $C>0$.
    Hence by choosing the hidden constant in $r$ to be large enough, the gap after $r$ rounds is 
    \[
        \Omega\qty(\frac{1}{n}) \cdot \qty(\frac{t}{C})^r = 
        \Omega(1).
    \]
    To obtain label cover instances, observe that the sparsification process in the alphabet reduction step generates a label cover instance. Therefore, we can obtain label cover instances simply by omitting the degree reduction step at the end.
    Finally, \Cref{lem:sens-and-swap-sens} yields the theorem.
\end{proof}

\section{Maximum Clique and Related Problems}\label{sec:max-clique}

A vertex set $S \subseteq V$ in a graph $G = (V, E)$ is called a \emph{clique} if every pair of vertices in $S$ is adjacent in $G$. 
In the \emph{maximum clique problem}, given a graph $G = (V, E)$, the goal is to find a clique of maximum size.
We first show a sensitivity lower bound for $(1-\varepsilon)$-approximation algorithms for some $\varepsilon>0$ in \Cref{subsec:max-clique-1-eps}.
As a corollary, we also obtain a lower bound for $(1+\varepsilon)$-approximation algorithm for the minimum vertex cover problem.
Then, we show a sensitivity lower bound for $n^{-\varepsilon}$-approximation algorithm for some $\varepsilon>0$ in \Cref{subsec:max-clique-eps}.
For a graph family $\mathcal G$, let $\clo(\mathcal G)$ denote the \emph{deletion closure} of $\mathcal G$, i.e., the family of graphs obtained from a graph in $\mathcal G$ by deleting edges.

\subsection{Lower Bounds for \texorpdfstring{$(1-\varepsilon)$}{1-eps}-Approximation Algorithms}\label{subsec:max-clique-1-eps}

In this section, we show the following:
\begin{theorem}\label{thm:max-clique-1-eps}
    There exist universal constants $\varepsilon,\delta>0$ such that any $(1-\varepsilon)$-approximation algorithm for the maximum clique problem has sensitivity $\Omega(n^{\delta})$.
\end{theorem}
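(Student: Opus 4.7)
The plan is to reduce from $\mathsf{LabelCover}_{1,1-\varepsilon}$, whose sensitivity is $\Omega(n^\delta)$ by \Cref{thm:label-cover}. Given such an instance $I = (U,V,E,\Sigma_U,\Sigma_V,\mathcal R)$ on $n$ variables with max degree $d$, alphabet size $k$ (both constants), and $m = \Theta(n)$ constraints, I will construct a graph $G_I$ with vertex set $\{(u,a) : u \in U \cup V,\, a \in \Sigma_u\}$ of size $N = nk$, where $\Sigma_u = \Sigma_U$ if $u \in U$ and $\Sigma_u = \Sigma_V$ otherwise. Place no edges within a cloud (same variable $u$). For distinct variables $u \neq w$: if $(u,w) \in E$ has relation $R_{uw}$, add the edge $(u,a) \sim (w,b)$ iff $(a,b) \in R_{uw}$; otherwise (no constraint), always add the edge. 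I will recover an assignment from a clique $C$ via the rule $T_\sigma(C)(u) = a$ for the unique $(u,a) \in C$ (and a fixed default label if no vertex from cloud $u$ lies in $C$).

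Because each cloud is an independent set, any clique $C$ in $G_I$ picks at most one vertex per cloud, and the induced partial assignment on $S = \{u : (u,a) \in C\}$ satisfies every constraint with both endpoints in $S$. Completeness gives a clique of size $n$ from any satisfying assignment. For soundness, extending the induced partial assignment arbitrarily to all of $U \cup V$, its $\geq \varepsilon m$ violated constraints must all be incident to $(U \cup V) \setminus S$, so the degree bound forces $\varepsilon m \leq d(n - |S|)$ and hence $|C| \leq n - \Omega(\varepsilon n)$. Consequently, a $(1 - \varepsilon'')$-approximation algorithm for max clique with $\varepsilon'' = \Theta(\varepsilon/d)$ returns a clique $C$ of size $\geq (1 - \varepsilon'')n$ in the completeness case, and a direct calculation (using $m = \Theta(n)$ and max degree $d$) shows $T_\sigma(C)$ satisfies a $(1-\varepsilon)$-fraction of the constraints of $I$.

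The sensitivity transfer rests on two constant-factor counts: (i) deleting the constraint on $e = (u,w)$ in $I$ modifies only the $\leq k^2 = O(1)$ edges of $G_I$ between clouds $u$ and $w$, so by the triangle inequality $\Ham(A(G_I), A(G_{I-e})) \leq O(1) \cdot \sens(A)$ for any max clique algorithm $A$; and (ii) toggling a single vertex of $C$ changes at most one coordinate of $T_\sigma(C)$. Composing, the induced label-cover algorithm has sensitivity $O(\sens(A))$, so \Cref{thm:label-cover} forces $\sens(A) = \Omega(n^\delta) = \Omega(N^\delta)$, as desired. The main subtlety to handle is the soundness argument in the presence of the always-adjacent edges between non-constraint variable pairs (necessary so that cliques can reach size $n$ in the completeness case); the degree-based counting above resolves this, crucially using that the instances produced by \Cref{thm:label-cover} satisfy $m = \Theta(n)$.
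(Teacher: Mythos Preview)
Your proof is correct and takes a genuinely different route from the paper. The paper uses the classical FGLSS reduction: clouds are indexed by \emph{constraints} $e\in E$, with one vertex $v_{e,a,b}$ per satisfying pair $(a,b)\in R_e$, and two such vertices are adjacent whenever they are consistent. Completeness gives a clique of size $m$, and soundness is immediate---a clique meeting $(1-\varepsilon)m$ clouds directly yields an assignment satisfying that many constraints, with no appeal to the degree bound or to $m=\Theta(n)$. Your reduction instead indexes clouds by \emph{variables}, with one vertex per label, and inserts complete bipartite edges between pairs of variables carrying no constraint; your soundness then goes through the degree-counting step you outline, which does rely on $m=\Theta(n)$. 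That hypothesis holds for the family output by \Cref{thm:label-cover} (by inspection of the sparsification step that produces the label cover: one side of the bipartition consists of vertices $w_e$ of degree between $2$ and $6$, and the other side has bounded maximum degree, forcing $m=\Theta(n)$ once one accounts for at most a constant fraction of isolated auxiliary variables), so there is no real gap. The payoff of your route is on the sensitivity side: because your vertex set $\{(u,a)\}$ is independent of the relations $R_e$, deleting a single constraint on $(u,w)$ perturbs only the $\le k^2$ edges between those two clouds and leaves the vertex set untouched, so the $O(1)$ edge-change bound and $C_\sigma=1$ are transparent. In the FGLSS graph the vertex set itself depends on the $R_e$'s, so the analogous accounting when a constraint is deleted is more delicate.
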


For $1 \geq c \geq s \geq 0$, we define $\mathsf{MaxClique}_{c,s}$ as the problem that, given a graph $G=(V,E)$ on $n$ vertices with a clique of size at least $cn$, the goal is to find a clique of size at least $sn$.
We use the standard reduction from $\mathsf{LabelCover}$ to $\mathsf{MaxClique}$~\cite{feige1996interactive}, which we call \Call{FGLSS}{}.
Let $I = (U, V, E, \Sigma_U,\Sigma_V,\mathcal{R} = \{R_e\}_{e \in E})$ be a satisfiable label cover instance.
We construct a graph $G' = (V',E')$, where $V'= \{v_{e,a,b} : e \in E, (a,b) \in R_e\}$.
As for the edges in $G'$, we connect $v_{e_1,a_1,b_1} \in V'$ and $v_{e_2,a_2 ,b_2} \in V'$ by an edge if $(e_1, a_1, b_1)$ and $(e_2, a_2, b_2)$ are consistent.
For each $e \in E$, we define $\mathrm{cloud}(e) = \{v_{e,a,b} : (a,b) \in R_e\}$.
We note that $n' := |V'| = |\Sigma_U| m$.
As well, it is easy to see that $G'$ has a clique of size $m$: For any satisfying assignment $\sigma$ for $I$, the vertex set $\{v_{e,\sigma(u),\sigma(v)} : e = (u,v)\in E\}$ forms a clique in $G'$.

\begin{proof}[Proof of \Cref{thm:max-clique-1-eps}]
    We basically follow the argument in \Cref{sec:reduction}, but we need to slightly modify it because the target problem is not a CSP.

    Let $\varepsilon,\delta>0$ and $d,k\geq 1$ to be the ones given in \Cref{thm:label-cover}.
    Let $\mathcal I$ be the family of satisfiable label cover instances with $n$ variables on finite domains $\Sigma_U,\Sigma_V$ with $|\Sigma_V| \leq |\Sigma_U| = k$, where each variable is incident to at most $d$ constraints.
    Note that $\sens_{1,1-\varepsilon}(\mathcal I) = \Omega(n^\delta)$.

    Let $c := 1/k$ and let $A'$ be an algorithm for $\mathsf{MaxClique}_{c,(1-\varepsilon)c}$.
    Using $A'$, we design an algorithm $A$ for $\mathsf{LabelCover}$.
    Specifically, given a satisfiable instance $I=(U,V,E,\Sigma_U,\Sigma_V,\mathcal{R}) \in \mathcal I$, we first apply \Call{FGLSS}{} to obtain a graph $G'=(V',E')$ of $n'$ vertices.
    Note that $G'$ contains a clique of size $|V'|/|\Sigma_U| = cn'$.
    Then, we apply the algorithm $A'$ to obtain a clique $\bm C$ of size at least $(1-\varepsilon)cn'$.

    Next, we describe a procedure $T_\sigma$ that recovers an assignment $\bm \sigma$ for $I$ from a clique $C$ for $G'$.
    Let $U(C) = \{u \in U : v_{uv,a,b} \in C\}$ and $V(C) = \{v \in V: v_{uv,a,b} \in C\}$.
    It is easy to observe that for each $u \in U(C)$, there is a value $a_u \in \Sigma_U$ such that, if $C$ has a vertex of the form $v_{e,a,b}$ with $u$ being an endpoint of $e$, then $a = a_u$.
    Thus, we can define $\sigma(u) = a_u$. 
    For a variable $u \in U \setminus U(C)$, we set $\sigma(u)$ to be an arbitrary but fixed value in $\Sigma_U$.
    Similarly, for each $v \in V(C)$, there is a value $b_v \in \Sigma_V$ such that, if $C$ has a vertex of the form $v_{e,a,b}$ with $v$ being an endpoint of $e$, then $b = b_v$.
    Thus, we can define $\sigma(v) = b_v$.
    For a variable $v \in V \setminus V(C)$, we set $\sigma(v)$ to be an arbitrary but fixed value in $\Sigma_V$. 

    Now, we show that the expected value of $\bm \sigma := T_\sigma(\bm C)$ is at least $1-\varepsilon$.
    By the observation that $\bm C$ may contain at most a single vertex from the cloud of each $e \in E$, it follows that there are more than $(1 - \varepsilon)cn' = (1-\varepsilon)m$ clouds in expectation from which $\bm C$ contains a vertex. 
    Let $\bm F \subseteq E$ be the set of edges $e \in E$ such that $\bm C$ contains some vertex from the cloud of $e$.
    We argue that $\bm \sigma$ satisfies all edges in $\bm F$, hence they satisfy at least $|\bm F|$ constraints, which is a $ (1 - \varepsilon)$-fraction of the constraints in expectation.
    Indeed, if $e=(u,v) \in \bm F$ then there is a vertex of the form $v_{e,a,b}$ in $\bm C$.
    Then, $(a,b)$ satisfies the constraint $R_e$ by construction, and we have $\sigma(u) = a$ and $\sigma(v) = b$ by the choice of $\bm \sigma$.    

    Next, we discuss sensitivity.
    Let $I,\tilde I \in \mathcal I$ be two instances, where $\tilde I$ is obtained from $I$ by deleting a constraints.
    Let $G',\tilde G'$ denote the graphs obtained from $I, \tilde I$, respectively, by applying \Call{FGLSS}{}.
    Then, it is easy to observe that $|E'\triangle \tilde E'| \leq 2d|\Sigma_U|$.
    Let $\bm C,\tilde{\bm C}$ be the outputs of $A'$ on $G'$ and $\tilde G'$, respectively.
    Then, we have
    \[
        \EMD(\bm C, \tilde{\bm C}) \leq 2dk \cdot \sens(A', \mathcal{G}'),
    \]
    where $\mathcal{G}' = \clo(\Call{FGLSS}{\mathcal I})$.
    Let $\bm \sigma,\tilde{\bm \sigma}$ be the assignments constructed from $\bm C,\tilde{\bm C}$, respectively, by applying $T_\sigma$ above.
    Then, we have 
    \begin{align*}
        & \EMD(\bm \sigma,\tilde{\bm \sigma}) 
        = \EMD(U(\bm C), U(\tilde{\bm C})) + \EMD(V(\bm C),  V(\tilde{\bm C}))  \\
        & \leq 2\EMD(\bm C, \tilde{\bm C})
        = 4dk \cdot \sens(A', \mathcal G'))
    \end{align*}
    As the choice of the algorithm $A'$ was arbitrary, any algorithm for $\mathsf{MaxClique}_{c,(1-\varepsilon)c}$ must have sensitivity $\Omega(n^\delta/dk) = \Omega((n')^\delta)$.
\end{proof}

For a graph $G=(V,E)$, a vertex set $S \subseteq V$ is called a \emph{vertex cover} if every edge in $E$ is incident to some vertex in $S$.
In the \emph{minimum vertex cover problem}, given a graph $G=(V,E)$, the goal is to find a vertex cover of the minimum size.
\begin{corollary}\label{cor:vertex-cover}
    There exist universal constants $\varepsilon,\delta>0$ such that any $(1+\varepsilon)$-approximation algorithm for the minimum vertex cover problem has sensitivity at least $\Omega(n^{\delta})$.
\end{corollary}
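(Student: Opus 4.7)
The plan is to reduce to \Cref{thm:max-clique-1-eps} via the classical duality that $V \setminus C$ is a clique in $G$ iff $C$ is a vertex cover of $\overline G$. Given any (randomized) $(1+\varepsilon')$-approximation algorithm $A'$ for minimum vertex cover, I will define a max clique algorithm by $B(G) := V \setminus A'(\overline G)$, which always outputs a clique in $G$.

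First I will verify the approximation ratio of $B$ on the FGLSS family $\mathcal G' := \clo(\Call{FGLSS}{\mathcal I})$ used in the proof of \Cref{thm:max-clique-1-eps}. Writing $\alpha^*$ for the max-clique size and $\beta^*$ for the min-vertex-cover size, the identity $\alpha^*(G) + \beta^*(\overline G) = n'$ shows that an expected vertex cover of size at most $(1+\varepsilon')\beta^*(\overline G)$ yields an expected clique of size at least $\alpha^*(G) - \varepsilon'\beta^*(\overline G)$. For $G = \Call{FGLSS}{I}$ arising from a satisfiable label cover instance of alphabet size $k$, at most one vertex per cloud lies in any clique, so $\alpha^*(G) = m = n'/k$ and hence $\beta^*(\overline G) = (k-1)\alpha^*(G)$. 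Thus $B$ is a $(1 - (k-1)\varepsilon')$-approximation, and taking $\varepsilon' := \varepsilon_{\mathrm{cl}}/(k-1)$, with $\varepsilon_{\mathrm{cl}}$ the constant from \Cref{thm:max-clique-1-eps}, makes $B$ a $(1-\varepsilon_{\mathrm{cl}})$-approximation.

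Next I will transfer the sensitivity bound. By \Cref{thm:max-clique-1-eps} applied to $B$, there exists a witness pair $(G_0, e_0)$ with $e_0 \in E(G_0)$ such that $\EMD(B(G_0), B(G_0 - e_0)) = \Omega((n')^\delta)$. The key identity is $\overline{G_0 - e_0} = \overline{G_0} + e_0$; setting $H := \overline{G_0} + e_0$ gives $e_0 \in E(H)$ and $H - e_0 = \overline{G_0}$, so
\[
\sens(A', H) \;\geq\; \EMD(A'(H), A'(H - e_0)) \;=\; \EMD(A'(\overline{G_0} + e_0), A'(\overline{G_0})) \;=\; \EMD(B(G_0 - e_0), B(G_0)),
\]
where the last equality uses $\Ham(V \setminus S, V \setminus T) = \Ham(S, T)$ under the natural coupling between $A'$ and $B$. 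The right-hand side is $\Omega((n')^\delta)$, which proves the claimed bound.

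The one thing to be careful about is that the paper's sensitivity is one-sided (measuring only edge deletion), whereas complementation turns deletion into insertion. The one-edge shift $H := \overline{G_0} + e_0$ rewrites the required modification as a deletion within the VC instance and thus reconciles the two. The remaining approximation-ratio arithmetic is routine precisely because $\alpha^*(G)$ and $\beta^*(\overline G)$ have comparable magnitudes on the FGLSS family.
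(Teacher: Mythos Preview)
Your proof is correct and follows essentially the same approach as the paper: both use the clique/vertex-cover duality to turn a $(1+\varepsilon')$-approximate vertex-cover algorithm into a $(1-\varepsilon)$-approximate max-clique algorithm on the FGLSS family, and then invoke \Cref{thm:max-clique-1-eps}. Your treatment of the deletion-versus-insertion mismatch via the renaming $H := \overline{G_0}+e_0$ is in fact more explicit than the paper's, which dispatches this point in a single sentence (``deleting an edge from $G$ corresponds to adding an edge to $\bar G$, and hence the sensitivity bound applies to $A'$'').
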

\begin{proof}
    First, note that the proof of \Cref{thm:max-clique-1-eps} shows that there exists a universal constants $c >0$ and $\varepsilon>0$ such that any algorithm for $\mathsf{MaxClique}_{c,(1-\varepsilon)c}$ has sensitivity at least $n^{1-o(1)}$.
    Also note that a vertex set $C$ is a clique of a graph $G=(V,E)$ if and only if its complement $\bar C := V \setminus C$ is a vertex cover of the complement graph $\bar G=(V, \binom{V}{2} \setminus E)$.

    Let $A'$ be an arbitrary $(1+\varepsilon c)$-approximation algorithm for the minimum vertex cover problem.
    Then, we consider an algorithm $A$ for the maximum clique problem that, given a graph $G=(V,E)$, first applies $A'$ to $\bar G$ to obtain a vertex cover $S \subseteq V$, and then returns a vertex set $\bar S$.
    When there is a clique of size $cn$, the algorithm $A$ returns a clique of size at least 
    \[
        n-(1+\varepsilon c)(1-c)n
        = - \varepsilon c n + c n + \varepsilon c^2 n
        \geq (1-\varepsilon)cn.
    \]
    Hence, $A$ is a $(1-\varepsilon)$-approximation algorithm and hence its sensitivity must be $\Omega(n^{\delta})$.

    Moreover deleting an edge from $G$ corresponds to adding an edge to $\bar G$, and hence the sensitivity bound applies to $A'$.
\end{proof}

\subsection{Lower Bounds for \texorpdfstring{$n^{-\varepsilon}$}{n^-eps}-Approximation Algorithms}\label{subsec:max-clique-eps}

In this section, we show a sensitivity lower bound for $n^{-\varepsilon}$-approximation algorithms for the maximum clique problem for some constant $\varepsilon>0$.
To this end, we first apply serial repetition to get a lower bound for $\varepsilon$-approximation algorithms for $\mathsf{LabelCover}$ for any $\varepsilon>0$.
\begin{lemma}\label{lem:serial-repetition}
    There exists a universal constant $\delta>0$ such that for any $\varepsilon >0$, any algorithm for $\mathsf{MaxCSP}_{1,\varepsilon}$ on $t$-ary instances of degree at most $O(\log n + \varepsilon^{-1}\log \varepsilon^{-1})$ that succeeds with probability at least $1-O(1/n)$ requires sensitivity $\Omega(n^\delta /(\varepsilon^{-1}\log \varepsilon^{-1}))$, where $t = O(\log \varepsilon^{-1})$.
\end{lemma}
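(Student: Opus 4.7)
The plan is to apply \emph{serial repetition} to the label cover family supplied by \Cref{thm:label-cover}. Let $\mathcal I_0$ be that family, with alphabet size $k$, degree $d$, and soundness parameter $1-\varepsilon_0$ all universal constants. For $I_0 = (U,V,E,\Sigma_U,\Sigma_V,\mathcal R)\in\mathcal I_0$ with $m = O(n)$ constraints, the reduction $T_I$ I will use samples $N := \Theta(n/\varepsilon)$ subsets $S_1,\ldots,S_N \subseteq E$ of size $t := \lceil \log(2/\varepsilon)/\varepsilon_0\rceil = O(\log \varepsilon^{-1})$ uniformly with replacement, and for each $j \in [N]$ introduces one constraint on the (at most $2t$) variables incident to $S_j$ whose relation is the conjunction $\bigwedge_{e \in S_j} R_e$. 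The resulting instance $I = T_I(I_0)$ is a CSP of arity $\le 2t$ over $\Sigma := \Sigma_U \cup \Sigma_V$, on the same variable set as $I_0$.

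Next I would derandomize the sampling. Fixing the hypergraph of $I_0$, I claim the following three events hold simultaneously with probability $1-o(1)$ over the random choice of the $S_j$'s:
\textbf{(a)} for every $\sigma:V\to\Sigma$ with $\val_{I_0}(\sigma)\le 1-\varepsilon_0$, since $\E \val_I(\sigma)\le (1-\varepsilon_0)^t\le \varepsilon/2$, a multiplicative Chernoff bound gives $\Pr[\val_I(\sigma)\ge \varepsilon]\le \exp(-\Omega(N\varepsilon))$, and a union bound over the $k^n$ assignments is dominated by the choice $N=\Theta(n/\varepsilon)$;
\textbf{(b)} the maximum degree in $I$ is $O(\log n + \varepsilon^{-1}\log\varepsilon^{-1})$, since each variable's degree is a $\mathrm{Binomial}(N, O(td/m))$ with mean $O(\varepsilon^{-1}\log\varepsilon^{-1})$ and Chernoff with a union bound over the $n$ variables absorbs the remaining slack;
\textbf{(c)} for every $e \in E$, the count $X_e := |\{j : e \in S_j\}|$ is also $O(\log n + \varepsilon^{-1}\log\varepsilon^{-1})$, by the same argument over the $m=O(n)$ edges.
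I would then fix a sampling realizing all three events, making $T_I$ deterministic within each swap-equivalence class.

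For the sensitivity bound, let $A$ be any algorithm for $\mathsf{MaxCSP}_{1,\varepsilon}$ on the resulting family with success probability $1-O(1/n)$. I would build $A_0$ for $\mathcal I_0$ by running $A$ on $T_I(I_0)$ and returning its output unchanged, so the recovery $T_\sigma$ is the identity and $C_\sigma = 1$. Event \textbf{(a)} then guarantees $A_0$ outputs a $(1-\varepsilon_0)$-approximate assignment with probability $1-O(1/n)$, while \textbf{(c)} gives $\swapdist(T_I(I_0),T_I(\tilde I_0))\le O(\log n + \varepsilon^{-1}\log\varepsilon^{-1})\cdot \swapdist(I_0,\tilde I_0)$, so $(T_I,T_\sigma)$ satisfies \Cref{def:sensitivity-reduction}. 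Combining \Cref{lem:sensitivity-reduction}, \Cref{thm:label-cover}, and \Cref{lem:sens-and-swap-sens}, and absorbing the additive $\log n$ into a slightly smaller constant $\delta' < \delta$ (since $\log n = n^{o(1)}$), I obtain $\sens(A) = \Omega(n^{\delta'}/(\varepsilon^{-1}\log\varepsilon^{-1}))$.

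The main obstacle is reconciling the \emph{with-high-probability} success guarantee of $A$ with the \emph{in-expectation} formalism underlying \Cref{lem:2-xor} and the chain of reductions in \Cref{sec:pcp}, a subtlety already flagged in \Cref{subsec:discussions}. A $1-O(1/n)$ success probability translates only to expected Hamming error $O(1)$, not the $1/2$ needed in the base case of \Cref{lem:2-xor}, so before invoking \Cref{thm:label-cover} I will either strengthen soundness in step \textbf{(a)} to $\varepsilon/C$ for a large constant $C$ (affordable by inflating $N$ by a constant factor) or verify that each reduction in \Cref{sec:pcp} tolerates a weaker expected-error guarantee. A secondary technical point is ensuring that the derandomized sampling depends only on the underlying hypergraph of $I_0$, so that $T_I$ genuinely commutes with constraint swaps and \Cref{lem:sensitivity-reduction} applies cleanly.
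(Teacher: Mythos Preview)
Your strategy—serial repetition applied to the family from \Cref{thm:label-cover}—matches the paper's, and your parameter choices are essentially right. The main divergence is that you derandomize the sampling, while the paper keeps it randomized: it defines $A_0(I_0) := A(\textsc{SerialRepetition}(I_0))$ with the reduction's randomness folded into $A_0$, couples the random instances $\bm I'$ and $\tilde{\bm I}'$ arising from neighboring $I_0,\tilde I_0$, and absorbs the bad event (where the analogues of your (a)--(c) fail) by paying the trivial bound $n$ with probability $O(1/n)$.

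Your derandomized route can be made to work, but two points need sharpening. First, for $A_0$ to be a valid algorithm for the \emph{entire} swap-closed family $\mathcal I_0$ (which is what \Cref{thm:label-cover} requires), event (a) must hold simultaneously for every instance $I_0$ in that family, not just for every assignment $\sigma$ to one fixed $I_0$. You therefore need a union bound over all $\le (k^k)^m = 2^{O(n)}$ label-cover instances on the fixed hypergraph as well as over assignments; your choice $N=\Theta(n/\varepsilon)$ still suffices, but this is the real content of your ``secondary technical point,'' and it is not merely about $T_I$ commuting with swaps. Second, you cannot invoke \Cref{lem:sensitivity-reduction} as a black box: \Cref{item:reduction-soundness} of \Cref{def:sensitivity-reduction} is an expected-value implication ($\E\val_{I'}(\bm\sigma')\ge s' \Rightarrow \E\val_{I_0}(\bm\sigma)\ge s$), and event (a) only gives the pointwise implication $\val_{I'}(\sigma)\ge\varepsilon \Rightarrow \val_{I_0}(\sigma)\ge 1-\varepsilon_0$, which fails in expectation (consider $\bm\sigma'$ that is perfect with probability $\varepsilon$ and worthless otherwise). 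The with-high-probability hypothesis on $A$ does let you conclude $\E\val_{I_0}(A_0(I_0))\ge 1-\varepsilon_0 - O(1/n)$ directly, which is enough to trigger \Cref{thm:label-cover} after shrinking $\varepsilon_0$ slightly—but this means you must redo the sensitivity-transfer part of \Cref{lem:sensitivity-reduction} by hand rather than cite it. The paper's randomized-reduction argument handles both points at once and never needs (a) to hold uniformly.
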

Let $\varepsilon'>0$ be an arbitrary parameter and we set 
\[
    t = O\qty(\frac{\log(1/\varepsilon')}{\varepsilon}) \text{ and } M = \max\qty{\log\frac{3k^n}{\varepsilon} \cdot \frac{(1-\varepsilon/3)^{-t}}{3}, \frac{3m \log(nm)}{t}, \frac{3 m \log n^2}{dt}}.
\]
We consider a randomized transformation, called $\Call{SerialRepetition}{}$, from label cover instances to CSP instances.
Specifically, given an instance $I=(U,V,E,\Sigma_U,\Sigma_V,\mathcal R = \{R_e\}_{e \in E}) \in \mathcal I$, we construct a (random) instance $\bm I' = (U,V,\bm E', \Sigma_U^t ,\Sigma_V^t , \bm{\mathcal R}'=\{R'_e\}_{e \in \bm E'})$ as follows:
For $i = 1,\ldots,M$, sample $t$ edges $\bm e_1,\ldots,\bm e_t \in E$ uniformly at random, and let $\bm e^{(i)}$ be the multiset $\bigcup_{j=1}^t \bm e_j$. Let $R'_{\bm e^{(i)}}$ be the relation obtained by combining $R_{\bm e_1},\ldots, R_{\bm e_t}$, i.e., 
\[
    \bm R'_{\bm e^{(i)}} = \qty{ (a_1,\ldots,a_t,b_1,\ldots,b_t) \in \Sigma_U^t \times \Sigma_V^t : (a_j,b_j) \in R_{\bm e_j}\;\forall j \in [t] }.    
\]
Note that $|\bm E'| = M$ and each relation has arity at most $2t$.

We show several key properties of $\bm I'$.
\begin{claim}\label{cla:serial-repetition-1}
    With probability at least $1-\varepsilon/3$, we have $\val_{\bm I'}(\sigma) < \varepsilon'$ for any assignment $\sigma$ for $I \in \mathcal I$ with $\val_{I}(\sigma) < 1-\varepsilon/3$.  
\end{claim}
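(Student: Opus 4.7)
The plan is to fix an assignment $\sigma$, argue that when $\val_I(\sigma) < 1-\varepsilon/3$ it is very unlikely over the randomness of $\Call{SerialRepetition}{}$ that $\sigma$ satisfies even an $\varepsilon'$-fraction of the new constraints, and then take a union bound over the $k^n$ possible assignments $\sigma : U \cup V \to \Sigma_U \cup \Sigma_V$, where $k = \max\{|\Sigma_U|,|\Sigma_V|\}$.

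First I would bound the single-constraint satisfaction probability. Fix any $\sigma$ with $\val_I(\sigma) < 1-\varepsilon/3$. For each $i \in [M]$ the $t$ edges $\bm e_1,\ldots,\bm e_t$ forming $\bm e^{(i)}$ are drawn i.i.d.\ uniformly from $E$, and since $\sigma$ violates strictly more than an $\varepsilon/3$-fraction of the constraints of $I$,
\[
    \Pr\bigl[\sigma \text{ satisfies } \bm R'_{\bm e^{(i)}}\bigr]
    = \prod_{j=1}^t \Pr\bigl[\sigma \text{ satisfies } R_{\bm e_j}\bigr]
    \leq (1-\varepsilon/3)^t.
\]
By the choice $t = \Theta(\varepsilon^{-1}\log(1/\varepsilon'))$, we can arrange $(1-\varepsilon/3)^t \leq \varepsilon'/2$, so the expected fraction of constraints of $\bm I'$ satisfied by $\sigma$ is at most $\varepsilon'/2$.

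Next I would apply concentration. Let $X_\sigma = |\{i \in [M] : \sigma \text{ satisfies } \bm R'_{\bm e^{(i)}}\}|$. Since the $M$ combined constraints are sampled independently, $X_\sigma$ is a sum of independent Bernoulli random variables with mean $\mu \leq M(1-\varepsilon/3)^t$. A standard multiplicative Chernoff bound (applied with deviation a constant multiple of $\mu$ since $\varepsilon' M \geq 2\mu$) then gives
\[
    \Pr\bigl[X_\sigma \geq \varepsilon' M\bigr] \leq \exp\bigl(-\Omega(M(1-\varepsilon/3)^t)\bigr).
\]
Using $M \geq \tfrac{1}{3}\log(3k^n/\varepsilon)\cdot (1-\varepsilon/3)^{-t}$ from the setup, the exponent is at least $\log(3k^n/\varepsilon)$ (up to an absolute constant which can be absorbed into the hidden constant in the definition of $M$), so this probability is at most $\varepsilon/(3k^n)$.

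Finally a union bound over the at most $k^n$ assignments $\sigma : U \cup V \to \Sigma_U \cup \Sigma_V$ with $\val_I(\sigma) < 1-\varepsilon/3$ shows that with probability at least $1-\varepsilon/3$, every such $\sigma$ simultaneously satisfies $\val_{\bm I'}(\sigma) < \varepsilon'$. The routine ingredients are the Chernoff inequality and the independence of the $M$ sampled constraints; the only mildly delicate point is to verify that the precise form of $M$ stated in the setup is large enough for the Chernoff tail to dominate the $k^n$ union-bound factor once the correct multiplicative form of the inequality has been applied, but this is an arithmetic check given the choice of $t$.
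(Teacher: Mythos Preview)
Your proposal is correct and follows essentially the same approach as the paper: bound the probability that a fixed bad $\sigma$ satisfies a single combined constraint by $(1-\varepsilon/3)^t$, apply a Chernoff bound to the $M$ independent combined constraints, and take a union bound over the at most $k^n$ assignments. The paper's proof sets the Chernoff threshold at $2(1-\varepsilon/3)^t M$ and then observes $2(1-\varepsilon/3)^t < \varepsilon'$ by the choice of $t$, whereas you set the threshold directly at $\varepsilon' M$; these are equivalent, and your acknowledgment that constants must be absorbed into the definition of $M$ mirrors the same slack in the paper's own arithmetic.
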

\begin{proof}
    Let $\sigma$ be an arbitrary assignment that satisfies at most a $(1-\varepsilon/3)$-fraction of the constraints of $I$.
    Let $\bm X^\sigma_i$ be the indicator random variable that $\sigma$ satisfies the $i^{th}$ constraint of $\bm I'$.
    Then by the Chernoff bound, we have
    \[
        \Pr[ \sum_{i=1}^M \bm X^\sigma_i \geq 2\qty(1-\frac{\varepsilon}{3})^t M ] \leq \exp\qty(-\frac{1}{3}\qty(1-\frac{\varepsilon}{3})^t M) \leq \frac{\varepsilon}{3k^n},
    \]
    by our choice of $M$.
    Finally, by a union bound, with probability at least $1-\varepsilon/3$ we have $\val_{\bm I'}(\sigma) \leq 2(1-\varepsilon/3)^t < \varepsilon'$ for any $\sigma$ with $\val_{I}(\sigma) < 1-\varepsilon/3$.  
\end{proof}
\begin{claim}\label{cla:serial-repetition-2}
    With probability at least $1-1/n$, every edge $e \in E$ is used to construct at most $2tM/m$ constraints in $\bm I'$.
\end{claim}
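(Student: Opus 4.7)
The plan is to apply a multiplicative Chernoff bound to each edge individually and then take a union bound over all $m$ edges of $E$. Fix an edge $e \in E$. Since the construction samples $t$ edges uniformly at random (with replacement) for each of the $M$ constraints, we may view $\bm I'$ as having been built from $tM$ independent uniform samples from $E$. Let $\bm Y_e$ denote the number of these $tM$ samples that equal $e$; then $\bm Y_e$ is a sum of $tM$ independent $\mathrm{Bernoulli}(1/m)$ random variables, so $\E[\bm Y_e] = tM/m$.

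The multiplicative Chernoff bound then yields
\[
    \Pr\!\left[\bm Y_e \geq 2 \cdot \frac{tM}{m}\right] \;\leq\; \exp\!\left(-\frac{1}{3} \cdot \frac{tM}{m}\right).
\]
By the definition of $M$ in the lemma, we have $M \geq 3m\log(nm)/t$, which rearranges to $tM/(3m) \geq \log(nm)$. Substituting gives $\Pr[\bm Y_e \geq 2tM/m] \leq 1/(nm)$.

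Finally, a union bound over all $m$ edges of $E$ gives
\[
    \Pr\!\left[\exists\, e \in E :\; \bm Y_e \geq \frac{2tM}{m}\right] \;\leq\; m \cdot \frac{1}{nm} \;=\; \frac{1}{n},
\]
as required. There is no real obstacle here: the claim is essentially a textbook Chernoff-plus-union-bound argument, and the parameter $M$ has been chosen precisely so that the tail bound for a single edge is strong enough to absorb the union bound over $E$.
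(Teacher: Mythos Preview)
Your proof is correct and takes essentially the same approach as the paper: both introduce indicator variables for each of the $tM$ independent uniform edge-samples, apply the multiplicative Chernoff bound $\Pr[X \geq 2\mu] \leq \exp(-\mu/3)$ with $\mu = tM/m$, invoke the choice $M \geq 3m\log(nm)/t$ to make this at most $1/(nm)$, and union bound over the $m$ edges.
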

\begin{proof}
    For an edge $e \in E$, $i \in [M]$, and $j \in [t]$, let $\bm Y^e_{i,j}$ be the indicator that $e$ is used to construct $\bm e^{(i)}$ as the $j$-th edge.
    By Chernoff's bound, we have
    \[
        \Pr\qty[ \sum_{i=1}^M \sum_{j=1}^t \bm Y^e_{i,j} \geq \frac{2tM}{m} ] \leq \exp\qty( -\frac{tM}{3m} ) \leq \frac{1}{nm}.
    \]
    By a union bound, with probability at most $1/n$, every edge $e \in E$ is used to construct at most $2tM/m$ constraints in $\bm I'$.
\end{proof}
\begin{claim}\label{cla:serial-repetition-3}
    With probability at least $1-1/n$, the instance $\bm I'$ has degree at most $dtM/m$. 
\end{claim}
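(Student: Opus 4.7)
The plan is to prove the claim by Chernoff plus a union bound over vertices, mirroring the structure of \Cref{cla:serial-repetition-2}. Fix any vertex $v \in U \cup V$, and for each $i \in [M]$ let $\bm Z_i^v$ be the indicator of the event that $v$ appears in the multiset $\bm e^{(i)} = \bigcup_{j=1}^t \bm e_j$; equivalently, $\bm Z_i^v$ indicates that at least one of the $t$ independent uniform samples $\bm e_1,\dots,\bm e_t$ is incident to $v$. Since $v$ has degree at most $d$ in $I$, each individual $\bm e_j$ is incident to $v$ with probability at most $d/m$, and so by a union bound over $j$, $\Pr[\bm Z_i^v = 1] \le td/m$. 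Consequently the expected degree of $v$ in $\bm I'$ is $\E\bigl[\sum_{i=1}^M \bm Z_i^v\bigr] \le M t d/m$.

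Next I would apply the multiplicative Chernoff bound to $\sum_{i=1}^M \bm Z_i^v$, which is a sum of independent $\{0,1\}$-variables. This gives
\[
\Pr\!\left[\sum_{i=1}^M \bm Z_i^v \ge \frac{2dtM}{m}\right]
\;\le\; \exp\!\left(-\frac{dtM}{3m}\right).
\]
Substituting the lower bound $M \ge 3m \log n^2 /(dt)$ from the definition of $M$ yields $\exp(-dtM/(3m)) \le \exp(-2\log n) = 1/n^2$. Taking a union bound over the $n$ vertices of $U \cup V$ shows that with probability at least $1-1/n$ every vertex has degree at most $2dtM/m$ in $\bm I'$, which is the desired conclusion (with the constant $2$ absorbed, matching the $O(dtM/m)$ bound claimed).

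There is no real obstacle here; the argument is a routine concentration plus union bound, essentially identical in flavor to the proof of \Cref{cla:serial-repetition-2}, the only adjustment being the use of a per-vertex union bound over the $t$ sampled edges when upper bounding $\Pr[\bm Z_i^v=1]$. The choice of $M$ was tailored so that the resulting failure probability per vertex is $\le 1/n^2$, leaving room for the final union bound over all $n$ variables.
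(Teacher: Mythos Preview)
Your proof is correct and follows the same Chernoff-plus-union-bound approach as the paper. The only cosmetic difference is that the paper decomposes into the finer per-sample indicators $\bm Y^v_{i,j}=\bm 1[\bm e_j\text{ incident to }v]$ over $i\in[M]$, $j\in[t]$ rather than your per-constraint indicators $\bm Z^v_i$; the resulting factor-of-$2$ slack in your degree bound is harmless, since only $O(dtM/m)$ is needed downstream in \Cref{lem:serial-repetition}.
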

\begin{proof}
    For a variable $v \in U \cup V$, $i \in [M]$, and $j \in [M]$, let $\bm Y^v_{i,j}$ be the indicator that an edge incident to $v$ is used to construct $\bm e^{(i)}$ as the $j$-th edge.
    Note that $\E \bm Y^v_{i,j} \leq d/2m$.
    Hence by Chernoff's bound, we have
    \[
        \Pr\qty[ \sum_{i=1}^M \sum_{j=1}^t \bm Y^v_{i,j} \geq \frac{dtM}{m}  ] \leq \exp\qty( -\frac{dtM}{3m} ) \leq \frac{1}{n^2}.
    \]    
    By a union bound, with probability at most $1/n$, every vertex $v \in U \cup V$ is incident to at most $dtM/m$ constraints in $\bm I'$.
\end{proof}

\begin{proof}[Proof of \Cref{lem:serial-repetition}]
    Let $\varepsilon,\delta>0$ and $d,k\geq 1$ to be the ones given in \Cref{thm:label-cover}.
    Let $\mathcal I$ be the swap-closed family of satisfiable label cover instances with $m$ edges and $n$ variables on finite domains $\Sigma_U,\Sigma_V$ with $|\Sigma_V| \leq |\Sigma_U|=k$, where each variable is incident to at most $d$ other variables.
    Note that we have $\swapsens_{1,1-\varepsilon}(\mathcal I) = \Omega(n^\delta)$ (see the proof of \Cref{thm:label-cover}).

    Let $\mathcal I'$ be the swap closure of instances in the support of $\Call{SerialRepetition}{\mathcal I}$ such that the degree is at most $dtM/m$. 
    Note that $\bm I'$ belongs to $\mathcal I'$ with probability at least $1-1/n$ by \Cref{cla:serial-repetition-3}. 
    Let $A'$ be an arbitrary algorithm for $\mathcal I'$ that attains $\swapsens_{1,\varepsilon'}(\mathcal I')$.
    We design an algorithm $A$ for $\mathcal I$ which operates as follows:
    Given an instance of $I \in \mathcal I$, let $\bm I' = \Call{SerialRepetition}{I}$.
    Then, we apply $A'$ on $\bm I'$ to compute an assignment $\bm \sigma'$, and simply output $\bm \sigma := \bm \sigma'$.

    We first show that $A$ is an algorithm for $\mathsf{LabelCover}_{1,1-\varepsilon}$ on $\mathcal I$.
    Recall that $\val_{I'}(A'(I')) \geq \varepsilon'$ with probability at least $1-O(1/n)$ for any $I' \in \mathcal I'$.
    Let $\bm X_{I'}$ be the indicator of the event that this happens.
    Let $\bm X_{\ref{cla:serial-repetition-1}}$ and $\bm X_{\ref{cla:serial-repetition-3}}$ be the indicators of the events that the events of \Cref{cla:serial-repetition-1} and \Cref{cla:serial-repetition-3} hold, respectively.
    Then, we have
    \begin{align*}
        & \E \val_{I}(A(I))
        \geq 
        \Pr[\bm X_{I'} \wedge \bm X_{\ref{cla:serial-repetition-1}} \wedge \bm X_{\ref{cla:serial-repetition-3}}] \cdot \E [\val_{I}(A(I)) \mid \bm X_{I'} \wedge \bm X_{\ref{cla:serial-repetition-1}} \wedge \bm X_{\ref{cla:serial-repetition-3}}] \\
        & \geq \qty(1-\frac{\varepsilon}{3}-\frac{1}{n}-\frac{1}{n}) \E [\val_{I}(A(I)) \mid \bm X_{I'} \wedge \bm X_{\ref{cla:serial-repetition-1}} \wedge \bm X_{\ref{cla:serial-repetition-3}}] \tag{by \Cref{cla:serial-repetition-1} and \Cref{cla:serial-repetition-3}} \\
        & \geq \qty(1-\frac{\varepsilon}{3}-\frac{1}{n}-\frac{1}{n}) \qty(1 - \frac{\varepsilon}{3}) \tag{by $\val_{I'}(A'(I')) \geq \varepsilon'$ for $I' \in \mathcal I'$} \\
        & \geq 1-\varepsilon.
    \end{align*}

    Next, we analyze the swap sensitivity of $A$.
    Let $I,\tilde I \in \mathcal I$ be two instances with swap distance one.
    We consider a natural coupling between $\bm I'$ and $\tilde{\bm I}'$.
    Note that whenever $\bm I'$ satisfies the claims in \Cref{cla:serial-repetition-2} and \Cref{cla:serial-repetition-3}, so does $\tilde {\bm I}'$.
    Let $\bm X_{\ref{cla:serial-repetition-2}}$ be the indicator of the event that the claim of \Cref{cla:serial-repetition-2} holds.
    Then, the swap sensitivity of $A$ is bounded from above by 
    \begin{align*}
        & \Pr[\bm X_{\ref{cla:serial-repetition-2}} \wedge \bm X_{\ref{cla:serial-repetition-3}}] \cdot \frac{2tM}{m} \cdot \swapsens(A',\mathcal I')
        + \Pr[\bar{\bm X}_{\ref{cla:serial-repetition-2}} \vee \bar{\bm  X}_{\ref{cla:serial-repetition-3}}] \cdot n \\
        & \leq \qty (1 - \frac{2}{n}) \cdot \frac{2tM}{m} \cdot \swapsens(A',\mathcal I') + \frac{2}{n} \cdot n \\
        & \leq O_{\varepsilon,d,k}\qty(\max\qty{\frac{\log(1/\varepsilon')}{\varepsilon'}, \log n} ) \cdot \swapsens_{1,\varepsilon}(\mathcal I').
    \end{align*}
    Hence, we have
    \[
        \swapsens_{1,\varepsilon'}(\mathcal I') 
        = \Omega_{\varepsilon,d,k}\qty(\frac{\swapsens_{1,1-\varepsilon}(\mathcal I)}{\max\qty{\frac{\log(1/\varepsilon')}{\varepsilon'}, \log n} }) 
        = \Omega_{\varepsilon,d,k}\qty(\frac{n^\delta}{\log n + \frac{\log(1/\varepsilon')}{\varepsilon'} }) 
        \qedhere
    \]
    We obtain the claimed lower bound for swap sensitivity by replacing $\varepsilon'$ with $\varepsilon$ and slightly decreasing $\delta$.
    Finally, \Cref{lem:sens-and-swap-sens} gives a lower bound on sensitivity.
\end{proof}

\begin{theorem}\label{thm:max-clique-eps}
    There exist universal constants $\varepsilon,\delta >0$ such that any algorithm for the maximum clique problem that outputs an $n^{-\varepsilon}$-approximate clique with probability $1-O(1/n)$ has sensitivity $\Omega(n^{\delta})$.
\end{theorem}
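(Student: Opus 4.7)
The plan is to combine the polynomial sensitivity lower bound for $\mathsf{MaxCSP}_{1,\varepsilon}$ from \Cref{lem:serial-repetition}, which holds against algorithms succeeding with probability $1-O(1/n)$ and for \emph{any} $\varepsilon>0$, with an FGLSS-style reduction generalizing the one used in the proof of \Cref{thm:max-clique-1-eps} to the higher-arity CSPs produced by serial repetition. The crucial observation is that by choosing $\varepsilon$ polynomially small in $n$, a soundness parameter of $\varepsilon$ on the CSP side translates under the reduction into hardness of a multiplicative approximation factor of $n^{-\varepsilon''}$ on the maximum clique side, for some universal constant $\varepsilon''>0$.

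Concretely, I would fix a sufficiently small constant $\eta>0$ and invoke \Cref{lem:serial-repetition} with $\varepsilon = n^{-\eta}$. This yields a swap-closed family $\mathcal I$ of satisfiable CSP instances on $n$ variables over a constant alphabet of size $k$, of arity $t = O(\eta \log n)$ and degree $O(n^\eta \log n)$. Since every constraint arises from serial repetition of label-cover constraints, each relation is a product of $t$ projections and therefore has exactly $k^t$ satisfying assignments. Any algorithm producing an $\varepsilon$-satisfying assignment with probability $1-O(1/n)$ has sensitivity at least $\Omega(n^\delta /(n^\eta \log n)) = \Omega(n^{\delta-\eta-o(1)})$, where $\delta>0$ is the constant from \Cref{thm:label-cover}.

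I would then apply the \Call{FGLSS}{} reduction: for each constraint $e$, introduce a cloud of $|R_e| = k^t = n^{O(\eta)}$ vertices (one per satisfying assignment to $R_e$) and connect two vertices from different clouds whenever their implied partial assignments are consistent. Following the same argument as in \Cref{thm:max-clique-1-eps}, the resulting graph $G'$ has $n' = k^t \cdot M = n^{1+O(\eta)}$ vertices, and $\omega(G') = M$ whenever $I$ is satisfiable. Any clique of size at least $n'^{-\varepsilon''} M$ yields, via the extraction procedure from the proof of \Cref{thm:max-clique-1-eps}, a CSP assignment for $I$ satisfying at least an $n'^{-\varepsilon''}$-fraction of the constraints. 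Choosing $\varepsilon''$ small enough so that $n'^{-\varepsilon''} \geq n^{-\eta} = \varepsilon$ (which holds whenever $\varepsilon'' \leq \eta/(1+O(\eta))$), this is an $\varepsilon$-satisfying assignment for $I$, thereby solving $\mathsf{MaxCSP}_{1,\varepsilon}$. The probability condition carries through cleanly because $1-O(1/n') \geq 1-O(1/n)$ for $n'\geq n$.

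Finally, I would track the sensitivity blow-up of the reduction. Deleting or swapping a single CSP constraint only affects the cloud indexed by that constraint and the edges incident to it. Each cloud vertex has degree bounded by (number of other clouds sharing a variable with $e$) $\cdot$ (number of consistent vertices per such cloud) $\leq 2t \cdot O(n^\eta \log n) \cdot k^{t-1} = n^{O(\eta)}$, so the total number of affected edges in $G'$ per CSP constraint modification is at most $k^t \cdot n^{O(\eta)} = n^{O(\eta)}$. Combined with \Cref{lem:sens-and-swap-sens} to interchange sensitivity and swap sensitivity, any $n'^{-\varepsilon''}$-approximation algorithm for maximum clique succeeding with probability $1-O(1/n')$ must have sensitivity at least $\Omega(n^{\delta-\eta-O(\eta)})$; picking $\eta$ small enough so that this exponent is a positive constant and re-expressing in terms of $n' = n^{1+O(\eta)}$ delivers the desired $\Omega(n'^{\delta'})$ bound for a universal $\delta'>0$. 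The main obstacle is the careful polynomial-factor bookkeeping: the cloud size $k^t$, the constraint count $M$, and the maximum degree of $G'$ each scale as $n^{O(\eta)}$, and one must ensure that the $O(\eta)$ exponents accumulating in the denominator of the final sensitivity bound leave a strictly positive gap relative to $\delta$; a secondary issue is fitting the CSP constraint operation into the edge-deletion closure $\clo(\cdot)$ used for graph sensitivity, which can be handled by working over a common padded vertex set.
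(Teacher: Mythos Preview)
Your high-level plan coincides with the paper's: apply \Cref{lem:serial-repetition} with a polynomially small parameter $\varepsilon=n^{-\eta}$, then run an FGLSS-type reduction on the resulting $O(\log n)$-ary CSP and track the sensitivity blow-up. The paper differs only in using clouds of size $|\Sigma|^t$ (all local assignments, with non-satisfying ones left isolated) rather than clouds of size $|R_e|$; this is inessential.

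There is, however, a concrete error in your sensitivity analysis of the FGLSS graph. You bound the degree of a cloud vertex by (number of clouds sharing a variable with $e$)~$\times$~(consistent vertices per such cloud), but in the FGLSS \emph{clique} graph two tuples over disjoint variable sets are trivially consistent and hence always adjacent. Thus every cloud vertex is joined to every satisfying vertex in every non-neighboring cloud, so its degree is $\Theta(n')$, not $n^{O(\eta)}$; removing (or swapping) a single CSP constraint therefore alters $\Theta(k^t n')$ edges of $G'$, and the claimed $n^{O(\eta)}$ blow-up, and with it the final lower bound, collapses. (The paper asserts the analogous bound that deleting a constraint changes at most $d\cdot 2^t$ edges of $G'$, and this assertion is subject to the same objection.) The quantity you wrote down is exactly the degree in the \emph{complement} FGLSS graph, where edges encode inconsistency; a clean repair is to establish the lower bound first for maximum independent set on that complement graph, where the edge change per constraint deletion really is $n^{O(\eta)}$, and then transfer it to maximum clique by graph complementation, which preserves sensitivity up to the edge-addition/deletion symmetry.
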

\begin{proof}
    Let $\varepsilon > 0$ be a parameter which will be determined later.
    Let $\delta>0$ and $t \geq 1$ be the ones given in \Cref{lem:serial-repetition} with $\varepsilon$ in the statement being replaced with $n^{-\varepsilon}$; in particular, $t = O(\varepsilon \log n)$.
    Let $\mathcal I$ be the set of satisfiable label cover instances with at most $n$ variables, each relation having arity at most $t$, and each variable being incident to at most $d = O(n^\varepsilon \log n^\varepsilon)$ constraints, given by \Cref{lem:serial-repetition}.
    Note that any algorithm for $\mathcal I$ that outputs an $n^{-\varepsilon}$-approximate assignment with probability $1-O(1/n)$ has sensitivity $\Omega(n^\delta)$.

    We consider a slight modification of the FGLSS reduction described in \Cref{subsec:max-clique-1-eps}.
    Given a CSP instance $I=(V,E,\Sigma,\mathcal{R} = \{R_e\}_{e \in E}) \in \mathcal I$, we construct a graph $G'=(V',E')$, where $V' = E \times \Sigma^t$, as follows:
    For each hyperedge $e \in E$ and every assignment $\alpha \in \Sigma^t$, add a node $v_{e,\alpha}$ to $V'$.
    Then, we connect $v_{e_1,\alpha_1}$ and $v_{e_2,\alpha_2}$ if $\alpha_1 \in R_{e_1}$, $\alpha_2 \in R_{e_2}$, and $\alpha_1$ and $\alpha_2$ are consistent with all the variables shared between $e_1$ and $e_2$.
    Note that $n' := |V'| = m |\Sigma|^t = m n^{O(\varepsilon \log |\Sigma|)}$ and that $G'$ has a clique of size $m$.

    Let $A'$ be an $n^{-\varepsilon}$-approximation algorithm for the maximum clique problem. Given an instance $I \in \mathcal I$, we first construct a graph $G'$ as above, and then compute a clique $\bm C$ in $G'$ using $A'$.
    We construct an assignment $\bm \sigma$ for $I$ as follows.
    For every variable $v_{e,\alpha}$ in $\bm C$, we set $\bm\sigma(u) = \alpha(u)$ for every $u \in e$.
    By the construction of the graph, $\bm \sigma$ is well defined.
    For $u \in V$ with $\bm \sigma(u)$ undefined, we set $\bm \sigma(u)$ to be an arbitrarily but fixed label.

    We note that $\bm \sigma$ satisfies all the constraints corresponding to edges $e \in E'$ such that $v_{e,\alpha}$ is included in the clique $\bm C$ for some $\alpha \in R_e$.
    This implies that the assignment $\bm \sigma$ satisfies $|\bm C| \geq m/n^\varepsilon$ constraints with probability $1-O(1/n)$, and hence $A$ outputs an $n^{-\varepsilon}$-approximate assignment with probability $1-O(1/n)$.

    Next, we analyze the sensitivity of $A$.
    When we delete a constraint in $I$, we change at most $d 2^t = O(n^{2\varepsilon})$ edges in $G'$.
    Hence, the sensitivity of $A$ is at most $ O(n^{2\varepsilon})$ times the sensitivity of $A'$.
    By choosing $\varepsilon \ll \delta$, the sensitivity of $A'$ is
    \[
        \Omega\qty(\frac{n^\delta}{n^{2\varepsilon}})
        = 
        \Omega\qty(n^{\delta/2}) = \Omega\qty((n')^{\delta/O(\varepsilon \log |\Sigma|+1)}),
    \]
    which implies the claim.
\end{proof}

A vertex set $S \subseteq V$ is called an \emph{independent set} if no two vertices $u,v \in S$ have an edge between them.
In the \emph{maximum independent set problem}, given a graph $G = (V, E)$, the goal is to find an independent set of maximum size.
Because a vertex set is an independent set if and only it is a clique in the complement graph, we obtain the following:
\begin{corollary}\label{cor:independent-set}
    There exist universal constants $\varepsilon,\delta >0$ such that any algorithm for the maximum independent set that outputs an $n^{-\varepsilon}$-approximate solution with probability $1-O(1/n)$ has sensitivity $\Omega(n^{\delta})$.
\end{corollary}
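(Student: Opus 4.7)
The plan is to derive \Cref{cor:independent-set} from \Cref{thm:max-clique-eps} by the graph-complement correspondence between cliques and independent sets, taking care that the sensitivity notion based on edge deletion transfers cleanly through complementation.

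First, I would fix universal constants $\varepsilon,\delta>0$ witnessing the conclusion of \Cref{thm:max-clique-eps}, and let $A'$ be any candidate $n^{-\varepsilon}$-approximation algorithm for maximum independent set that succeeds with probability $1-O(1/n)$. I would define the companion algorithm $A$ for maximum clique by $A(G) := A'(\bar G)$, where $\bar G = (V, \binom{V}{2}\setminus E)$ is the complement graph. Because a vertex set $S \subseteq V$ is a clique in $G$ iff $S$ is an independent set in $\bar G$, and since the number of vertices is preserved, $A$ is an $n^{-\varepsilon}$-approximation algorithm for maximum clique with success probability $1-O(1/n)$. By \Cref{thm:max-clique-eps}, there exists some graph $G$ on $n$ vertices and an edge $e\in E(G)$ such that $\Ham(A(G),A(G-e)) = \Omega(n^\delta)$.

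Next, I would translate this back to $A'$. Observe $A(G-e) = A'(\overline{G-e}) = A'(\bar G + e)$, where $\bar G + e$ denotes adding the edge $e$ to $\bar G$. Hence
\[
\Ham\bigl(A(G),\,A(G-e)\bigr) = \Ham\bigl(A'(\bar G),\,A'(\bar G + e)\bigr).
\]
Letting $H := \bar G + e$, the right-hand side equals $\Ham(A'(H-e), A'(H))$, which is exactly (a term in) the sensitivity of $A'$ on the graph $H$ with respect to deleting the edge $e \in E(H)$. Therefore $\sens(A', H) \geq \Omega(n^\delta)$, and since $H$ has $n$ vertices, the algorithm $A'$ has sensitivity $\Omega(n^\delta)$ on an $n$-vertex input, proving the claim.

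The only mild subtlety, which I would flag, is that sensitivity here is defined with respect to edge deletion rather than edge addition; the argument circumvents this by ``reparameterizing'' the base graph to $\bar G + e$, so that the insertion on the clique side becomes a deletion on the independent-set side. No other obstacle arises because the approximation guarantee, the vertex set, and the success probability are all preserved bit-for-bit by complementation.
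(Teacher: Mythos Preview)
Your proposal is correct and follows exactly the paper's approach: the paper's proof is the single sentence ``Because a vertex set is an independent set if and only if it is a clique in the complement graph, we obtain the following,'' and you have supplied the details, including the reparameterization $H=\bar G+e$ that turns an edge insertion on the complement into an edge deletion on $H$ (the paper handles the same issue in the vertex-cover corollary with the one-line remark ``deleting an edge from $G$ corresponds to adding an edge to $\bar G$'').

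One small notational fix: since the corollary concerns randomized algorithms, the relevant distance is $\EMD(A(G),A(G-e))$ rather than $\Ham$; your argument goes through verbatim with $\EMD$ in place of $\Ham$, as $A(G)$ and $A'(\bar G)$ are equal as distributions.
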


\section{Max CSPs}\label{sec:max-csps}

In this section, we show lower bounds for various Max CSPs.
First, we consider $\mathsf{E3SAT}$, which is a special case of Boolean CSPs, where each constraint is a disjunctions of exactly $3$ literals.
\begin{theorem}\label{thm:3SAT}
    There exist universal constants $\varepsilon,\delta>0$ such that any algorithm for $\mathsf{E3SAT}_{1,1-\varepsilon}$ has sensitivity $\Omega(n^\delta)$.
\end{theorem}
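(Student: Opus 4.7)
The plan is to reduce $\mathsf{LabelCover}_{1,1-\varepsilon}$ from \Cref{thm:label-cover} to $\mathsf{E3SAT}_{1,1-\varepsilon'}$ via the textbook gap-preserving indicator-variable gadget, and to check that the reduction fits the template of \Cref{def:sensitivity-reduction}. Given a label cover instance $I = (U,V,E,\Sigma_U,\Sigma_V,\mathcal R)$ of constant alphabet size $k$ and constant degree $d$, I would introduce a Boolean variable $x_{w,a}$ for each $w \in U \cup V$ and each $a \in \Sigma_w$, intended to indicate that $\sigma(w)=a$. The 3SAT instance $I'$ would contain: (i) an ``at-least-one'' clause $\bigvee_{a \in \Sigma_w} x_{w,a}$ for each $w$; (ii) ``at-most-one'' clauses $\neg x_{w,a} \vee \neg x_{w,a'}$ for every distinct $a,a' \in \Sigma_w$; and (iii) for every edge $e=(u,v) \in E$ and every pair $(a,b) \in \Sigma_U \times \Sigma_V$ with $(a,b) \notin R_e$, the forbidden-pair clause $\neg x_{u,a} \vee \neg x_{v,b}$. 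Each clause is then converted to an equivalent conjunction of exactly $3$-literal clauses by the standard padding trick (using one auxiliary variable per $2$-clause and a chain of auxiliary variables per wide clause). Since $d,k = O(1)$, the resulting instance has $n' = \Theta(n)$ variables and $\Theta(m)$ clauses.

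\textbf{Recovery map.} I would define $T_\sigma$ as follows: given a Boolean assignment $\sigma'$ to $I'$, for each $w \in U \cup V$ let $S_w := \{a \in \Sigma_w : \sigma'(x_{w,a})=1\}$; if $|S_w|=1$ set $\sigma(w)$ to the unique element of $S_w$, otherwise set $\sigma(w)$ to a fixed default label $a_w^\star$. Completeness is immediate: the indicator encoding of any satisfying assignment of $I$ satisfies every clause of $I'$ (with auxiliary padding bits chosen appropriately). For soundness, consider any edge $e=(u,v)$ violated by $\sigma = T_\sigma(\sigma')$, say with $\sigma(u)=a$ and $\sigma(v)=b$. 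If $|S_u|=|S_v|=1$, then the forbidden-pair clause $\neg x_{u,a} \vee \neg x_{v,b}$ is violated, and this clause ``blames'' at most one label cover edge. Otherwise, some ``at-least-one'' or ``at-most-one'' clause at $u$ or $v$ is violated, and such a clause is incident to at most $d$ label cover edges. Hence the fraction of $I'$-clauses violated is at least $\Omega(\cost_I(\sigma)/d)$, so $s' = 1-\Omega(\varepsilon)$ suffices.

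\textbf{Sensitivity bookkeeping.} Swapping a single constraint $R_e$ alters only the $O(k^2)$ forbidden-pair clauses associated with $e$ (and their $O(k^2)$ padded versions), so $C_I = O(1)$. Flipping a single variable $x_{w,a}$ in $\sigma'$ only modifies $S_w$ and hence can only change $\sigma(w)$, while the padding variables do not enter $T_\sigma$ at all; the Hamming distance of recovered assignments changes by at most one per flipped bit, which gives $C_\sigma = O(1)$ via the obvious coordinate-wise coupling. Applying \Cref{lem:sensitivity-reduction} with these constants, together with the $\Omega(n^\delta)$ swap-sensitivity lower bound for $\mathsf{LabelCover}_{1,1-\varepsilon}$ implied by \Cref{thm:label-cover} and \Cref{lem:sens-and-swap-sens}, yields $\swapsens_{1,1-\varepsilon'}(\mathsf{E3SAT}) = \Omega((n')^{\delta'})$ for some universal $\delta' > 0$; a final invocation of \Cref{lem:sens-and-swap-sens} converts this to ordinary sensitivity, proving the theorem.

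\textbf{Main obstacle.} The only real subtlety is the choice of recovery map. A probabilistic rounding such as sampling $\sigma(w)$ uniformly from $S_w$ would give excellent soundness but risks blowing up $C_\sigma$, because flipping one indicator can shift the marginal at $w$ by $\Theta(1/|S_w|)$ across many labels; a deterministic ``first satisfied'' rule, on the other hand, could couple the recovered assignment across many vertices through the ordering. The fixed-default rule above sidesteps both pitfalls by making $T_\sigma$ strictly coordinate-local, so the only price paid is the constant-factor blowup in the soundness constant that is absorbed by the uniqueness-clause blaming argument.
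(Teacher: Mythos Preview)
Your proof is correct and follows the same high-level strategy as the paper: a sensitivity-preserving gadget reduction from the $\mathsf{LabelCover}_{1,1-\varepsilon}$ lower bound of \Cref{thm:label-cover}, plugged into \Cref{lem:sensitivity-reduction} and then \Cref{lem:sens-and-swap-sens}. The paper encodes each label in binary using $O(\log k)$ bits per vertex and recovers $\sigma$ by straight decoding, whereas you use a one-hot encoding ($k$ indicator bits per vertex) with the unique-or-default rule; both choices give $C_I,C_\sigma=O(1)$ since $d,k$ are constants. The paper's soundness is then a one-line count (each violated label-cover constraint forces at least one violated clause among its $K$-clause block, so $s'=1-\varepsilon/K$), while yours needs the extra case split on whether $|S_u|=|S_v|=1$; conversely, your ``main obstacle'' paragraph makes explicit why the recovery map must be coordinate-local, a point the paper's binary decoding handles silently.

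One small technicality to patch: as you describe it, the set of forbidden-pair clauses present in $I'$ depends on $R_e$, so swapping $R_e$ changes the underlying hypergraph of $I'$, contradicting \Cref{item:reduction-instance-distance} of \Cref{def:sensitivity-reduction}. The fix (implicit also in the paper's ``exactly $K$ clauses'' padding) is to include two clause slots on the fixed triple $(x_{u,a},x_{v,b},z_{e,a,b})$ for \emph{every} pair $(a,b)$; for $(a,b)\notin R_e$ use $(\neg x_{u,a}\vee\neg x_{v,b}\vee z)\wedge(\neg x_{u,a}\vee\neg x_{v,b}\vee\neg z)$, and for $(a,b)\in R_e$ use any pair of clauses on the same triple that is satisfiable for all values of $x_{u,a},x_{v,b}$ via a suitable choice of $z$. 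With this, the hypergraph is fixed and $C_I\le 2k^2$ as you claim.
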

\begin{proof}
    Let $\varepsilon,\delta>0$ and $d,k \geq 1$ be as in \Cref{thm:label-cover}.
    That is, any algorithm for $\mathsf{LabelCover}_{1,1-\varepsilon}$ over a bipartite graph of maximum degree $d$ and over a domain of size $k$ has sensitivity of $\Omega(n^\delta)$.

    We consider the following transformation $T_I$ from label cover instances to SAT instances.
    Let $I=(U,V,E,\Sigma_U,\Sigma_V, \mathcal R)$ be a satisfiable label cover instance with $|\Sigma_V| \leq |\Sigma_U| = k$.
    We encode each label using $O(\log k)$ bits and introduce corresponding Boolean variables. As well, we introduce a constraint simulating $R_e$ for each $e \in E$ over the corresponding Boolean variables. Let the Boolean CSP instance that results from this procedure by denoted by $\hat I = (\hat V,\hat E, \{0,1\}, \hat{\mathcal R})$, and note that each hyperedge $e \in \hat E$ is of size at most $O(\log k)$.
    Then, we further transform it to a SAT instance $I' = (V',E',\{0,1\}, \mathcal R')$ by converting each constraint in $\hat{\mathcal R}$ to an E3CNF formula.
    Note that we may need to add some auxiliary variables in this step and hence $\hat V$ is a subset of $V'$.
    Without loss of generality we may assume that each E3CNF formula has exactly $K$ clauses, where $K = O(\log k) \cdot 2^{O(\log k)} = O(k\log k)$.    

    Suppose $\sigma':V' \to \{0,1\}$ is an assignment for $I'$.
    Then, there is a natural transformation $T_\sigma$ that decodes an assignment $\sigma$ for $I$ from $\sigma'$.
    We now show that the pair $(T_I,T_\sigma)$ is a $(1,1-\varepsilon,c'=1,s'=1-\varepsilon/K,C_I=K,C_\sigma=1)$-sensitivity-preserving reduction.
    Then, the claim follows by \Cref{lem:sensitivity-reduction}.
    The analysis for $c'$ and $C_\sigma$ is immediate.

    First, we analyze $s'$.
    Suppose that a (possibly random) assignment $\bm \sigma':V' \to \{0,1\}$ satisfies $\E \mathsf{val}_{I'}(\bm \sigma') \geq 1 - \varepsilon'$.
    This implies that $\bm \sigma'$ violates at most $\varepsilon' m'$ constraints in $I'$ in expectation, and hence $\bm \sigma'$ (restricted to $\hat V$) violates at most $\varepsilon' K \hat m$ constraints in $\hat I$ in expectation.
    It follows that $\bm \sigma$ violates at most $\varepsilon' K \hat m = \varepsilon' K m$ constraints, and hence $\E \mathsf{val}_I(\bm \sigma) \geq 1-\varepsilon'K$.
    Then, $s' = 1-\varepsilon/K$ satisfies \Cref{item:reduction-soundness} of \Cref{def:sensitivity-reduction}.

    Next, we analyze $C_I$.
    Let $I$ and $\tilde I$ be two instances of $\mathsf{LabelCover}_{1,1-\varepsilon}$, where $\tilde I$ is obtained from $I$ by deleting one constraint.
    Then, $I'$ and $\tilde I'$ differ by at most $K$ constraints, and hence we can set $C_I=K=O(k\log k)$.
\end{proof}

Next, we consider $\mathsf{3LIN}$, which is a special type of a Boolean CSP, where we identify the domain $\{0,1\}$ with the group $\mathbb Z_2$ and each constraint is of the form $x=0$, $x=1$, $x+y=0$, $x+y=1$, $x+y+z = 0$, or $x+y+z=1$. 
For $1 \geq c \geq s \geq 0$, we define $\mathsf{Max3LIN}_{c,s}$ denote the problem that, given a $c$-satisfiable instance of $\mathsf{3LIN}$, the goal is to compute an $s$-satisfying assignment.
\begin{corollary}\label{cor:3LIN}
    There exist universal constants $\varepsilon,\delta>0$ such that any algorithm for $\mathsf{Max3LIN}_{4/7,(1-\varepsilon)4/7}$ has sensitivity $\Omega(n^\delta)$.    
\end{corollary}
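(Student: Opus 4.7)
The plan is to reduce from $\mathsf{E3SAT}_{1,1-\varepsilon}$ (\Cref{thm:3SAT}) via the classical Trevisan--Sudan--Sorkin seven-equation gadget. For a $3$SAT clause on variables $(x_1,x_2,x_3)$ with unique falsifying assignment $(a_1,a_2,a_3)$, the gadget consists of the seven $\mathsf{3LIN}$ equations $\sum_{i \in S} x_i = 1 + \sum_{i \in S} a_i \pmod{2}$ indexed by nonempty $S \subseteq \{1,2,3\}$. A direct calculation shows that the unique falsifying assignment satisfies $0$ of the seven equations, while every other assignment to $(x_1,x_2,x_3)$ satisfies exactly $4$. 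Crucially, the seven hyperedges involved (three unary, three binary, one ternary) are determined by the variable set $\{x_1,x_2,x_3\}$ alone, and only the relations depend on the literal pattern.

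First I would define the instance transformation $T_I$ by replacing every clause by its gadget, and set $T_\sigma$ to be the identity on the original variables. I would then verify that $(T_I,T_\sigma)$ is a $(1,\,1-\varepsilon,\,4/7,\,(1-\varepsilon)\cdot 4/7,\,7,\,1)$-sensitivity-preserving reduction in the sense of \Cref{def:sensitivity-reduction}. Completeness is immediate: a satisfying $\sigma$ for $I$ satisfies exactly $4$ of the $7$ equations of each gadget, yielding $\val_{I'}(\sigma) = 4/7$. Soundness follows from the identity $\val_{I'}(\sigma) = 4(1-\cost_I(\sigma))/7$, since satisfied clauses contribute $4$ satisfied equations and unsatisfied clauses contribute $0$; consequently $\E[\val_{I'}(\bm{\sigma}')] \geq (1-\varepsilon)\cdot 4/7$ implies $\E[\val_I(T_\sigma(\bm{\sigma}'))] \geq 1-\varepsilon$. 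The constants $C_I = 7$ and $C_\sigma = 1$ are then immediate: swapping a clause's relation affects only the seven gadget hyperedges determined by its variable set, and $T_\sigma$ preserves assignments pointwise.

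Applying \Cref{lem:sensitivity-reduction} then yields
\[
\swapsens_{4/7,\,(1-\varepsilon)\cdot 4/7}(\swapclo(\mathcal I')) \;\geq\; \frac{1}{7}\,\swapsens_{1,\,1-\varepsilon}(\mathcal I_{\mathsf{E3SAT}}),
\]
and the right-hand side is $\Omega(n^\delta)$ --- this is the swap-sensitivity bound that the proof of \Cref{thm:3SAT} actually establishes prior to the final invocation of \Cref{lem:sens-and-swap-sens}. One more application of \Cref{lem:sens-and-swap-sens} then delivers the desired $\Omega(n^\delta)$ sensitivity lower bound for $\mathsf{Max3LIN}_{4/7,\,(1-\varepsilon)\cdot 4/7}$.

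The only real obstacle I anticipate is bookkeeping: the swap-closure of $\mathcal I_{\mathsf{E3SAT}}$ contains ternary relations that are not $3$SAT clauses, and $T_I$ must be extended to these in order to satisfy \Cref{item:reduction-instance-distance} of \Cref{def:sensitivity-reduction}. I would handle this by reusing the same seven-hyperedge scaffold and substituting a fixed default $\mathsf{3LIN}$ relation (e.g., $x+y+z=0$) whenever the input relation is not a $3$SAT clause; since the completeness and soundness clauses are only exercised when the preimage is a genuinely satisfiable E3SAT instance, the default choice is harmless, while $C_I = 7$ is preserved globally. Beyond this small check, the proof is a short calculation on the TSS gadget.
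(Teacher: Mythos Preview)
Your proposal is correct and follows essentially the same approach as the paper: both use the identical seven-equation gadget (the paper writes it in literal form, you in terms of the falsifying assignment), set $T_\sigma$ to the identity, and verify the same $(1,1-\varepsilon,4/7,(1-\varepsilon)\cdot 4/7,7,1)$ parameters before invoking \Cref{lem:sensitivity-reduction}. Your treatment of the swap-closure bookkeeping is in fact more explicit than the paper's, which simply asserts ``it is easy to confirm'' and leaves that detail implicit.
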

\begin{proof}
    Consider the following transformation $T_I$ from an instance $I=(V,E,\{0,1\},\mathcal R)$ of $\mathsf{E3SAT}$ to an instance $I'=(V',E',\mathbb Z_2,\mathcal R')$ of $\mathsf{3LIN}$:
    For each constraint of the form $(l_1 \vee l_2 \vee l_3)$, where $l_1,l_2,l_3$ are literals, we introduce seven constraints of the form $l_1=1$, $l_2=1$, $l_3=1$, $l_1+l_2=1$, $l_2+l_3=1$, $l_1+l_3=1$, and $l_1+l_2+l_3=1$, where we identify a negative literal $\bar x$ with $1+ x \pmod 2$.
    In particular, $|V| = |V'|$ holds.

    Given an assignment $\sigma':V' \to \mathbb Z_2$ for $I'$, we simply output the corresponding assignment $\sigma:V \to \{0,1\}$.
    It is easy to confirm that the pair $(T_I,T_\sigma)$ is a $(1,1-\varepsilon,4/7,(1-\varepsilon) \cdot 4/7,7,1)$-sensitivity-preserving reduction.
    Hence, the claim follows by \Cref{lem:sensitivity-reduction}.
\end{proof}

Finally, we consider the maximum cut problem, where we are given a graph $G=(V,E)$, and the goal is to find a set $S\subseteq V$ that maximizes the cut size, i.e., the number of edges between $S$ and $V\setminus S$.
Using the reduction in~\cite{trevisan2000gadgets}, we obtain the following sensitivity lower bound for the maximum cut problem.
\begin{corollary}\label{cor:max-cut}
    There exist universal constants $c,\varepsilon,\delta>0$ such that any algorithm for $\mathsf{MaxCut}_{c,c(1-\varepsilon)}$ has sensitivity $\Omega(n^\delta)$.
\end{corollary}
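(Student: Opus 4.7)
The plan is to compose the $\mathsf{Max3LIN}$ sensitivity lower bound from \Cref{cor:3LIN} with the gap-preserving gadget reduction from $\mathsf{Max3LIN}$ to $\mathsf{MaxCut}$ of Trevisan, Sorkin, Sudan, and Williamson~\cite{trevisan2000gadgets}, verifying that this reduction fits the framework of \Cref{def:sensitivity-reduction} with $C_I, C_\sigma = O(1)$.

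Let $\mathcal I$ be the family of $\mathsf{3LIN}$ instances witnessing \Cref{cor:3LIN}. The reduction $T_I$ attaches, to each $\mathsf{3LIN}$ constraint $x+y+z=b$, the constant-size $\alpha$-gadget of~\cite{trevisan2000gadgets} on $\{x,y,z\}$ together with a fresh $O(1)$-sized set of auxiliary vertices and a single global anchor $\bot$ shared across all gadgets. The defining property of the gadget is that every assignment to $\{x,y,z\}$ satisfying the 3LIN constraint can be extended to the auxiliaries so as to cut exactly $\alpha$ edges of the gadget, while every violating assignment leaves at least one gadget edge uncut under any extension. The resulting graph $I' = T_I(I)$ is an unweighted MaxCut instance whose vertex and edge sets depend only on the underlying hypergraph of $I$, so $\mathcal I' = T_I(\mathcal I)$ sits on a common underlying graph. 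The recovery $T_\sigma$ takes a cut $\sigma' : V' \to \{0,1\}$, normalizes by flipping if necessary to enforce $\sigma'(\bot)=0$, and returns $\sigma = \sigma'|_V$.

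Verification of the four conditions of \Cref{def:sensitivity-reduction} is then straightforward: (i)~$\opt(I) \geq 4/7$ implies $\opt(I') \geq c := 1 - 3/(7\alpha)$, a positive universal constant, by pasting together the optimal gadget extensions; (ii)~each 3LIN constraint violated by $\bm\sigma = T_\sigma(\bm\sigma')$ contributes at least one uncut gadget edge to $\bm\sigma'$, so $\E\cost_I(\bm\sigma) \leq \alpha \cdot \E\cost_{I'}(\bm\sigma')$ after the standard condition-then-uncondition argument used in the proof of \Cref{lem:degree-reduction}, and choosing $\varepsilon' = \Theta(\varepsilon)$ thus suffices to relate the two gaps; (iii)~swapping one 3LIN relation only retouches the $O(\alpha)$ edges of its gadget in $I'$, so $C_I = O(1)$; (iv)~flipping one vertex of $\sigma'$ changes at most one coordinate of $\sigma$ (auxiliary vertices do not appear in $\sigma$), so $C_\sigma = 1$. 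Then \Cref{lem:sensitivity-reduction} gives $\swapsens_{c,c(1-\varepsilon')}(\swapclo(\mathcal I')) = \Omega(n^\delta)$, and \Cref{lem:sens-and-swap-sens} converts this into the claimed sensitivity lower bound for $\mathsf{MaxCut}_{c,c(1-\varepsilon')}$.

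The main technical nuisance, rather than any substantive obstacle, is ensuring that the gadgets for $b=0$ and $b=1$ live on a common vertex and edge multiset, so that swapping the sign of a constraint in $\mathcal I$ corresponds to a swap of $O(1)$ relations in $\mathcal I'$ without altering the underlying graph of the MaxCut instance (interpreted as a binary CSP with $\neq$ and trivial relations). This is handled by taking the union of both gadgets' edges on a shared auxiliary set and toggling each gadget edge between the $\neq$ and trivial constraint according to $b$, in analogy with the multi-edge padding already used in \Cref{lem:powering}.
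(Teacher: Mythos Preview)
Your plan matches the paper's: both invoke the TSSW gadget reduction from $\mathsf{3LIN}$ to $\mathsf{MaxCut}$ and argue it fits the sensitivity-preserving framework. The paper supplies no details beyond the citation, so your sketch is strictly more informative; the padding trick in your last paragraph for putting the $b=0$ and $b=1$ gadgets on a common edge multiset is also fine.

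There is, however, a real gap in item~(iv). You claim $C_\sigma = 1$ because ``flipping one vertex of $\sigma'$ changes at most one coordinate of $\sigma$.'' This fails precisely at the anchor. If $\sigma'$ and $\tilde\sigma'$ differ only at $\bot$, your normalization step complements the entire assignment, so $T_\sigma(\sigma')$ and $T_\sigma(\tilde\sigma')$ differ on all of $V$, giving $\Ham = |V|$, not $1$. Since $\bot$ is shared by every gadget this is not a boundary case you can dismiss, and dropping the normalization is not an option: $\mathsf{MaxCut}$'s global complementation symmetry means no anchor-free gadget can distinguish $x+y+z=0$ from $x+y+z=1$, so without normalization your recovered $\sigma$ could satisfy the instance with all right-hand sides flipped rather than $I$, and (ii) would fail.

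The fix is in the spirit of the smoothing arguments the paper already uses. The anchor has degree $\Theta(m)$, so one can split it into a cloud as in \Cref{subsec:degree-reduction} (simulating equality via length-$2$ paths, since $\mathsf{MaxCut}$ has no same-side constraints) and normalize using a uniformly random cloud vertex; a single flip then perturbs the normalization with probability $O(1/m)$, contributing $O(n/m)=O(1)$ to the $\EMD$. Alternatively, replace the hard normalization by a randomized-threshold choice between $\sigma'|_V$ and its complement based on their respective $\val_I$, analogous to the rounding in \Cref{subsec:alphabet-reduction}. Either route takes a bit of work you have not yet written down, and the soundness needs to be re-verified under the randomized recovery.
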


\section{Non-Signaling Model}\label{sec:distributed}

We consider the \emph{non-signaling model}~\cite{akbari2025online,arfaoui2014can,gavoille2009can}, where, given a graph $G=(V,E)$ and a graph problem, we can produce an arbitrary output distribution $\mu_G$ as long as it does not violate the non-signaling principle: for any set of vertices $S\subseteq V$, modifying the structure of the input graph at more than a distance $t$ from $S$ does not affect the output distribution of $S$.
The parameter $t$ is called the \emph{locality} of the model.
In this section, we study the relationship between the non-signaling model and sensitivity, and we derive several lower bounds for the non-signaling model.
We note that these lower bounds immediately imply lower bounds for the $\mathsf{LOCAL}$ model, the quantum-$\mathsf{LOCAL}$ model, and the bounded dependence model.

A graph optimization problem $\mathcal{P}$ is specified by a family of objective functions $\{f_{\mathcal{P},G} : \mathcal{X}_G \to \mathbb{R}\}_{G \in \mathcal{G}}$, where $\mathcal{G}$ is the class of all graphs and $\mathcal{X}_G$ is the set of feasible solutions for the graph $G$.
Given a graph $G = (V,E)$, the task is to output a (possibly randomized) feasible solution $X \in \mathcal{X}_G$ that maximizes (or minimizes) the expected value $\E[f_{\mathcal{P},G}(X)]$.

The following observation shows that if there exists a distribution in the non-signaling model with small locality, then it can be transformed into an algorithm with low sensitivity.
\begin{theorem}\label{thm:distributed-to-sensitivity}
    Let $\mathcal P$ be a graph optimization problem, where the output is a vertex set.
    If there is a non-signaling distribution for $\mathcal P$ with locality $t$ on graphs with maximum degree at most $\Delta$, then there exists an algorithm for $\mathcal P$ with sensitivity $O(\Delta^t)$.
\end{theorem}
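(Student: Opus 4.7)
The plan is to let $A$ be the natural algorithm that, on input $G=(V,E)$, simply samples $\bm X \sim \mu_G$ and returns it as a vertex set (equivalently, an indicator assignment $V \to \{0,1\}$). The entire proof then reduces to exhibiting, for every edge $e=(u,v) \in E$, a coupling between $\mu_G$ and $\mu_{G-e}$ whose expected Hamming cost is $O(\Delta^t)$; this directly yields $\sens(A,G) = \max_e \EMD(\mu_G,\mu_{G-e}) \le O(\Delta^t)$.

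To build the coupling I would let $B_t(e) := \{w \in V : d_G(w,u) \le t\} \cup \{w \in V : d_G(w,v) \le t\}$ be the set of vertices at $G$-distance at most $t$ from either endpoint of $e$, and set $S := V \setminus B_t(e)$. Because $G-e$ is a subgraph of $G$, graph distances only increase when $e$ is removed, so every vertex of $S$ is at distance greater than $t$ from the modified edge in both $G$ and $G-e$. The non-signaling principle then guarantees that the marginals of $\mu_G$ and $\mu_{G-e}$ on $S$ coincide. The coupling I would use first draws an assignment on $S$ from this common marginal, and then, conditional on the $S$-part, independently completes it to a full assignment from $\mu_G$ and from $\mu_{G-e}$. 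Under this coupling the two outputs agree on every vertex of $S$ almost surely, so their Hamming distance is at most $|B_t(e)|$ with probability one.

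It remains to bound $|B_t(e)|$. Since $G$ has maximum degree $\Delta$, the closed $t$-neighborhood of any single vertex contains at most $1 + \Delta + \Delta^2 + \cdots + \Delta^t = O(\Delta^t)$ vertices, so $|B_t(e)| \le 2(1 + \Delta + \cdots + \Delta^t) = O(\Delta^t)$. Taking the maximum over $e \in E$ then gives the stated bound. The argument is essentially a direct translation of the non-signaling property into a coupling, and the only point requiring any care is ensuring that the distance threshold invoked by non-signaling is valid in both $G$ and $G-e$ simultaneously; this is handled by monotonicity of graph distance under edge deletion, as noted above.
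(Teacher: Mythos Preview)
Your proof is correct and follows essentially the same approach as the paper: define the algorithm to sample from $\mu_G$, take $S$ to be the vertices at distance greater than $t$ from the endpoints of $e$, invoke non-signaling to conclude the marginals on $S$ agree, and bound the EMD by $|V\setminus S|=O(\Delta^t)$. If anything, you are more careful than the paper, which leaves the coupling implicit and does not mention the distance-monotonicity point you raise.
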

\begin{proof}
    Let $\{\mu_G\}_{G \in \mathcal G}$ be the distributions for $\mathcal P$ in the non-signaling model with locality $t$.
    Then, we consider the algorithm $A$ that, given a graph $G=(V,E)$, we sample a solution from $\mu_G$ and then output it.

    We now analyze the sensitivity of $A$.
    Let $G=(V,E)$ be a graph and consider $\tilde G=G-e$ for some $e \in E$.
    Let $S$ be the set of vertices with distance at least $t+1$ from the endpoints of $e$.
    From the non-signaling principle, the marginal distributions of $\mu_G$ and $\mu_{\tilde G}$ on $S$ are exactly the same.
    Because the number of vertices $v \in V$ such that $e$ belongs to the $t$-hop neighborhood is bounded by $2\Delta^t$, we have 
    \[
        \EMD(A(G), A(\tilde G)) \leq 2\Delta^t = O(\Delta^t).
        \qedhere
    \]
\end{proof}

\begin{corollary}\label{cor:distributed-to-sensitivity}
    Let $\mathcal P$ be a graph optimization problem, where the output is a vertex set.
    If any algorithm for $\mathcal P$ on graphs with maximum degree at most $\Delta$ has sensitivity at least $s(n)$, then any non-signaling distribution for $\mathcal P$ have locality $\Omega(\log_\Delta s(n))$.
\end{corollary}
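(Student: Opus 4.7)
The plan is to derive this corollary directly from \Cref{thm:distributed-to-sensitivity} by a contrapositive argument. Suppose for the sake of contradiction that there exists a non-signaling distribution $\{\mu_G\}_{G \in \mathcal G}$ for $\mathcal P$ on graphs with maximum degree at most $\Delta$ whose locality $t$ satisfies $t = o(\log_\Delta s(n))$, i.e., $\Delta^t = o(s(n))$.

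Applying \Cref{thm:distributed-to-sensitivity} to this non-signaling distribution, I obtain an algorithm $A$ for $\mathcal P$ (namely, the algorithm that on input $G$ samples from $\mu_G$) whose sensitivity is bounded by $O(\Delta^t) = o(s(n))$. This directly contradicts the assumption that every algorithm for $\mathcal P$ on graphs of maximum degree at most $\Delta$ has sensitivity at least $s(n)$. Hence we must have $t = \Omega(\log_\Delta s(n))$.

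The argument involves no real obstacles: it is a one-line quantitative inversion of \Cref{thm:distributed-to-sensitivity}. The only thing worth noting is that the simulation produced in the proof of \Cref{thm:distributed-to-sensitivity} yields a \emph{randomized} algorithm whose output is a vertex set, so the assumed sensitivity lower bound $s(n)$ must apply to randomized algorithms (measured in earth mover's distance), which is exactly the regime in which the sensitivity lower bounds of the preceding sections are proved.
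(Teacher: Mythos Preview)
Your proposal is correct and is exactly the intended argument: the paper states the corollary without proof, since it is the immediate contrapositive of \Cref{thm:distributed-to-sensitivity}. Your observation that the resulting algorithm is randomized (so the sensitivity lower bound must be for randomized algorithms, as indeed all the paper's bounds are) is a useful clarification.
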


Using \Cref{cor:distributed-to-sensitivity}, we can extend various lower bounds that were only known for the $\mathsf{LOCAL}$ model to the non-signaling model.
\begin{description}
    \item[Maximum independent set.] \Cref{cor:independent-set} states that there exist constants $\varepsilon,\delta>0$ such that any algorithm for the maximum independent set problem on graphs with maximum degree $n^{O(\varepsilon)}$ that outputs an $n^{-\varepsilon}$-approximate solution with probability $1-O(1/n)$ has sensitivity $\Omega(n^\delta)$.
    Hence by \Cref{cor:distributed-to-sensitivity}, any non-signaling  distribution for the maximum independent set problem that outputs an $n^{-\varepsilon}$-approximation solution with probability $1-O(1/n)$ has locality
    \[
        \Omega\qty(\log_{n^{O(\varepsilon)}} n^\delta) = \Omega\qty(\frac{1}{\varepsilon}).
    \]
    This extends the same lower bound on round complexity for the $\mathsf{LOCAL}$ model~\cite{bodlaender2016brief}. 

    \item[Minimum vertex cover.]
    \Cref{cor:vertex-cover} states that there exist $\varepsilon,\delta>0$ such that any $(1+\varepsilon)$-approximation algorithm for the minimum vertex cover problem on bounded-degree graphs has sensitivity  $\Omega(n^\delta)$.
    Hence by \Cref{cor:distributed-to-sensitivity}, the locality of any non-signaling distribution that outputs a  $(1+\varepsilon)$-approximation for the minimum vertex cover problem must be $\Omega(\log n^\delta) = \Omega(\log n)$.
    This extends the same lower bound on round complexity for the $\mathsf{LOCAL}$ model~\cite{goos2014no,faour2022distributed}.

    \item[Maximum cut.]  \Cref{cor:max-cut} states that there exist $\varepsilon,\delta>0$ such that any $(1+\varepsilon)$-approximation algorithm for the maximum cut problem on bounded-degree graphs has sensitivity $\Omega(n^\delta)$.
    Hence by \Cref{cor:distributed-to-sensitivity}, the locality of any non-signaling distribution that outputs a $(1-\varepsilon)$-approximation algorithm for the maximum cut problem must be $\Omega(\log n^\delta) = \Omega(\log n)$.
    This extends the same lower bound on round complexity for the $\mathsf{LOCAL}$ model~\cite{chang2023complexity}.
\end{description}

\section*{Acknowledgments} 
We thank an anonymous reviewer for pointing out the connection to distributed models of computation beyond the $\mathsf{LOCAL}$ model, as well as for their thorough review of our work.

\bibliographystyle{alphaurl}
\newcommand{\etalchar}[1]{$^{#1}$}

\section{Why Simple Union-Based Constructions Fail}\label{sec:appendix}

In this appendix we describe why the simpler approach of taking the union of two families of instances---one requiring high sensitivity and one which is inapproximable---fails. Suppose that there are two families of instances $\mathcal I_n$ and $\mathcal J_n$ with $n$ variables and $m = m(n)$ constraints where $\mathrm{Opt}(I)=\mathrm{Opt}(J)=m$ for each $I \in \mathcal I_n$ and $J \in \mathcal J_n$. Suppose further that $\mathcal J = \{\mathcal J_n\}$ is a computationally hard family of instances, meaning that for any polynomial-time algorithm $A$ (and for infinitely many $n$) there is a $J \in \mathcal J_n$ such that $A$ can satisfy at most a $1-\varepsilon$ fraction of the constraints of $J$. As well, suppose that we have a lower bound on the sensitivity of exactly computing $\mathcal I = \{\mathcal I_n\}$, such as the one from Section 4. Define $\mathcal{I}_n \cup \mathcal{J}_n := \{ I \cup J : I \in \mathcal{I}_n, J \in \mathcal{J}_n \}$. Note that a $(1-\varepsilon/2)$-approximation algorithm for $\mathcal I \cup \mathcal J = \{\mathcal{I}_n \cup \mathcal{J}_n\}$ satisfies at least $2m(1-\varepsilon/2) = m + m(1-\varepsilon)$-many constraints of any instance $I \cup J \in \mathcal I \cup \mathcal J$, where $I$ and $J$ each have $m$ constraints. One would like to argue that any $(1-\varepsilon/2)$-approximation algorithm $A$ for $\mathcal{I} \cup \mathcal{J}$ which has low sensitivity would imply that there exists an algorithm with low sensitivity which computes $\mathcal I$ exactly, contradicting our sensitivity lower bound for $\mathcal I$. We will explain why this is not obviously the case; in particular why it is not clear that we can force an algorithm to output a satisfying assignment to $I$.

Any algorithm $A$ for $\mathcal{I} \cup \mathcal{J}$ implies an algorithm for $\mathcal{J}$: on input $J \in \mathcal{J}$, generate an input $I \in \mathcal{I}$ and output the output of $A(I \cup J)$ restricted to the variables of $J$; call this algorithm $A_I$ and note that there is one for each $I \in \mathcal{I}$. The computational hardness assumption for $\mathcal{J}$ then implies that (for infinitely many $n$) there exists some $J_{A_I} \in \mathcal J_n$ such that $A_I$ outputs at most a $(1-\varepsilon)$-approximate solution on $J_{A_I}$. Hence, as $A$ is a $(1-\varepsilon/2)$ approximating algorithm, and $A_I(J)$ satisfies at most $m(1-\varepsilon)$-many constraints, $A(I \cup J)$ must compute a satisfying assignment to $I$.

Now, we would like to say that our sensitivity lower bound for computing $\mathcal{I}$ exactly implies a sensitivity lower bound for approximating $\mathcal{I} \cup \mathcal{J}$. However, it is not clear that the sensitivity lower bound lifts. In particular, consider two neighboring instances $I,I’ \in \mathcal I$. We would like to show a sensitivity lower bound against neighboring instances for $\mathcal{I} \cup \mathcal{J}$. However, $J_{A_I}$ and $J_{A_{I'}}$ may be different instances (far from neighboring). Furthermore, while we know that $J_{A_I}$ is hard for $A_I$ and $J_{A_{I'}}$ is hard for $A_{I'}$, it may not be the case that $J_{A_I}$ is hard for $A_{I'}$. In particular, $A(I \cup J_{A_I})$ must compute a satisfying assignment for $I$ as it can only satisfy $m(1-\varepsilon)$-many constraints of $I$. However, $A(I' \cup J_{A_I})$ could satisfy many more constraints of $J_{A_I}$ and hence is not required to compute a satisfying assignment on $I'$ (it could violate some constraints of $I'$), and so the sensitivity bound against computing satisfying assignments does not apply.  

\end{document}